\providecommand{\tabularnewline}{\\}
  \theoremstyle{plain}
  \newtheorem{prop}{\protect\propositionname}
  \theoremstyle{plain}
  \newtheorem{cor}{\protect\corollaryname}
  \providecommand{\propositionname}{Proposition}
\providecommand{\corollaryname}{Corollary}
\begin{document}

\title{Robust Bayesian compressed sensing\\
over finite fields: asymptotic performance analysis}

\author{Wenjie Li, Francesca Bassi, and Michel Kieffer
\thanks{W. Li and M. Kieffer are with Laboratoire des Signaux et Syst\`emes,
CNRS--Supelec--Univ Paris-Sud, Gif-sur-Yvette. M.~Kieffer is partly supported by the Institut Universitaire de France.
F. Bassi is with ESME-Sudria, Ivry-sur-Seine.
This work was partly supported by European Network of Excellence project
NEWCOM\#.}}

\maketitle

\begin{abstract}
This paper addresses the topic of robust Bayesian compressed sensing
over finite fields. For stationary and ergodic sources, it provides asymptotic 
(with the size of the vector to estimate) necessary and sufficient conditions on 
the number of required measurements to achieve vanishing reconstruction error, in 
presence of sensing and communication noise. In all considered cases, the necessary and 
sufficient conditions asymptotically coincide. Conditions on the sparsity of the 
sensing matrix are established in presence of communication noise. Several previously 
published results are generalized and extended.
\end{abstract}

% Note that keywords are not normally used for peerreview papers.
\begin{IEEEkeywords}
Compressed sensing, finite fields, MAP estimation, asymptotic performance analysis, stationary and ergodic sources.
\end{IEEEkeywords}

\section{Introduction}

%In this paper we address the topic of robust Bayesian compressed sensing over finite fields. 
Compressed sensing refers to the compression of a vector $\boldsymbol{\theta}\in\mathbb{R}^{N}$,
obtained by acquiring linear measurements whose number $M$ can be
significantly smaller than the size of the vector itself. If $\boldsymbol{\theta}$
is $k$-sparse with respect to some known basis, its almost surely
exact reconstruction can be evaluated from the linear measurements
using basis pursuit, for $M$ as small as $\mathcal{O}(k\log(N))$
\cite{candes06,tao}. The same result holds true also for compressible
vector $\boldsymbol{\theta}$ \cite{compressedS}, with reconstruction
quality matching the one allowed by direct observation of the biggest
$k$ coefficients of $\boldsymbol{\theta}$ in the transform domain.
The major feature of compressed sensing is that the linear coefficients
do not need to be adaptive with respect to the signal to be acquired,
but can actually be random, provided that appropriate conditions on
the global measurement matrix are satisfied \cite{tao,candes05}.
Moreover, compressed sensing is robust to the presence of noise in
the measurements \cite{candes05,haupt06}.

Bayesian compressed sensing \cite{ji08} refers to the same problem,
considered in the statistical inference perspective. In particular,
the vector to be compressed is now understood
as a statistical source $\boldsymbol{\mathsf\Theta}$, whose \emph{a priori} distribution can induce
sparsity or correlation between the symbols. This allows to redefine
the reconstruction problem as an estimation problem, solvable using
standard Bayesian techniques, \emph{e.g.}, Maximum \emph{A Posteriori}
(MAP)
estimation. In practical implementations, estimation from the
linear measurements can be achieved exploiting statistical graphical
models \cite{montanari12}, \emph{e.g.}, using belief propagation
\cite{BP} as done in \cite{baron10} for deterministic measurement
matrices, and in \cite{bayati11} for random measurement matrices.

In this paper we address the topic of robust Bayesian compressed sensing
over finite fields. The motivating example for considering this setting
comes from the large and growing bulk of works devoted to data %\textcolor{red}
{dissemination
and collection} in wireless sensor networks. Wireless sensor networks
\cite{WSN} are composed by autonomous nodes, with sensing capability
of some physical phenomenon (\emph{e.g.} temperature, or pressure).
In order to ensure ease of deployment and robustness, the communication
between the nodes might need to be performed in absence of designated
access points and of a hierarchical structure. At the network layer,
dissemination of the measurements to all the nodes can be achieved
using an asynchronous protocol based on random linear network coding (RLNC) 
\cite{RLNC}. In the protocol, each node in the network broadcasts
a packet evaluated as the linear combination of the local measurement,
and of the packets received from neighboring nodes. The linear coefficients
are randomly chosen, and are sent in each packet header. Upon an appropriate
number of communication rounds, each node has collected enough linearly
independent combinations of the network measurements, and can perform
decoding, by solving a system of linear equations. Due to the physical
nature of the sensed phenomenon, and to the spatial distribution of
the nodes in the network, correlation between the measurements at
different nodes can be assumed, and exploited 
to perform decoding, as done in \cite{LL,EUSI,giannakis,bourtsoulatze12,nabaee12}.
Recasting the problem in the Bayesian compressed sensing framework,
the vector of the measurements at the nodes
is interpreted as the compressible source $\boldsymbol{\mathsf\Theta}$, the network coding matrix
as the sensing matrix, and the decoding at each node as the estimation
operation. 

Before transmission, all the measurements needs to be quantized.
Quantization can be performed after the network encoding operation, 
as done in \cite{nabaee12}, where reconstruction
on the real field is performed via $\ell_{1}$-norm minimization,
or it can be done prior to the network encoding operation. For the latter
choice, which is the target of this work, each quantization index
is represented by an element of a finite field, from which the network
coding coefficients (\emph{i.e.}, the sensing coefficients in the
compressed sensing framework) are chosen as well. This setting has been considered
in \cite{LL}, where exact MAP reconstruction is obtained solving
a mixed-integer quadratic program, and in \cite{EUSI,bourtsoulatze12,giannakis},
where approximate MAP estimation is obtained using variants of the
belief propagation algorithm.

The performance analysis of compressed sensing over finite fields
has been addressed in \cite{CS,giannakis}, and \cite{seong}. The work in \cite{seong}
does not consider Bayesian priors, and assumes a known sparsity level
of $\boldsymbol{\theta}$. Ideal decoding via $\ell_{0}$-norm
minimization is assumed, and necessary and sufficient conditions for exact recovery
are derived as functions of the size of the vector, its sparsity level,
the number of measurements, and the sparsity of the sensing matrix.
Numerical results show that the necessary and sufficient conditions
coincide, as the size of $\boldsymbol{\theta}$ asymptotically increases.
A Bayesian setting is considered in \cite{CS} and \cite{giannakis}.
In \cite{CS} a prior distribution
induces sparsity on the realization of $\boldsymbol{\mathsf\Theta}$,
whose elements are assumed statistically independent. Using the method
of types \cite{types}, the error exponent with respect to exact reconstruction
using $\ell_{0}$-norm minimization is derived in absence of noise
in the measurements, and the error exponent with respect to exact
reconstruction using minimum-empirical entropy decoding is derived for noisy
measurements. In \cite{giannakis} specific correlation patterns (pairwise,
cluster) between the elements of $\boldsymbol{\mathsf\Theta}$ are considered.
Error exponents under MAP decoding are derived, only in case of absence of noise on
the measurements.

The contribution of this work can be summarized as follows. We assume
a Bayesian setting and we consider MAP decoding. Inspired by the work
in \cite{seong}, we aim to derive necessary and sufficient conditions
for almost surely exact recovery of $\boldsymbol{\theta}$, as its
size asymptotically increases. We consider three classes of prior
distributions on the source vector: \emph{i}) the prior distribution
is sparsity inducing, and the elements are statistically independent;
\emph{ii}) the vector $\boldsymbol{\mathsf\Theta}$ is a Markov process;
\emph{iii}) the vector $\boldsymbol{\mathsf\Theta}$ is an ergodic process.
%(t\textcolor{red}
{To the best of our knowledge, no analysis has been previously performed
for the latter source model, which is quite general.}
%). 
We consider both sparse and dense
sensing matrices. We consider two kinds of noises: \emph{a}) the sensing
noise, affecting the measurements prior to network coding (\emph{i.e.,}
prior to random projection acquisition in the compressed sensing framework);
\emph{b}) the communication noise, affecting the network coded packets
(\emph{i.e.}, the random projections in the compressed sensing framework).
{To the best of our knowledge, no analysis has been previously performed
in presence of both kinds of noise.}
Considering source model \emph{i}), our results for the noiseless
setting are compatible with the ones presented in \cite{seong}; in addition,
we can formally prove the asymptotic convergence of necessary and
sufficient conditions,
%\textcolor{red}
{and extend the bounds on the sparsity
factor of the sensing matrix in presence of communication noise}.
The asymptotic analysis under MAP decoding, 
%\textcolor{red}{respectively}
both
for the noiseless case and in presence of communication noise \emph{b}),
are compatible with the results derived in \cite{CS}, 
%\textcolor{red}
{respectively}
under $\ell_{0}$-norm minimization decoding 
%\textcolor{red}
{and under
minimum-empirical entropy decoding}. Error exponents for MAP decoding
of correlated sources in the noiseless setting are compatible with
the ones presented in \cite{giannakis}, and are here extended to the case
of arbitrary statistical structure, and presence of noise contamination
both preceding and following the sensing operation. 

The rest of the paper is organized as follows. Section~\ref{sec:SystModel}
%AAA Utile pour un papier IT ?? describes our motivating example and
%ZZZ 
introduces the considered signal models 
%\textcolor{red}
{in the
context of data dissemination in a wireless sensor network}. In Section~\ref{sec:NecessaryCondition},
we derive the necessary conditions for asymptotic almost surely
exact recovery, both for the noiseless and noisy cases. Section~\ref{sec:SufficientConditionNC}
describes the sufficient conditions and the error exponents under
MAP decoding, for the noiseless case and in presence of communication
noise only. In Section~\ref{sec:SufficientConditionNS}, sensing
noise is also taken into account. Section~\ref{sec:Conclusions}
concludes the paper.

\section{System Model and Problem Setup}

\label{sec:SystModel}

This section introduces the system model as well as various hypotheses
on the sources and on the sensing and communication noises. In what
follows, sans-serif font denotes random quantities while serif font
denotes deterministic quantities. Matrices are in bold-face upper-case
letters. A length $n$ vector is in bold-face lower-case with a superscript
$n$. Calligraphic font denotes set, except $\mathcal{H}$, which
denotes the entropy rate. All logarithms are in base 2.

\subsection{The source model}

Consider a wireless sensor network consisting of a set $\mathcal{N}$
of $N=\left|\mathcal{N}\right|$ sensors. 
The target physical phenomenon (\emph{e.g.} the temperature) at the $n$-th sensor is represented by the random variable
$\mathsf\Theta_n$, taking values on a finite field $\mathbb F_Q$ of size
$Q$. Let $\boldsymbol \theta^N$ be a realization of the random vector $\boldsymbol{\mathsf\Theta}^N =
(\mathsf\Theta_1,\ldots,\mathsf\Theta_N)$, taking values in $\mathbb
F_Q^N$. The vector $\boldsymbol{\mathsf\Theta}^N$ represents the source in the
Bayesian compressed sensing framework. The probability
mass function (pmf) associated with $\boldsymbol{\mathsf\Theta}^{N}$ is denoted by $p\left(\boldsymbol{\theta}^{N}\right)$,
rather than
$p_{\boldsymbol{\mathsf{\Theta}}^{N}}\left(\boldsymbol{\theta}^{N}\right)$,
for the sake of simplicity. 
In general, the analytic form of $p\left(\boldsymbol{\theta}^{N}\right)$
depends on the characteristics of the observed phenomenon and of the
topology of the sensor network. Here we consider three different models, defined
as follows.
%
% A specific sensor is denoted
% $n\in\mathcal{N}$. Assume that some source vector $\boldsymbol{\mathsf{\Theta}}^{N}=\left(\mathsf{\Theta}_{1},\mathsf{\Theta}_{2},\ldots,\mathsf{\Theta}_{N}\right)^{\textrm{t}}$
% is measured by the sensor network. $\mathsf{\Theta}_{n}$, observed
% by the $n$-th sensor, takes its values in a finite field $\mathbb{F}_{Q}$
% of size $Q$. Let $\boldsymbol{\theta}^{N}\in\mathbb{F}_{Q}^{N}$
% be a realization of $\boldsymbol{\mathsf{\Theta}}^{N}$. The probability
% mass function (pmf) of $\boldsymbol{\theta}^{N}$ is denoted by $p\left(\boldsymbol{\theta}^{N}\right)$,
% rather than $p_{\boldsymbol{\mathsf{\Theta}}^{N}}\left(\boldsymbol{\theta}^{N}\right)$
% for the sake of simplicity. The expression of $p\left(\boldsymbol{\theta}^{N}\right)$
% depends on the characteristics of the observed phenomenon and of the
% topology of the sensor network. 
% 
% Three types of source models will be considered. 

\textit{SI: Sparse, Independent and identically distributed source}.
Each element of the source vector $\boldsymbol{\mathsf{\Theta}}^{N}$ 
% is sparse if a large fraction
% of its elements are zero. In many scenarios, source outcomes are correlated
% and not necessarily sparse. Nevertheless, if the source outcomes are
% represented in an appropriate basis, they can be sparsified. Assuming
% that each source outcome 
is \textit{\emph{independent and identically
distributed (iid)}} with pmf \textit{$p_{\mathsf{\Theta}}\left(\cdot\right)$
}and \textit{$p_{\mathsf{\Theta}}\left(0\right)>0.5$}, 
\begin{equation}
p\left(\boldsymbol{\theta}^{N}\right)=\prod_{n=1}^{N}p_{\mathsf{\Theta}}\left(\theta_{n}\right).\label{eq:PMFSI}
\end{equation}

\textit{StM: Stationary Markov model}. Let $\boldsymbol{\theta}_{n}^{n+r-1}\in\mathbb{F}_{Q}^{r}$
denote the sequences $\left(\theta_{n},\ldots,\theta_{n+r-1}\right)$. This 
is the stationary $r$-th order Markov model with $r\in\mathbb{N}^{+}$
and $1\leq r\ll N$ and transition probability $p\left(\theta_{n+r}\mid\boldsymbol{\theta}_{n}^{n+r-1}\right)$.
The pmf of $\boldsymbol{\mathsf\Theta}^{N}$ may be written as 
\begin{equation}
p\left(\boldsymbol{\theta}^{N}\right)=p\left(\boldsymbol{\theta}_{1}^{r}\right)\prod_{n=1}^{N-r}p\left(\theta_{n+r}\mid\boldsymbol{\theta}_{n}^{n+r-1}\right).\label{eq:PMFStM}
\end{equation}

\textit{GSE: General Stationary and Ergodic }\emph{model}. This is
the general case, without any further assumption apart from the ergodicity of
the source.
% on the structure of
% the sensor network. 

\subsection{The sensing model}

The considered system model is shown in Figure~\ref{fig:ModelSysteme}.
\begin{figure}
\begin{centering}
\includegraphics[width=0.65\columnwidth]{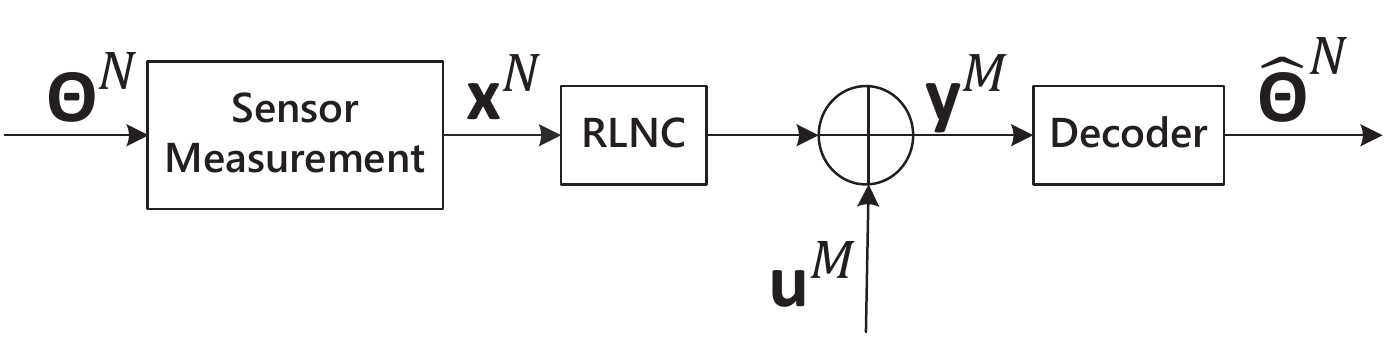}
\par\end{centering}
\caption{Block diagram for network compressive sensing model\label{fig:ModelSysteme}}
\end{figure}
Let $\mathsf{x}_{n}\in\mathbb{F}_{Q}$ be the measurement of $\mathsf{\Theta}_{n}$
obtained by the $n$-th sensor. The random vector $\boldsymbol{\mathsf{x}}^{N}=\left(\mathsf{x}_{1},\mathsf{x}_{2},\ldots,\mathsf{x}_{N}\right)\in\mathbb{F}_{Q}^{N}$
is a copy of the source vector $\boldsymbol{\mathsf{\Theta}}^{N}$ corrupted by
the \emph{sensing noise}.
%The \emph{sensing
%noise} is the difference between $\boldsymbol{\mathsf{\Theta}}^{N}$
%and $\boldsymbol{\mathsf{x}}^{N}$.
The sensing noise models the effect of imperfect measure acquisition at each
sensor. It is described by 
the stationary transition probability
%from $\theta_{n}$ to $x_{n}$ is
$p_{\mathsf{x}\mid\mathsf{\Theta}}(x_{n}\mid\theta_{n})$, $\forall n$.
Remark that this
implies that $\boldsymbol{\mathsf{x}}^{N}$ is stationary as
long as $\boldsymbol{\mathsf{\Theta}}^{N}$ is stationary.
The local measurement $\mathsf x_n$ at node $n$ is used to compute a
packet via RLNC \cite{RLNC}, which is then
broadcast and received by the neighbours of $n$.
Each node in the network can act as a sink, and attempt reconstruction of
$\boldsymbol{\mathsf\Theta}^N$, after a number
$M\leq N$ of linear combinations has been received.
%For a given vector $\mathsf{x}^{N}$, 
The effects of RLNC at
a sink node can be modeled as multiplying $\boldsymbol{\mathsf{x}}^{N}$ by a random
matrix $\boldsymbol{\mathsf{A}}\in\mathbb{F}_{Q}^{M\times N}$.
We assume that
some \emph{communication noise} $\boldsymbol{\mathsf{u}}^{M}\in\mathbb{F}_{Q}^{M}$
affects the received packets, modeling the effects of transmission. Each entry of $\boldsymbol{\mathsf{u}}^{M}$
is iid with pmf $p_{\mathsf{u}}\left(\cdot\right)$. 
The sink node is assumed to have
received $M$ packets, with the $i$-th packet carrying the coefficients
$\boldsymbol{\mathsf{A}}_{i}$ and the result of linear combination
$\mathsf{y}_{i}\in\mathbb{F}_{Q}$, where $\boldsymbol{\mathsf{A}}_{i}$
is the $i$-th row of $\boldsymbol{\mathsf{A}}$ and $\mathsf{y}_{i}=\boldsymbol{\mathsf{A}}_{i}\boldsymbol{\mathsf{x}}^{N}+\mathsf{u}_{i}$,
with all operations in $\mathbb{F}_{Q}$. The vector $\boldsymbol{\mathsf{y}}^{M}=\left(\mathsf{y}_{1},\mathsf{y}_{2},\ldots,\mathsf{y}_{M}\right)^{\textrm{t}}\in\mathbb{F}_{Q}^{M}$
can be then represented as 
\begin{equation}
\boldsymbol{\mathsf{y}}^{M}=\boldsymbol{\mathsf{A}}\boldsymbol{\mathsf{x}}^{N}+\boldsymbol{\mathsf{u}}^{M},\label{eq:MesModel}
\end{equation}
where the network coding matrix $\boldsymbol{\mathsf A}$ plays the role of the
random sensing matrix in the compressed sensing setup.
According to the presence of the sensing and communication noises,
one obtains four types of noise models, namely Without Noise (\emph{WN}),
Noise in Communications (\emph{NC}) only, Noise in the Sensing process
(\emph{NS}) only, and noise in both Communications and in the Sensing
process (\emph{NCS}). These models are summarized in Table~\ref{tab:NoiseModel}.

\begin{table}[th]
\caption{Classification and notation based on the presence of noise\label{tab:NoiseModel}}
\medskip{}
\centering{}%
\begin{tabular}{c|c|c|c}
\multicolumn{1}{c}{} & \multicolumn{1}{c}{} & \multicolumn{2}{c}{Communication Noise}\tabularnewline
\cline{3-4} 
\multicolumn{1}{c}{} & \multicolumn{1}{c}{} & $\;\;\,$absent$\,\;\;$ & present\tabularnewline
\cline{3-4} 
Sensing & absent & WN & NC\tabularnewline
\cline{2-4} 
Noise & present & NS & NCS\tabularnewline
\end{tabular}
\end{table}

In
general, the matrix $\boldsymbol{\mathsf{A}}$ is not necessarily of full rank,
%and $M\leq N$.
%The network coding matrix (or sensing matrix) $\boldsymbol{\mathsf{A}}$
and it is assumed to be independent of $\boldsymbol{\mathsf{x}}^{N}$. Two
different assumptions about the structure of $\boldsymbol{\mathsf{A}}$
are considered here: (\emph{A1}) the entries of $\boldsymbol{\mathsf{A}}$ are iid,
uniformly distributed in $\mathbb{F}_{Q}$; (\emph{A2}) the entries of
$\boldsymbol{\mathsf A}$ are iid, all non-zero elements
of $\mathbb{F}_{Q}$ are equiprobable. Both can be represented
using the following model: for the entry $\mathsf{A}_{ij}$
of $\boldsymbol{\mathsf{A}}$, 
\begin{equation}
\Pr\left(\mathsf{A}_{ij}=q\right)=\begin{cases}
1-\gamma & q=0,\\
\gamma/\left(Q-1\right) & q\in\mathbb{F}_{Q}\setminus\left\{ 0\right\} ,
\end{cases}\label{eq:ASparistyModel}
\end{equation}
where $\gamma$ is the sparsity factor, $0<\gamma<1$, and $\boldsymbol{\mathsf{A}}$
is sparse if $\gamma<0.5$. 
We only assume that 
\begin{equation}
0<\gamma\leq1-Q^{-1}.\label{eq:gama}
\end{equation}
Notice that choosing $\gamma<1-Q^{-1}$ corresponds to assumption \emph{(A2)},
while choosing $\gamma=1-Q^{-1}$ corresponds to assumption \emph{(A1)}, since
(\ref{eq:ASparistyModel}) becomes
the uniform distribution.

In practice, sparse matrices are
preferable. As the information of the sensing matrix is carried in
the headers of packets \cite{PNC,CompHeader}, the network coding overhead may
be large if $\boldsymbol{\mathsf{A}}$ is dense and $N$ is large.
Moreover, as mentioned in \cite{EUSI}, sparse matrices facilitate
the convergence of the approximate belief propagation algorithm \cite{BP}.
 In practice, the structure of $\boldsymbol{\mathsf{A}}$ is strongly
dependent on the structure of the network. For example, \cite{giannakis}
assumes that only a subset of sensors $\mathcal{S}{}_{i}\subset\mathcal{N}$
have participated in the $i$-th linear mixing. The content of the
subsets $\mathcal{S}{}_{i}$ depends on the location of each sensor
and is designed to minimize communication costs. In $\boldsymbol{\mathsf{A}}$,
coefficients associated to nodes belonging to $\mathcal{S}{}_{i}$
follow a uniform distribution, while the others are null. This model,
however, is not considered here, since we aim at a general asymptotic
analysis, independent on the topology of the network.\textcolor{red}{{} }

\subsection{MAP Decoding}

The sink node observes the realization $\mathbf{y}^{M}$ and perfectly knows %is able to
%obtain 
the realization
$\mathbf{A}$, \emph{e.g.}, from packet headers, see \cite{PNC}
and \cite{CompHeader}. The maximum \emph{a posteriori} estimate
$\hat{\boldsymbol\theta}^N$ of the realization of 
$\boldsymbol{\mathsf\Theta}^N$ is evaluated as
%$\hat{\boldsymbol{\mathsf{\Theta}}}^{N}$ of $\boldsymbol{\mathsf{\Theta}}^{N}$
%is a random vector as it depends on $\boldsymbol{\mathsf{A}}$ and
%on $\boldsymbol{\mathsf{y}}^{M}$. Knowing some realization $\mathbf{y}^{M}$
%and $\mathbf{A}$, the MAP estimate $\hat{\boldsymbol{\theta}}^{N}$
%is 
\begin{equation}
\hat{\boldsymbol{\theta}}^{N}=\textrm{arg}\max_{\boldsymbol{\theta}^{N}\in\mathbb{F}_{Q}^{N}}p\left(\boldsymbol{\theta}^{N}\mid\mathbf{y}^{M},\mathbf{A}\right),\label{eq:MAPEstimateDef}
\end{equation}
where the \emph{a posteriori} pmf is
\begin{align}
p\left(\boldsymbol{\theta}^{N}\mid\mathbf{y}^{M},\mathbf{A}\right) & \propto  p\left(\boldsymbol{\theta}^{N},\mathbf{y}^{M},\mathbf{A}\right)\nonumber \\
 & =  \sum_{\mathbf{x}^{N}\in\mathbb{F}_{Q}^{N}}\sum_{\mathbf{u}^{M}\in\mathbb{F}_{Q}^{M}}p\left(\boldsymbol{\theta}^{N},\mathbf{x}^{N},\mathbf{u}^{M},\mathbf{y}^{M},\mathbf{A}\right)\nonumber \\
 & =  \sum_{\mathbf{x}^{N}\in\mathbb{F}_{Q}^{N}}\sum_{\mathbf{u}^{M}\in\mathbb{F}_{Q}^{M}}p\left(\boldsymbol{\theta}^{N}\right)p\left(\mathbf{x}^{N}\mid\boldsymbol{\theta}^{N}\right)p\left(\mathbf{u}^{M}\right)p\left(\mathbf{A}\right)p\left(\mathbf{y}^{M}\mid\mathbf{x}^{N},\mathbf{u}^{M},\mathbf{A}\right).\label{eq:MAPEstimatePMF}
\end{align}
Note that the conditional pmf $p\left(\mathbf{y}^{M}\mid\mathbf{x}^{N},\mathbf{u}^{M},\mathbf{A}\right)$
is an indicator function, \emph{i.e.}, 
\begin{equation}
p\left(\mathbf{y}^{M}\mid\mathbf{x}^{N},\mathbf{u}^{M},\mathbf{A}\right)=1_{\mathbf{y}^{M}=\mathbf{A}\mathbf{x}^{N}+\mathbf{u}^{M}}.
\end{equation}
An error event (decoding error) occurs when $\hat{\boldsymbol{\theta}}^{N}\neq\boldsymbol{\theta}^{N}$,
with probability
\begin{equation}
P_{\text{e}}=\Pr\left\{ \hat{\boldsymbol{\mathsf{\Theta}}}^{N}\neq\boldsymbol{\mathsf{\Theta}}^{N}\right\} .\label{eq:ErrorEventProbDef}
\end{equation}

Our objective is to evaluate lower and upper bounds of (\ref{eq:ErrorEventProbDef})
under MAP decoding, as functions of $M$, $N$, and $\gamma$, for the
various source and noise models previously introduced. With these
bounds, one can obtain necessary and sufficient conditions on the
ratio $M/N$ for asymptotic (with $N\rightarrow\infty$) perfect recovery,
\emph{i.e.}, to obtain 
\begin{equation}
\lim_{N\rightarrow\infty}P_{\text{e}}=0.\label{eq:ErrorProbaAsymptotic}
\end{equation}

\section{Necessary Condition for Asymptotic Perfect Recovery}

\label{sec:NecessaryCondition}

This section derives the necessary conditions for asymptotically ($N\rightarrow\infty$)
vanishing probability of decoding error. They only depend on the assumptions
considered about the sensing and communication noises. We directly
analyze the\emph{ }\textit{\emph{NCS}}\emph{ }case for the GSE source
model. The results for this case can be easily adapted to the other
cases. This work extends results obtained in \cite{seong} for the
noiseless case (\textit{\emph{WN}}). Two situations are considered,
depending on the value of the entropy rate
\begin{equation}
\mathcal{H}\left(\mathsf{x}\right)=\lim_{N\rightarrow\infty}\frac{1}{N}H\left(\boldsymbol{\mathsf{x}}^{N}\right).\label{eq:EntropyRate}
\end{equation}

\begin{prop}[Necessary condition for the NCS
case]
\noindent \label{Prop:CNforNCSCase}
%(Necessary condition for the NCS
%case). 
Assume the presence of both communication and sensing noises
and that $\mathcal{H}\left(\mathsf{x}\right)>0$. Consider some arbitrary
small $\delta\in\mathbb{R}^{+}$. For $N\rightarrow\infty$, the necessary
conditions for $P_{\text{e}}<\delta$ are 
\begin{equation}
\mathcal{H}\left(\mathsf{\Theta},\mathsf{x}\right)-\mathcal{H}\left(\mathsf{x}\right)<3\varepsilon+\delta\log Q,\label{eq:necessary1}
\end{equation}
\begin{equation}
H\left(p_{\mathsf{u}}\right)<\log Q,\label{eq:necessary2}
\end{equation}
and 
\begin{equation}
\frac{M}{N}>\frac{\mathcal{H}\left(\mathsf{\Theta},\mathsf{x}\right)-\left(5\varepsilon+2\delta\log Q\right)}{\log Q-H\left(p_{\mathsf{u}}\right)},\label{eq:necessary3}
\end{equation}
where $\varepsilon\in\mathbb{R}^{+}$ is an arbitrary small constant.\end{prop}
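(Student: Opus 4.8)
The plan is to derive the three necessary conditions from an information-theoretic (Fano-type) argument applied to the Markov chain $\boldsymbol{\mathsf{\Theta}}^{N}\!-\!\boldsymbol{\mathsf{x}}^{N}\!-\!(\boldsymbol{\mathsf{y}}^{M},\boldsymbol{\mathsf{A}})$. First I would observe that if $P_{\text{e}}<\delta$ then, by Fano's inequality, $H(\boldsymbol{\mathsf{\Theta}}^{N}\mid\boldsymbol{\mathsf{y}}^{M},\boldsymbol{\mathsf{A}})\le 1+\delta N\log Q$, so that the reconstruction must capture essentially all the entropy of the source. The key accounting step is to split $H(\boldsymbol{\mathsf{\Theta}}^{N})$ as $H(\boldsymbol{\mathsf{\Theta}}^{N},\boldsymbol{\mathsf{x}}^{N})-H(\boldsymbol{\mathsf{x}}^{N}\mid\boldsymbol{\mathsf{\Theta}}^{N})$ and, using that $\boldsymbol{\mathsf{x}}^{N}$ is a stationary noisy copy of $\boldsymbol{\mathsf{\Theta}}^{N}$, to pass to entropy rates: for $N$ large, $\tfrac1N H(\boldsymbol{\mathsf{\Theta}}^{N},\boldsymbol{\mathsf{x}}^{N})\to\mathcal{H}(\mathsf{\Theta},\mathsf{x})$ and $\tfrac1N H(\boldsymbol{\mathsf{x}}^{N})\to\mathcal{H}(\mathsf{x})$, each up to an arbitrarily small $\varepsilon$. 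Condition \eqref{eq:necessary1} should then follow because the recoverable information is at most what survives the sensing step: what the decoder can learn about $\boldsymbol{\mathsf{\Theta}}^{N}$ through $\boldsymbol{\mathsf{x}}^{N}$ cannot exceed $H(\boldsymbol{\mathsf{x}}^{N})$, so $\mathcal{H}(\mathsf{\Theta},\mathsf{x})-\mathcal{H}(\mathsf{x})$ (essentially $H(\mathsf{\Theta}\mid\mathsf{x})$ in rate form) must be small; the $3\varepsilon+\delta\log Q$ slack collects the Fano term and the finite-$N$ entropy-rate approximation errors.

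Next I would attack the channel side to get \eqref{eq:necessary2} and \eqref{eq:necessary3}. The data-processing inequality gives $I(\boldsymbol{\mathsf{\Theta}}^{N};\boldsymbol{\mathsf{y}}^{M},\boldsymbol{\mathsf{A}})\le I(\boldsymbol{\mathsf{x}}^{N};\boldsymbol{\mathsf{y}}^{M}\mid\boldsymbol{\mathsf{A}})$, and I would upper-bound the latter by $\sum_{i=1}^{M} H(\mathsf{y}_{i}\mid\boldsymbol{\mathsf{A}}) - H(\boldsymbol{\mathsf{y}}^{M}\mid\boldsymbol{\mathsf{x}}^{N},\boldsymbol{\mathsf{A}})$. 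Since $\mathsf{y}_{i}=\boldsymbol{\mathsf{A}}_{i}\boldsymbol{\mathsf{x}}^{N}+\mathsf{u}_{i}$ with $\mathsf{u}_{i}$ iid, we have $H(\mathsf{y}_{i}\mid\boldsymbol{\mathsf{A}})\le\log Q$ and $H(\boldsymbol{\mathsf{y}}^{M}\mid\boldsymbol{\mathsf{x}}^{N},\boldsymbol{\mathsf{A}})=H(\boldsymbol{\mathsf{u}}^{M})=M\,H(p_{\mathsf{u}})$, because conditioned on $\boldsymbol{\mathsf{x}}^{N}$ and $\boldsymbol{\mathsf{A}}$ the only randomness left in $\boldsymbol{\mathsf{y}}^{M}$ is the additive noise. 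Hence $I(\boldsymbol{\mathsf{\Theta}}^{N};\boldsymbol{\mathsf{y}}^{M},\boldsymbol{\mathsf{A}})\le M(\log Q - H(p_{\mathsf{u}}))$. Combining this with $I(\boldsymbol{\mathsf{\Theta}}^{N};\boldsymbol{\mathsf{y}}^{M},\boldsymbol{\mathsf{A}}) = H(\boldsymbol{\mathsf{\Theta}}^{N}) - H(\boldsymbol{\mathsf{\Theta}}^{N}\mid\boldsymbol{\mathsf{y}}^{M},\boldsymbol{\mathsf{A}}) \ge H(\boldsymbol{\mathsf{\Theta}}^{N}) - 1 - \delta N\log Q$ from Fano, dividing by $N$, and using $\tfrac1N H(\boldsymbol{\mathsf{\Theta}}^{N})\ge\mathcal{H}(\mathsf{\Theta},\mathsf{x})-\mathcal{H}(\mathsf{x})-\varepsilon$ together with \eqref{eq:necessary1} to replace $\mathcal{H}(\mathsf{\Theta},\mathsf{x})-\mathcal{H}(\mathsf{x})$ by (roughly) $\mathcal{H}(\mathsf{\Theta},\mathsf{x})$ up to the stated slack, I obtain $\tfrac{M}{N}(\log Q - H(p_{\mathsf{u}})) > \mathcal{H}(\mathsf{\Theta},\mathsf{x}) - (5\varepsilon+2\delta\log Q)$, which is \eqref{eq:necessary3}. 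Condition \eqref{eq:necessary2}, $H(p_{\mathsf{u}})<\log Q$, is forced because the right-hand side of \eqref{eq:necessary3} is positive (as $\mathcal{H}(\mathsf{x})>0$ guarantees $\mathcal{H}(\mathsf{\Theta},\mathsf{x})>0$ and the slack can be taken tiny), so the denominator must be strictly positive for the inequality to be satisfiable by any $M\le N$.

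The step I expect to be the main obstacle is the careful bookkeeping of the $\varepsilon$ and $\delta\log Q$ terms: one must be precise about how the finite-$N$ gap between $\tfrac1N H(\boldsymbol{\mathsf{x}}^{N})$, $\tfrac1N H(\boldsymbol{\mathsf{\Theta}}^{N},\boldsymbol{\mathsf{x}}^{N})$ and their limiting rates is absorbed, and about the fact that condition \eqref{eq:necessary1} is used twice — once as a derived consequence and once as a substitution inside the derivation of \eqref{eq:necessary3} — so the constants $3\varepsilon$, $5\varepsilon$, $2\delta\log Q$ have to be traced consistently. A secondary subtlety is justifying $H(\boldsymbol{\mathsf{y}}^{M}\mid\boldsymbol{\mathsf{x}}^{N},\boldsymbol{\mathsf{A}})=M H(p_{\mathsf{u}})$: this uses the independence of $\boldsymbol{\mathsf{u}}^{M}$ from $(\boldsymbol{\mathsf{x}}^{N},\boldsymbol{\mathsf{A}})$ and the fact that, given a row $\boldsymbol{\mathsf{A}}_{i}$ and $\boldsymbol{\mathsf{x}}^{N}$, the map $\mathsf{u}_{i}\mapsto\mathsf{y}_{i}$ is a bijection on $\mathbb{F}_{Q}$, so no entropy is lost. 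Once these are handled, everything else is a direct chain of Fano, data-processing, and subadditivity of entropy.
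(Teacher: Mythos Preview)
Your overall architecture (Fano + data processing + the additive-noise identity $H(\boldsymbol{\mathsf{y}}^{M}\mid\boldsymbol{\mathsf{x}}^{N},\boldsymbol{\mathsf{A}})=MH(p_{\mathsf{u}})$) matches the paper's, and your derivation of \eqref{eq:necessary1} is essentially the paper's: $H(\boldsymbol{\mathsf{\Theta}}^{N}\mid\boldsymbol{\mathsf{x}}^{N})\le H(\boldsymbol{\mathsf{\Theta}}^{N}\mid\hat{\boldsymbol{\mathsf{\Theta}}}^{N})<1+N\delta\log Q$, then pass to rates. The problem is in the last step of your route to \eqref{eq:necessary3}.

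You bound $I(\boldsymbol{\mathsf{\Theta}}^{N};\boldsymbol{\mathsf{y}}^{M},\boldsymbol{\mathsf{A}})\le M(\log Q-H(p_{\mathsf{u}}))$ and combine with Fano to get
\[
\frac{M}{N}\bigl(\log Q-H(p_{\mathsf{u}})\bigr)\;\ge\;\frac{1}{N}H(\boldsymbol{\mathsf{\Theta}}^{N})-\frac{1}{N}-\delta\log Q,
\]
whose right-hand side converges to $\mathcal{H}(\mathsf{\Theta})$ up to slack. You then write ``using $\tfrac1N H(\boldsymbol{\mathsf{\Theta}}^{N})\ge\mathcal{H}(\mathsf{\Theta},\mathsf{x})-\mathcal{H}(\mathsf{x})-\varepsilon$ together with \eqref{eq:necessary1} to replace $\mathcal{H}(\mathsf{\Theta},\mathsf{x})-\mathcal{H}(\mathsf{x})$ by (roughly) $\mathcal{H}(\mathsf{\Theta},\mathsf{x})$.'' This step is invalid: $\mathcal{H}(\mathsf{\Theta},\mathsf{x})-\mathcal{H}(\mathsf{x})=\mathcal{H}(\mathsf{\Theta}\mid\mathsf{x})$ is a \emph{lower} bound for $\tfrac1N H(\boldsymbol{\mathsf{\Theta}}^{N})$, and \eqref{eq:necessary1} says precisely that $\mathcal{H}(\mathsf{\Theta}\mid\mathsf{x})$ is small. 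You cannot use smallness of a lower bound to inflate it to $\mathcal{H}(\mathsf{\Theta},\mathsf{x})$. The gap between what you have, $\mathcal{H}(\mathsf{\Theta})$, and what you want, $\mathcal{H}(\mathsf{\Theta},\mathsf{x})$, is $\mathcal{H}(\mathsf{x}\mid\mathsf{\Theta})$, which is \emph{not} controlled by \eqref{eq:necessary1} and can be strictly positive whenever there is genuine sensing noise.

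The paper avoids this by running the second Fano/data-processing argument on $\boldsymbol{\mathsf{x}}^{N}$ rather than on $\boldsymbol{\mathsf{\Theta}}^{N}$: from the Markov chain it uses $H(\boldsymbol{\mathsf{x}}^{N}\mid\boldsymbol{\mathsf{y}}^{M},\boldsymbol{\mathsf{A}})\le H(\boldsymbol{\mathsf{\Theta}}^{N}\mid\hat{\boldsymbol{\mathsf{\Theta}}}^{N})<1+N\delta\log Q$, then expands $H(\boldsymbol{\mathsf{x}}^{N}\mid\boldsymbol{\mathsf{y}}^{M},\boldsymbol{\mathsf{A}})\ge H(\boldsymbol{\mathsf{x}}^{N})-M(\log Q-H(p_{\mathsf{u}}))$ exactly as you do. This yields $\tfrac{M}{N}(\log Q-H(p_{\mathsf{u}}))>\mathcal{H}(\mathsf{x})-(2\varepsilon+\delta\log Q)$. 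Now \eqref{eq:necessary1} \emph{does} do the job, because it says $\mathcal{H}(\mathsf{x})>\mathcal{H}(\mathsf{\Theta},\mathsf{x})-(3\varepsilon+\delta\log Q)$, and substituting gives \eqref{eq:necessary3} with the $5\varepsilon+2\delta\log Q$ slack. The same inequality also yields \eqref{eq:necessary2} directly: if $H(p_{\mathsf{u}})=\log Q$ the left side is zero, forcing $\mathcal{H}(\mathsf{x})$ to be arbitrarily small, contradicting $\mathcal{H}(\mathsf{x})>0$. So the fix is simply to apply your channel-side computation to $\boldsymbol{\mathsf{x}}^{N}$ instead of $\boldsymbol{\mathsf{\Theta}}^{N}$; everything else you wrote then goes through.
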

\begin{cor}
\label{Cor:CNforNCSCase}Consider the same hypotheses as in Proposition~\ref{Prop:CNforNCSCase}
and assume now that $\mathcal{H}\left(\mathsf{x}\right)=0$. Consider
some arbitrary small $\delta\in\mathbb{R}^{+}$. For $N\rightarrow\infty$,
the necessary condition for $P_{\text{e}}<\delta$ is 
\begin{equation}
\mathcal{H}\left(\mathsf{\Theta},\mathsf{x}\right)<3\varepsilon+\delta\log Q,\label{eq:necessary1-1}
\end{equation}
where $\varepsilon\in\mathbb{R}^{+}$ is an arbitrary small constant. 
\end{cor}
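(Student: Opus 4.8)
The plan is to reuse the argument that establishes Proposition~\ref{Prop:CNforNCSCase}, specializing it to the degenerate case $\mathcal{H}(\mathsf{x})=0$. The backbone of the necessary-condition argument is a Fano-type inequality: if $P_{\text{e}}<\delta$, then the sink must be able to recover $\boldsymbol{\mathsf\Theta}^N$ (and, implicitly, the sensing-noise realization insofar as it is identifiable) from $(\boldsymbol{\mathsf{y}}^M,\boldsymbol{\mathsf{A}})$ with vanishing error, so the joint entropy $H(\boldsymbol{\mathsf\Theta}^N,\boldsymbol{\mathsf{x}}^N)$ cannot exceed, up to $O(N\delta\log Q)$ slack from Fano and up to an $N\varepsilon$ slack from the AEP/typicality estimates used for the stationary ergodic source, the information that $(\boldsymbol{\mathsf{y}}^M,\boldsymbol{\mathsf{A}})$ can convey about the pair. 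In the general proposition this yields two competing bounds: one controlled by $H(\boldsymbol{\mathsf{y}}^M\mid\boldsymbol{\mathsf{A}})\le M\log Q$ and one accounting for the entropy $\mathcal{H}(\mathsf{x})$ that is already "free" in $\boldsymbol{\mathsf{x}}^N$; combined they give \eqref{eq:necessary1}–\eqref{eq:necessary3}.

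First I would take the inequality from the proof of Proposition~\ref{Prop:CNforNCSCase} in the form
\[
\mathcal{H}(\mathsf{\Theta},\mathsf{x}) - \mathcal{H}(\mathsf{x}) \;<\; \frac{M}{N}\bigl(\log Q - H(p_{\mathsf{u}})\bigr) + 3\varepsilon + \delta\log Q
\]
(or whatever the exact intermediate bound is there), and simply substitute $\mathcal{H}(\mathsf{x})=0$. Since $M\le N$ and $\log Q - H(p_{\mathsf{u}})\le \log Q$, the right-hand side is at most $\log Q + 3\varepsilon + \delta\log Q$, which is a trivial bound; the nontrivial content must instead come from the chain that does \emph{not} spend the factor $M/N$. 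Concretely, when $\mathcal{H}(\mathsf{x})=0$ the source $\boldsymbol{\mathsf{x}}^N$ has sub-exponential type-class size, so conditioning on $\boldsymbol{\mathsf{A}}$ the observation $\boldsymbol{\mathsf{y}}^M$ carries essentially no usable information beyond what is already forced, and Fano forces $H(\boldsymbol{\mathsf\Theta}^N,\boldsymbol{\mathsf{x}}^N)\le N\delta\log Q + o(N)$ directly. Dividing by $N$ and absorbing the typicality slack into $\varepsilon$ gives \eqref{eq:necessary1-1}.

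The key steps, in order: (i) invoke the Fano/Shannon inequality $H(\boldsymbol{\mathsf\Theta}^N\mid\boldsymbol{\mathsf{y}}^M,\boldsymbol{\mathsf{A}})\le 1 + P_{\text{e}}\,N\log Q$ together with the fact that recovering $\boldsymbol\theta^N$ from $(\boldsymbol{\mathsf{y}}^M,\boldsymbol{\mathsf{A}})$ also constrains the recoverable part of $\boldsymbol{\mathsf{x}}^N$; (ii) expand $H(\boldsymbol{\mathsf\Theta}^N,\boldsymbol{\mathsf{x}}^N)$ via the chain rule and bound the term that survives by $H(\boldsymbol{\mathsf{y}}^M\mid\boldsymbol{\mathsf{A}}) + H(\boldsymbol{\mathsf{x}}^N)-H(\boldsymbol{\mathsf{y}}^M\mid\boldsymbol{\mathsf{x}}^N,\boldsymbol{\mathsf{A}})$ plus the Fano residual; (iii) set $\mathcal{H}(\mathsf{x})=\lim_N H(\boldsymbol{\mathsf{x}}^N)/N=0$, so the $\boldsymbol{\mathsf{x}}^N$ contribution is $o(N)$; (iv) note $H(\boldsymbol{\mathsf{y}}^M\mid\boldsymbol{\mathsf{x}}^N,\boldsymbol{\mathsf{A}}) = M H(p_{\mathsf{u}})$ and $H(\boldsymbol{\mathsf{y}}^M\mid\boldsymbol{\mathsf{A}})\le M\log Q$, but crucially observe that in the $\mathcal{H}(\mathsf{x})=0$ regime the \emph{achievable} description of $\boldsymbol{\mathsf{y}}^M$ is itself $o(N)$ because $\boldsymbol{\mathsf{A}}\boldsymbol{\mathsf{x}}^N$ ranges over a set of size $2^{o(N)}$; (v) divide by $N$, let $N\to\infty$, and collect the $\varepsilon$ and $\delta\log Q$ terms. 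I expect step (iv) — making precise that "no communication happens" in the zero-entropy-rate regime, i.e. that $\frac1N H(\boldsymbol{\mathsf{y}}^M\mid\boldsymbol{\mathsf{A}})\to 0$ rather than merely being $\le (M/N)\log Q$ — to be the main obstacle, since it requires using ergodicity to control the typical support of $\boldsymbol{\mathsf{x}}^N$ (hence of $\boldsymbol{\mathsf{A}}\boldsymbol{\mathsf{x}}^N$) uniformly in the realization of $\boldsymbol{\mathsf{A}}$; everything else is a direct specialization of the proposition's argument.
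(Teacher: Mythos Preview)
Your proposal overcomplicates what is, in the paper, an immediate specialization. The proof of Proposition~\ref{Prop:CNforNCSCase} uses \emph{two independent} Fano chains. The first comes from the data-processing inequality $H(\boldsymbol{\mathsf\Theta}^N\mid\boldsymbol{\mathsf x}^N)\le H(\boldsymbol{\mathsf\Theta}^N\mid\hat{\boldsymbol{\mathsf\Theta}}^N)$ on the Markov chain $\boldsymbol{\mathsf\Theta}^N\leftrightarrow\boldsymbol{\mathsf x}^N\leftrightarrow(\boldsymbol{\mathsf y}^M,\boldsymbol{\mathsf A})\leftrightarrow\hat{\boldsymbol{\mathsf\Theta}}^N$ together with Fano; this gives \eqref{eq:necessary1}, namely $\mathcal H(\mathsf\Theta,\mathsf x)-\mathcal H(\mathsf x)<3\varepsilon+\delta\log Q$, with \emph{no} $M/N$ term anywhere. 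The second chain, based on $H(\boldsymbol{\mathsf x}^N\mid\boldsymbol{\mathsf y}^M,\boldsymbol{\mathsf A})\le H(\boldsymbol{\mathsf\Theta}^N\mid\hat{\boldsymbol{\mathsf\Theta}}^N)$, is the one that produces \eqref{eq:necessary2}--\eqref{eq:necessary3}. The corollary is then one line: set $\mathcal H(\mathsf x)=0$ in \eqref{eq:necessary1} to obtain \eqref{eq:necessary1-1}, and note that the second chain (which becomes \eqref{eq:lower5}) is vacuous since its left-hand side is negative for every $M/N$ and every $H(p_{\mathsf u})$.

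Your guessed ``intermediate bound'' $\mathcal H(\mathsf\Theta,\mathsf x)-\mathcal H(\mathsf x)<\tfrac{M}{N}(\log Q-H(p_{\mathsf u}))+3\varepsilon+\delta\log Q$ does not appear in the argument and is not how \eqref{eq:necessary1} arises; this misreading is what sends you down the detour through steps (ii)--(iv). Step (iv) in particular is both unnecessary and false: $H(\boldsymbol{\mathsf y}^M\mid\boldsymbol{\mathsf A})\ge H(\boldsymbol{\mathsf y}^M\mid\boldsymbol{\mathsf A},\boldsymbol{\mathsf x}^N)=H(\boldsymbol{\mathsf u}^M)=M\,H(p_{\mathsf u})$, so $\tfrac1N H(\boldsymbol{\mathsf y}^M\mid\boldsymbol{\mathsf A})$ does \emph{not} tend to zero when $H(p_{\mathsf u})>0$ and $M/N$ is bounded away from zero, regardless of $\mathcal H(\mathsf x)$. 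The ``obstacle'' you flag is therefore not an obstacle at all --- the correct route simply never passes through $\boldsymbol{\mathsf y}^M$.
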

In Proposition~\ref{Prop:CNforNCSCase}, (\ref{eq:necessary1}) implies
that for asymptotically exact recovery, $p\left(\boldsymbol{\mathsf{x}}^{N}\mid\boldsymbol{\mathsf{\Theta}}^{N}\right)$
should degenerate, almost surely, into a deterministic mapping. The
condition (\ref{eq:necessary2}) indicates that asymptotically exact
recovery for non-deterministic sources is not possible in case of
uniformly distributed communication noise. Finally, (\ref{eq:necessary3})
indicates that the minimum number of required measurements depends
both on the sensing and communication noises as well as on the distribution
of $\mathsf{\Theta}$. In particular, for a given source with entropy
rate $\mathcal{H}\left(\mathsf{\Theta}\right)$, the number of necessary
measurements increases with the level of the sensing noise, determined
by $\mathcal{H}\left(\mathsf{x}\mid\mathsf{\Theta}\right)$. Similarly,
the number of necessary measurements increases when the communication
noise gets closer to uniformly distributed. The following proof is inspired by the work in \cite{seong}, 
with both communication noise and sensing noise are considered here.  
\begin{proof}
From the problem setup, one has the Markov chain 
\begin{equation}
\boldsymbol{\mathsf{\Theta}}^{N}\leftrightarrow\boldsymbol{\mathsf{x}}^{N}\leftrightarrow\left(\boldsymbol{\mathsf{y}}^{M},\boldsymbol{\mathsf{A}}\right)\leftrightarrow\hat{\boldsymbol{\mathsf{\Theta}}}^{N},\label{eq:MarcovChain}
\end{equation}
from which one deduces that
\begin{equation}
H\left(\boldsymbol{\mathsf{\Theta}}^{N}\mid\boldsymbol{\mathsf{x}}^{N}\right)\leq H\left(\boldsymbol{\mathsf{\Theta}}^{N}\mid\hat{\boldsymbol{\mathsf{\Theta}}}^{N}\right),\label{eq:constraint_a}
\end{equation}
and
\begin{equation}
H\left(\boldsymbol{\mathsf{x}}^{N}\mid\boldsymbol{\mathsf{y}}^{M},\boldsymbol{\mathsf{A}}\right)\leq H\left(\boldsymbol{\mathsf{\Theta}}^{N}\mid\hat{\boldsymbol{\mathsf{\Theta}}}^{N}\right).\label{eq:markov1}
\end{equation}
Applying Fano's inequality \cite[Sec. 2.10]{EIT}, one gets 
\begin{eqnarray}
H\left(\boldsymbol{\mathsf{\Theta}}^{N}\mid\hat{\boldsymbol{\mathsf{\Theta}}}^{N}\right) & \leq & 1+P_{\text{e}}\cdot\log\left(Q^{N}-1\right)\nonumber \\
 & < & 1+NP_{\text{e}}\log Q,\label{eq:fano}
\end{eqnarray}
an upper bound of $P_{\text{e}}$ is obtained combining (\ref{eq:constraint_a})
and (\ref{eq:fano}), 
\begin{equation}
P_{\text{e}}>\frac{H\left(\boldsymbol{\mathsf{\Theta}}^{N},\boldsymbol{\mathsf{x}}^{N}\right)-H\left(\boldsymbol{\mathsf{x}}^{N}\right)-1}{N\log Q}.\label{eq:lower1}
\end{equation}
Since $\boldsymbol{\mathsf{\Theta}}^{N}$ and $\boldsymbol{\mathsf{x}}^{N}$
are stationary and ergodic, for any $\varepsilon>0$, there exists
$N_{0}\in\mathbb{N}$ such that $\forall N>N_{0}$, one has 
\begin{equation}
\begin{cases}
\mathcal{H}\left(\mathsf{\Theta},\mathsf{x}\right)-\varepsilon
<
\frac{H\left(\boldsymbol{\mathsf{\Theta}}^{N},\boldsymbol{\mathsf{x}}^{N}\right)}{N}
<
\mathcal{H}\left(\mathsf{\Theta},\mathsf{x}\right)+\varepsilon
,\\
\mathcal{H}\left(\mathsf{x}\right)-\varepsilon
<
\frac{H\left(\boldsymbol{\mathsf{x}}^{N}\right)}{N}
<
\mathcal{H}\left(\mathsf{x}\right)+\varepsilon
,\\
\varepsilon>\frac{1}{N}.
\end{cases}\label{eq:lie}
\end{equation}
Hence for $N>N_{0}$, (\ref{eq:lower1}) can be rewritten as
\begin{equation}
P_{\text{e}}>\frac{\mathcal{H}\left(\mathsf{\Theta},\mathsf{x}\right)-\mathcal{H}\left(\mathsf{x}\right)-3\varepsilon}{\log Q}.\label{eq:lower2}
\end{equation}
For $P_{\text{e}}<\delta$, one deduces (\ref{eq:necessary1})
from (\ref{eq:lower2}). For $\delta$ and $\varepsilon$ arbitrary
small, (\ref{eq:necessary1}) imposes $\mathcal{H}\left(\mathsf{\Theta},\mathsf{x}\right)=\mathcal{H}\left(\mathsf{x}\right)$,
meaning that $\mathsf{\mathsf{\Theta}}$ should be deterministic knowing
$\mathsf{x}$, almost surely.

From (\ref{eq:markov1}) and (\ref{eq:fano}), one gets an other lower
bound for $P_{\text{e}}$ 
\begin{equation}
P_{\text{e}}>\frac{H\left(\boldsymbol{\mathsf{x}}^{N}\mid\boldsymbol{\mathsf{y}}^{M},\boldsymbol{\mathsf{A}}\right)-1}{N\log Q}.\label{eq:lower3}
\end{equation}
The conditional entropy $H\left(\boldsymbol{\mathsf{x}}^{N}\mid\boldsymbol{\mathsf{y}}^{M},\boldsymbol{\mathsf{A}}\right)$
can be bounded as 
\begin{eqnarray}
H\left(\boldsymbol{\mathsf{x}}^{N}\mid\boldsymbol{\mathsf{y}}^{M},\boldsymbol{\mathsf{A}}\right) & = & H\left(\boldsymbol{\mathsf{x}}^{N}\right)-I\left(\boldsymbol{\mathsf{x}}^{N};\boldsymbol{\mathsf{y}}^{M},\boldsymbol{\mathsf{A}}\right) \nonumber \\ & = & H\left(\boldsymbol{\mathsf{x}}^{N}\right)-\left(I\left(\boldsymbol{\mathsf{x}}^{N};\boldsymbol{\mathsf{A}}\right)+I\left(\boldsymbol{\mathsf{x}}^{N};\boldsymbol{\mathsf{y}}^{M}\mid\boldsymbol{\mathsf{A}}\right)\right)\nonumber \\
 & \overset{(a)}{=} & H\left(\boldsymbol{\mathsf{x}}^{N}\right)-\left(H\left(\boldsymbol{\mathsf{y}}^{M}\mid\boldsymbol{\mathsf{A}}\right)-H\left(\boldsymbol{\mathsf{y}}^{M}\mid\boldsymbol{\mathsf{A}},\boldsymbol{\mathsf{x}}^{N}\right)\right)\nonumber \\
 & \overset{(b)}{\geq} & H\left(\boldsymbol{\mathsf{x}}^{N}\right)-M\cdot\log Q+H\left(\boldsymbol{\mathsf{y}}^{M}\mid\boldsymbol{\mathsf{A}},\boldsymbol{\mathsf{x}}^{N}\right)\nonumber \\
 & \overset{(c)}{=} & H\left(\boldsymbol{\mathsf{x}}^{N}\right)-M\cdot\log
 Q+MH\left(p_{\mathsf{u}}\right),\label{eq:HXYA}
\end{eqnarray}
where $(a)$ follows from the assumption that $\boldsymbol{\mathsf{x}}^{N}$
and $\boldsymbol{\mathsf{A}}$ are independent, $(b)$ comes from
$H\left(\boldsymbol{\mathsf{y}}^{M}\mid\boldsymbol{\mathsf{A}}\right)\leq H\left(\boldsymbol{\mathsf{y}}^{M}\right)\leq\log\left|\mathbb{F}_{Q}^{M}\right|=M\log Q$,
and $(c)$ is because 
\begin{equation}
H\left(\boldsymbol{\mathsf{y}}^{M}\mid\boldsymbol{\mathsf{A}},\boldsymbol{\mathsf{x}}^{N}\right)=H\left(\boldsymbol{\mathsf{A}}\boldsymbol{\mathsf{x}}^{N}+\boldsymbol{\mathsf{u}}^{M}\mid\boldsymbol{\mathsf{A}},\boldsymbol{\mathsf{x}}^{N}\right)=H\left(\boldsymbol{\mathsf{u}}^{M}\right)=MH\left(p_{\mathsf{u}}\right).\label{eq:HYAX}
\end{equation}
Using (\ref{eq:lower3}) and (\ref{eq:HXYA}), a second necessary
condition for $P_{\text{e}}<\delta$ is
\begin{equation}
\frac{H\left(\boldsymbol{\mathsf{x}}^{N}\right)-M\left(\log Q-H\left(p_{\mathsf{u}}\right)\right)-1}{N\log Q}<\delta.\label{eq:lower4}
\end{equation}
For $N>N_{0}$, using (\ref{eq:lie}) in (\ref{eq:lower4}) yields
\begin{equation}
\frac{\mathcal{H}\left(\mathsf{x}\right)-\frac{M}{N}\left(\log Q-H\left(p_{\mathsf{u}}\right)\right)-2\varepsilon}{\log Q}<\delta.\label{eq:lower5}
\end{equation}
Now consider two cases. In the first case, the communication noise
is assumed uniformly distributed, \emph{i.e.}, 
\begin{equation}
H\left(p_{\mathsf{u}}\right)=\log Q,\label{eq:ass1}
\end{equation}
the condition (\ref{eq:lower5}) becomes
\begin{equation}
\mathcal{H}\left(\mathsf{x}\right)<\delta\log Q+2\varepsilon.\label{eq:ass2}
\end{equation}
As $\delta$ can be made arbitrary small, (\ref{eq:ass2}) imposes
that, for uniform communication noise, asymptotically vanishing probability
of error is possible only if $\mathcal{H}\left(\mathsf{x}\right)$
is arbitrary close to zero. For non-degenerate cases, \emph{i.e.},
$\mathcal{H}\left(\mathsf{x}\right)>0$, one obtains the necessary
condition (\ref{eq:necessary2}). In this second case, a lower bound
of the compression ratio $M/N$ is obtained immediately from (\ref{eq:lower5}),
\begin{equation}
\frac{M}{N}>\frac{\mathcal{H}\left(\mathsf{x}\right)-\left(2\varepsilon+\delta\log Q\right)}{\log Q-H\left(p_{\mathsf{u}}\right)}.\label{eq:lower6}
\end{equation}
We can represent the condition (\ref{eq:lower6}) in terms of the
joint entropy rate $\mathcal{H}\left(\mathsf{\Theta},\mathsf{x}\right)$
by applying (\ref{eq:necessary1}). Then, one gets (\ref{eq:necessary3})
and Proposition~\ref{Prop:CNforNCSCase} is proved. 

Consider now $\mathcal{H}\left(\mathsf{x}\right)=0$, then (\ref{eq:lower5})
holds for any value of $M/N$, and for any $H\left(p_{\mathsf{u}}\right)\leq\log Q$,
since the left side of (\ref{eq:lower5}) is always negative. Hence,
(\ref{eq:necessary1}) is the only necessary condition for this case.
Corollary~\ref{Prop:CNforNCSCase} is also proved. 
\end{proof}
With the results of the NCS noise model, one may derive the necessary
conditions for the other models. If no sensing noise is considered,
\emph{i.e.}, $\boldsymbol{\mathsf{x}}^{N}=\boldsymbol{\mathsf{\Theta}}^{N}$,
one has $H\left(\boldsymbol{\mathsf{\Theta}}^{N}\mid\boldsymbol{\mathsf{x}}^{N}\right)=H\left(\boldsymbol{\mathsf{x}}^{N}\mid\boldsymbol{\mathsf{\Theta}}^{N}\right)=0$
and $H\left(\boldsymbol{\mathsf{\Theta}}^{N},\boldsymbol{\mathsf{x}}^{N}\right)=H\left(\boldsymbol{\mathsf{\Theta}}^{N}\right)$.
If communication noise is absent, \emph{i.e.}, $\boldsymbol{\mathsf{u}}^{M}=\mathbf{0}$,
$H\left(\boldsymbol{\mathsf{u}}^{M}\right)=0$. The necessary conditions
for asymptotically ($N\rightarrow\infty$) vanishing probability of
decoding error for each case are listed in Table 2.

\begin{table}[H]
\caption{Necessary conditions for asymptotic perfect recovery in noiseless
and noisy cases}

\bigskip{}

\centering{}%
\begin{tabular}{c|c}
Case & Necessary Condition ($\mathcal{H}\left(\mathsf{x}\right)>0$)\tabularnewline
\hline 
\multirow{2}{*}{\textit{\emph{WN}}} & \multirow{2}{*}{$\frac{M}{N}>\frac{\mathcal{H}\left(\mathsf{\Theta}\right)}{\log Q}$,
already obtained in \cite{seong},}\tabularnewline
 & \tabularnewline
\multirow{2}{*}{\textit{\emph{NC}}} & \multirow{2}{*}{$\frac{M}{N}>\frac{\mathcal{H}\left(\mathsf{\Theta}\right)}{\log Q-H\left(p_{\mathsf{u}}\right)}$
and $H\left(p_{\mathsf{u}}\right)<\log Q$,}\tabularnewline
 & \tabularnewline
\multirow{2}{*}{\textit{\emph{NS}}} & \multirow{2}{*}{$\frac{M}{N}>\frac{\mathcal{H}\left(\mathsf{\Theta},\mathsf{x}\right)}{\log Q}$
and $\mathcal{H}\left(\mathsf{\Theta}\mid\mathsf{x}\right)=0$,}\tabularnewline
 & \tabularnewline
\multirow{2}{*}{\textit{\emph{NCS}}} & \multirow{2}{*}{$\frac{M}{N}>\frac{\mathcal{H}\left(\mathsf{\Theta},\mathsf{x}\right)}{\log Q-H\left(p_{\mathsf{u}}\right)}$
and $H\left(p_{\mathsf{u}}\right)<\log Q$ and $\mathcal{H}\left(\mathsf{\Theta}\mid\mathsf{x}\right)=0$.}\tabularnewline
 & \tabularnewline
\end{tabular}
\end{table}

\section{Sufficient Condition in Absence of Sensing Noise}

\label{sec:SufficientConditionNC}

This section provides an upper bound of the error probability for
the MAP estimation problem in absence of sensing noise (the\textit{
}WN and NC cases). These two cases are considered simultaneously because
their proofs are similar. When the channel noise vanishes, the NC
case boils down to the WN case.

\subsection{Upper Bound of the Error Probability}

\label{sec:SufficientConditionNC-1}
\begin{prop}[Upper bound of $P_e$, WN and NC cases]
\noindent \label{Prop:UpperBoundNC}Under MAP decoding, the asymptotic
($N\rightarrow\infty$) probability of error in absence of sensing
noise can be upper bounded as
\begin{equation}
P_{\textrm{e}}\leq P_{1}\left(\alpha\right)+P_{2}\left(\alpha\right)+2\varepsilon,\label{eq:PeUpperNC}
\end{equation}
where $\varepsilon\in\mathbb{R}^{+}$ is an arbitrarily small constant.
$P_{1}\left(\alpha\right)$ and $P_{2}\left(\alpha\right)$ are defined
as 
\begin{equation}
P_{1}\left(\alpha\right)=2^{-N\left(-\frac{M}{N}\left(H\left(p_{\mathsf{u}}\right)+\log\left(1-\gamma\right)+\varepsilon\right)-H_{2}\left(\alpha\right)-\alpha\log\left(Q-1\right)-\frac{\log\left(\alpha N\right)}{N}\right)},\label{eq:P1alpha}
\end{equation}
and
\begin{equation}
P_{2}\left(\alpha\right)=2^{-N\left(-\mathcal{H}\left(\mathsf{\mathsf{\Theta}}\right)-\frac{M}{N}\left(H\left(p_{\mathsf{u}}\right)+\log\left(Q^{-1}+\left(1-\frac{\gamma}{1-Q^{-1}}\right)^{\left\lceil \alpha N\right\rceil }\left(1-Q^{-1}\right)\right)+\varepsilon\right)-\varepsilon\right)},\label{eq:P2alpha}
\end{equation}
with $\alpha\in\mathbb{R}^{+}$ and $\alpha<0.5$.\end{prop}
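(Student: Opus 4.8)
Because MAP decoding minimises the average error probability, $P_{\textrm e}$ is no larger than the error probability of any fixed decoder, and I will bound the latter for a \emph{joint-typicality} decoder. Fix an arbitrarily small $\varepsilon>0$, let $\mathcal{A}_{\varepsilon}^{N}\subset\mathbb{F}_{Q}^{N}$ be the weakly typical set of $\boldsymbol{\mathsf{\Theta}}^{N}$ and $\mathcal{B}_{\varepsilon}^{M}\subset\mathbb{F}_{Q}^{M}$ the weakly typical set of the i.i.d.\ noise $\boldsymbol{\mathsf{u}}^{M}$. The decoder returns the unique $\hat{\boldsymbol{\theta}}^{N}\in\mathcal{A}_{\varepsilon}^{N}$ whose residual $\mathbf{y}^{M}-\mathbf{A}\hat{\boldsymbol{\theta}}^{N}$ lies in $\mathcal{B}_{\varepsilon}^{M}$, and declares an error if no such vector exists or if it is not unique. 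By the Shannon--McMillan--Breiman theorem (applicable since the source is stationary and ergodic) and the AEP for the i.i.d.\ noise \cite{EIT}, for $N$ large one has $\Pr\{\boldsymbol{\mathsf{\Theta}}^{N}\notin\mathcal{A}_{\varepsilon}^{N}\}<\varepsilon$, $\Pr\{\boldsymbol{\mathsf{u}}^{M}\notin\mathcal{B}_{\varepsilon}^{M}\}<\varepsilon$, $|\mathcal{A}_{\varepsilon}^{N}|\leq 2^{N(\mathcal{H}(\mathsf{\Theta})+\varepsilon)}$ and $|\mathcal{B}_{\varepsilon}^{M}|\leq 2^{M(H(p_{\mathsf{u}})+\varepsilon)}$ (if $M$ does not tend to infinity the claimed bound is trivial, since then $P_{1}(\alpha)\to\infty$).

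Conditioning on a typical source realisation $\boldsymbol{\theta}^{N}$ and a typical noise realisation $\mathbf{u}^{M}$, the true vector is itself admissible, so a decoding error forces the existence of a competing $\tilde{\boldsymbol{\theta}}^{N}\in\mathcal{A}_{\varepsilon}^{N}$, $\tilde{\boldsymbol{\theta}}^{N}\neq\boldsymbol{\theta}^{N}$, whose residual $\mathbf{u}^{M}-\mathbf{A}\mathbf{e}$, with $\mathbf{e}=\tilde{\boldsymbol{\theta}}^{N}-\boldsymbol{\theta}^{N}\neq\mathbf{0}$, also lies in $\mathcal{B}_{\varepsilon}^{M}$. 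A union bound then gives a bound on $P_{\textrm e}$, uniform in the conditioning, of the form $2\varepsilon+\sum_{\mathbf{e}}\Pr_{\mathbf{A}}\{\mathbf{u}^{M}-\mathbf{A}\mathbf{e}\in\mathcal{B}_{\varepsilon}^{M}\}$, the sum running over the nonzero $\mathbf{e}$ with $\boldsymbol{\theta}^{N}+\mathbf{e}\in\mathcal{A}_{\varepsilon}^{N}$. Since the rows of $\mathbf{A}$ are i.i.d., so are the entries of $\mathbf{A}\mathbf{e}$; an additive-character (discrete Fourier) computation over $\mathbb{F}_{Q}$, using that the law (\ref{eq:ASparistyModel}) is invariant under multiplication by any nonzero field element, shows that for every $c$
\[
\Pr\{(\mathbf{A}\mathbf{e})_{i}=c\}\ \leq\ \Pr\{(\mathbf{A}\mathbf{e})_{i}=0\}\ =\ Q^{-1}+\left(1-Q^{-1}\right)\left(1-\frac{\gamma}{1-Q^{-1}}\right)^{w(\mathbf{e})},
\]
with $w(\mathbf{e})$ the Hamming weight of $\mathbf{e}$. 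As $\mathbf{v}\mapsto\mathbf{u}^{M}-\mathbf{v}$ is a bijection, this yields $\Pr_{\mathbf{A}}\{\mathbf{u}^{M}-\mathbf{A}\mathbf{e}\in\mathcal{B}_{\varepsilon}^{M}\}\leq|\mathcal{B}_{\varepsilon}^{M}|\left(Q^{-1}+(1-Q^{-1})\left(1-\frac{\gamma}{1-Q^{-1}}\right)^{w(\mathbf{e})}\right)^{M}$.

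It remains to split the sum by the weight of $\mathbf{e}$. By (\ref{eq:gama}), $1-\frac{\gamma}{1-Q^{-1}}\in[0,1)$, so the per-pattern bound is nonincreasing in $w(\mathbf{e})$ and equals $|\mathcal{B}_{\varepsilon}^{M}|(1-\gamma)^{M}$ at $w(\mathbf{e})=1$. For the low-weight patterns $1\leq w(\mathbf{e})\leq\lfloor\alpha N\rfloor$: since $\alpha<1/2$ there are at most $\alpha N\binom{N}{\lfloor\alpha N\rfloor}(Q-1)^{\lfloor\alpha N\rfloor}\leq\alpha N\,2^{N(H_{2}(\alpha)+\alpha\log(Q-1))}$ of them, and bounding each term by $2^{M(H(p_{\mathsf{u}})+\varepsilon)}(1-\gamma)^{M}$ gives precisely $P_{1}(\alpha)$ in (\ref{eq:P1alpha}). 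For the high-weight competitors $w(\mathbf{e})>\alpha N$, hence $w(\mathbf{e})\geq\lceil\alpha N\rceil$: their number is at most $|\mathcal{A}_{\varepsilon}^{N}|\leq 2^{N(\mathcal{H}(\mathsf{\Theta})+\varepsilon)}$, monotonicity lets one replace $w(\mathbf{e})$ by $\lceil\alpha N\rceil$, and multiplying by $2^{M(H(p_{\mathsf{u}})+\varepsilon)}$ gives $P_{2}(\alpha)$ in (\ref{eq:P2alpha}). Adding the $2\varepsilon$ from the atypical events proves (\ref{eq:PeUpperNC}). The \emph{WN} case is the specialisation in which $p_{\mathsf{u}}$ is concentrated at $0$, so $H(p_{\mathsf{u}})=0$, $\mathcal{B}_{\varepsilon}^{M}=\{\mathbf{0}\}$, and the residual condition reduces to $\mathbf{A}\mathbf{e}=\mathbf{0}$.

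The delicate step is the exact evaluation of the single-coordinate collision probability $\Pr\{(\mathbf{A}\mathbf{e})_{i}=c\}$ as a function of $w(\mathbf{e})$: it rests on the character-sum identity over $\mathbb{F}_{Q}$ together with the fact that $q\mapsto q\,e_{j}$ permutes $\mathbb{F}_{Q}\setminus\{0\}$ for each nonzero $e_{j}$, so that the summands $\mathsf{A}_{ij}e_{j}$ making up $(\mathbf{A}\mathbf{e})_{i}$ are i.i.d.\ with law (\ref{eq:ASparistyModel}). The remainder is bookkeeping: that $\binom{N}{\lfloor\alpha N\rfloor}(Q-1)^{\lfloor\alpha N\rfloor}$ dominates $\sum_{w\leq\lfloor\alpha N\rfloor}\binom{N}{w}(Q-1)^{w}$ when $\alpha<1/2$, that $\binom{N}{\lfloor\alpha N\rfloor}\leq 2^{NH_{2}(\alpha)}$, and that the $\varepsilon$-slacks coming from the two typical-set size estimates are exactly the $\varepsilon$ appearing inside the exponents of (\ref{eq:P1alpha})--(\ref{eq:P2alpha}).
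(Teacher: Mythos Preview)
Your argument is correct and reaches exactly the expressions \eqref{eq:P1alpha}--\eqref{eq:P2alpha}, but it proceeds along a different path from the paper. The paper does \emph{not} pass to a suboptimal decoder: it analyses the MAP error event directly, writes the indicator $\Phi(\boldsymbol{\theta}^{N},\mathbf{u}^{M};\boldsymbol{\varphi}^{N},\mathbf{v}^{M})=1_{\,p(\boldsymbol{\varphi}^{N})p(\mathbf{v}^{M})\ge p(\boldsymbol{\theta}^{N})p(\mathbf{u}^{M})}$, and bounds it via Gallager's inequality $\Phi\le\bigl(p(\boldsymbol{\varphi}^{N})p(\mathbf{v}^{M})/p(\boldsymbol{\theta}^{N})p(\mathbf{u}^{M})\bigr)^{s}$ with $s=1$. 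This cancels the weight $p(\boldsymbol{\theta}^{N})p(\mathbf{u}^{M})$ and turns the average over typical $(\boldsymbol{\theta}^{N},\mathbf{u}^{M})$ into a \emph{count} over the typical sets, while the sum over $(\boldsymbol{\varphi}^{N},\mathbf{v}^{M})$ collapses via $\sum p(\boldsymbol{\varphi}^{N})=\sum p(\mathbf{v}^{M})=1$; the binomial factor in $P_{1}$ then comes from counting typical $\boldsymbol{\theta}^{N}$ at Hamming distance $d_{1}$ from a fixed $\boldsymbol{\varphi}^{N}$, rather than from counting competitors around a fixed $\boldsymbol{\theta}^{N}$ as you do. Your route---bounding MAP by a joint-typicality decoder and using the union bound directly---is more elementary (no Gallager lemma is needed) and mirrors the classical random-coding achievability argument; the paper's route stays with MAP and, by keeping the parameter $s$, leaves open a refinement toward sharper error exponents, although only $s=1$ is actually used here. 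Both arguments rely on the same collision estimate $\Pr\{(\mathbf{A}\mathbf{e})_{i}=0\}=Q^{-1}+(1-Q^{-1})\bigl(1-\tfrac{\gamma}{1-Q^{-1}}\bigr)^{w(\mathbf{e})}$ (the paper's Lemma~2) and on the same weight split at $\alpha N$, so the endpoints coincide. One small wording slip: it is not $\binom{N}{\lfloor\alpha N\rfloor}(Q-1)^{\lfloor\alpha N\rfloor}$ alone that dominates the low-weight sum, but that term multiplied by the number of summands, which is the $\alpha N$ factor you already include in the main estimate.
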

\begin{proof}
The proof consists of two parts. First we define the error event,
and then we analyze the probability of error. 

Since no sensing noise is considered, we have $\boldsymbol{\mathsf{x}}^{N}=\boldsymbol{\mathsf{\Theta}}^{N}$
throughout this section. The \textit{a posteriori} pmf (\ref{eq:MAPEstimatePMF})
becomes 
\begin{equation}
p\left(\boldsymbol{\theta}^{N}\mid\mathbf{y}^{M},\mathbf{A}\right)\propto\sum_{\mathbf{u}^{M}\in\mathbb{F}_{Q}^{M}}p\left(\boldsymbol{\theta}^{N}\right)p\left(\mathbf{u}^{M}\right)p\left(\mathbf{A}\right)1_{\mathbf{y}^{M}=\mathbf{A}\boldsymbol{\theta}^{N}+\mathbf{u}^{M}}.\label{eq:aposteriori_noiseless-1}
\end{equation}
Suppose that $\boldsymbol{\theta}^{N}$ (given but unknown) is the
true state vector and consider that $\mathbf{A}$ has been generated
randomly. At the sink, $\mathbf{A}$ and $\mathbf{y}^{M}$ are known.
With MAP decoding, the reconstruction $\hat{\boldsymbol{\theta}}^{N}$
in (\ref{eq:MAPEstimateDef}) is 
\begin{equation}
\hat{\boldsymbol{\theta}}^{N}=\textrm{arg}\max_{\boldsymbol{\theta}^{N}\in\mathbb{F}_{Q}^{N}}\sum_{\mathbf{u}^{M}\in\mathbb{F}_{Q}^{M}}p\left(\boldsymbol{\theta}^{N}\right)p\left(\mathbf{u}^{M}\right)p\left(\mathbf{A}\right)1_{\mathbf{y}^{M}=\mathbf{A}\boldsymbol{\theta}^{N}+\mathbf{u}^{M}}.\label{eq:ThetaHeadNC}
\end{equation}
A decoding error happens if there exists a vector $\mathbf{\boldsymbol{\varphi}}^{N}\in\mathbb{F}_{Q}^{N}\setminus\left\{ \boldsymbol{\theta}^{N}\right\} $
such that 
\begin{equation}
\sum_{\mathbf{v}^{M}\in\mathbb{F}_{Q}^{M}}p\left(\mathbf{\boldsymbol{\varphi}}^{N}\right)p\left(\mathbf{v}^{M}\right)1_{\mathbf{y}^{M}=\mathbf{A}\mathbf{\boldsymbol{\varphi}}^{N}+\mathbf{v}^{M}}\geq\sum_{\mathbf{u}^{M}\in\mathbb{F}_{Q}^{M}}p\left(\boldsymbol{\theta}^{N}\right)p\left(\mathbf{u}^{M}\right)1_{\mathbf{y}^{M}=\mathbf{A}\boldsymbol{\theta}^{N}+\mathbf{u}^{M}}.\label{eq:ErrEventNC1}
\end{equation}
For fixed $\mathbf{y}^{M}$, $\mathbf{A}$, and $\boldsymbol{\theta}^{N}$,
there is exactly one vector $\mathbf{u}^{M}$ such that $\mathbf{u}^{M}=\mathbf{y}^{M}-\mathbf{A}\boldsymbol{\theta}^{N}$.
Hence the right side of (\ref{eq:ErrEventNC1}) can be represented
as $p_{\boldsymbol{\mathsf{\Theta}}^{N}}\left(\boldsymbol{\theta}^{N}\right)p_{\boldsymbol{\mathsf{u}}^{M}}\left(\mathbf{y}^{M}-\mathbf{A}\boldsymbol{\theta}^{N}\right)$.
The subscripts for the pmfs are introduced to avoid any ambiguity
of notations. Then (\ref{eq:ErrEventNC1}) is equivalent to 
\begin{equation}
p_{\boldsymbol{\mathsf{\Theta}}^{N}}\left(\mathbf{\boldsymbol{\varphi}}^{N}\right)p_{\boldsymbol{\mathsf{u}}^{M}}\left(\mathbf{y}^{M}-\mathbf{A}\mathbf{\boldsymbol{\varphi}}^{N}\right)\geq p_{\boldsymbol{\mathsf{\Theta}}^{N}}\left(\boldsymbol{\theta}^{N}\right)p_{\boldsymbol{\mathsf{u}}^{M}}\left(\mathbf{y}^{M}-\mathbf{A}\boldsymbol{\theta}^{N}\right).\label{eq:ErrEventNC2}
\end{equation}
An alternative way to state the error event can be: For a given realization
$\boldsymbol{\mathsf{\Theta}}^{N}=\boldsymbol{\theta}^{N}$, which
implies the realization $\boldsymbol{\mathsf{u}}^{M}=\mathbf{u}^{M}=\mathbf{y}^{M}-\mathbf{A}\boldsymbol{\theta}^{N}$,
there exists a pair $(\mathbf{\boldsymbol{\varphi}}^{N},\mathbf{v}^{M})\in\mathbb{F}_{Q}^{N}\times\mathbb{F}_{Q}^{M}$
such that
\begin{equation}
\begin{cases}
\mathbf{\boldsymbol{\varphi}}^{N}\neq\boldsymbol{\theta}^{N},\\
\mathbf{A}\mathbf{\boldsymbol{\varphi}}^{N}+\mathbf{v}^{M}=\mathbf{y}^{M}=\mathbf{A}\boldsymbol{\theta}^{N}+\mathbf{u}^{M},\\
p\left(\mathbf{\boldsymbol{\varphi}}^{N}\right)p\left(\mathbf{v}^{M}\right)\geq p\left(\boldsymbol{\theta}^{N}\right)p\left(\mathbf{u}^{M}\right).
\end{cases}\label{eq:ErrEventNC3}
\end{equation}
From conditions (\ref{eq:ErrEventNC3}), one concludes that the MAP
decoder is equivalent to the maximum $Q$-probability decoder \cite{csiszar82}
in the NC case.

An upper bound of the error probability is now derived. For a fixed
$\boldsymbol{\theta}^{N}$ and $\mathbf{u}^{M}$, the conditional
error probability is denoted by $\Pr\left\{ \textrm{error}\mid\boldsymbol{\theta}^{N},\mathbf{u}^{M}\right\} $.
The average error probability is 
\begin{equation}
P_{\textrm{e}}=\sum_{\boldsymbol{\theta}^{N}\in\mathbb{F}_{Q}^{N}}\sum_{\mathbf{u}^{M}\in\mathbb{F}_{Q}^{M}}p\left(\boldsymbol{\theta}^{N},\mathbf{u}^{M}\right)\Pr\left\{ \textrm{error}\mid\boldsymbol{\theta}^{N},\mathbf{u}^{M}\right\} .\label{eq:PeNC1}
\end{equation}
Weak typicality is instrumental in the following proofs. The notations
of \cite[Definition 4.2]{FIT} are extended to stationary and ergodic
sources. For any positive real number $\varepsilon$ and some integer
$N>0$, the weakly typical set $\mathcal{A}_{\left[\mathsf{\Theta}\right]\varepsilon}^{N}\subset\mathbb{F}_{Q}^{N}$
for a stationary and ergodic source $\boldsymbol{\mathsf{\Theta}}^{N}$
is the set of vectors $\boldsymbol{\theta}^{N}\in\mathbb{F}_{Q}^{N}$
satisfying
\begin{equation}
\left|-\frac{1}{N}\log p\left(\boldsymbol{\theta}^{N}\right)-\mathcal{H}\left(\mathsf{\mathsf{\Theta}}\right)\right|\leq\varepsilon,\label{eq:TypicalTheta}
\end{equation}
where $\mathcal{H}\left(\mathsf{\mathsf{\Theta}}\right)$ is the entropy
rate of the source. Similarly, for the noise vector $\boldsymbol{\mathsf{u}}^{M}$,
define 
\begin{equation}
\mathcal{A}_{\left[\mathsf{u}\right]\varepsilon}^{M}=\left\{ \mathbf{u}^{M}\in\mathbb{F}_{Q}^{M}:\:\left|-\frac{1}{M}\log p\left(\mathbf{u}^{M}\right)-H\left(p_{\mathsf{u}}\right)\right|\leq\varepsilon\right\} .\label{eq:TypicalU}
\end{equation}
Recall that the entries of $\boldsymbol{\mathsf{u}}^{M}$ are uncorrelated,
so $\mathcal{H}\left(\mathsf{u}\right)=H\left(p_{\mathsf{u}}\right)$.
Thanks to Shannon-McMillan-Breiman theorem \cite[Sec. 16.8]{EIT},
the pmf of the general stationary and ergodic source converges. In
other words, for any $\varepsilon>0$, there exists $N_{\varepsilon}$
and $M_{\varepsilon}$ such that for all $N>N_{\varepsilon}$ and
$M>M_{\varepsilon}$,
\begin{equation}
\Pr\left\{ \left|-\frac{1}{N}\log p\left(\boldsymbol{\mathsf{\Theta}}^{N}\right)-\mathcal{H}\left(\mathsf{\Theta}\right)\right|\leq\varepsilon\right\} \geq1-\varepsilon,\label{eq:AEPTheta}
\end{equation}
and
\begin{equation}
\Pr\left\{ \left|-\frac{1}{M}\log p\left(\boldsymbol{\mathsf{u}}^{M}\right)-\mathcal{H}\left(p_{\mathsf{u}}\right)\right|\leq\varepsilon\right\} \geq1-\varepsilon.\label{eq:AEPU}
\end{equation}
We can make $\varepsilon$ arbitrary close to zero as $N\rightarrow\infty$
and $M\rightarrow\infty$. A sandwich proof of this theorem is proposed
in \cite[Sec. 16.8]{EIT}. For the sparse and uncorrelated source
as defined in (\ref{eq:PMFSI}), $\mathcal{H}\left(\mathsf{\Theta}\right)$
is equal to $H\left(p_{\mathsf{\Theta}}\right)$, the entropy of a
single source. The entropy rate of the \textit{\emph{StM}} source
is the conditional entropy $H\left(\mathsf{\Theta}_{n+r}\mid\mathbf{\boldsymbol{\mathsf{\Theta}}}_{n}^{n+r-1}\right)$. 

From (\ref{eq:AEPTheta}) and (\ref{eq:AEPU}), one has $\Pr\left\{ \boldsymbol{\mathsf{\Theta}}^{N}\in\mathcal{A}_{\left[\mathsf{\Theta}\right]\varepsilon}^{N}\right\} \geq1-\varepsilon$
and $\Pr\left\{ \boldsymbol{\mathsf{u}}^{M}\in\mathcal{A}_{\left[\mathsf{u}\right]\varepsilon}^{M}\right\} \geq1-\varepsilon$
for $N>N_{\varepsilon}$ and $M>M_{\varepsilon}$. It implies that,
for $N$ and $M$ sufficiently large, $\boldsymbol{\mathsf{\Theta}}^{N}$
and $\boldsymbol{\mathsf{u}}^{M}$ belong to the weakly typical set
$\mathcal{A}_{\left[\mathsf{\Theta}\right]\varepsilon}^{N}$ and $\mathcal{A}_{\left[\mathsf{u}\right]\varepsilon}^{M}$,
almost surely. With respect to the typicality, $\mathbb{F}_{Q}^{N}\times\mathbb{F}_{Q}^{M}$
can be divided into two parts. Define the sets $\mathcal{U}$ and
$\mathcal{U}^{\textrm{c}}$ for the pair of vectors $\left(\boldsymbol{\theta}^{N},\mathbf{u}^{M}\right)$,
such that $\mathcal{U}\cup\mathcal{U}^{\textrm{c}}=\mathbb{F}_{Q}^{N}\times\mathbb{F}_{Q}^{M}$
and 
\begin{equation}
\mathcal{U}=\left\{ \boldsymbol{\theta}^{N}\in\mathbb{F}_{Q}^{N},\:\mathbf{u}^{M}\in\mathbb{F}_{Q}^{M}:\:\boldsymbol{\theta}^{N}\in\mathcal{A}_{\left[\mathsf{\Theta}\right]\varepsilon}^{N}\textrm{ and }\mathbf{u}^{M}\in\mathcal{A}_{\left[\mathsf{u}\right]\varepsilon}^{M}\right\} ,\label{eq:U1}
\end{equation}
\begin{equation}
\mathcal{U}^{\textrm{c}}=\left\{ \boldsymbol{\theta}^{N}\in\mathbb{F}_{Q}^{N},\:\mathbf{u}^{M}\in\mathbb{F}_{Q}^{M}:\:\boldsymbol{\theta}^{N}\notin\mathcal{A}_{\left[\mathsf{\Theta}\right]\varepsilon}^{N}\textrm{ or }\mathbf{u}^{M}\notin\mathcal{A}_{\left[\mathsf{u}\right]\varepsilon}^{M}\right\} .\label{eq:UC}
\end{equation}
$\mathcal{U}$ is the joint typical set for $(\boldsymbol{\theta}^{N},\mathbf{u}^{M})$,
due to the independence of $\boldsymbol{\mathsf{\Theta}}^{N}$ and
$\boldsymbol{\mathsf{u}}^{M}$. The error probability can be bounded
as
\begin{eqnarray}
P_{\textrm{e}} & = & \sum_{\left(\boldsymbol{\theta}^{N},\mathbf{u}^{M}\right)\in\mathcal{U}}p\left(\boldsymbol{\theta}^{N}\right)p\left(\mathbf{u}^{M}\right)\cdot\Pr\left\{ \textrm{error}\mid\boldsymbol{\theta}^{N},\mathbf{u}^{M}\right\} \nonumber \\
&&+\sum_{\left(\boldsymbol{\theta}^{N},\mathbf{u}^{M}\right)\in\mathcal{U}^{\textrm{c}}}p\left(\boldsymbol{\theta}^{N}\right)p\left(\mathbf{u}^{M}\right)\cdot\Pr\left\{ \textrm{error}\mid\boldsymbol{\theta}^{N},\mathbf{u}^{M}\right\} \nonumber \\
 & \overset{(a)}{\leq} & \sum_{\left(\boldsymbol{\theta}^{N},\mathbf{u}^{M}\right)\in\mathcal{U}}p\left(\boldsymbol{\theta}^{N}\right)p\left(\mathbf{u}^{M}\right)\cdot\Pr\left\{ \textrm{error}\mid\boldsymbol{\theta}^{N},\mathbf{u}^{M}\right\} +\sum_{\left(\boldsymbol{\theta}^{N},\mathbf{u}^{M}\right)\in\mathcal{U}^{\textrm{c}}}p\left(\boldsymbol{\theta}^{N}\right)p\left(\mathbf{u}^{M}\right)\nonumber \\
 & \overset{(b)}{\leq} & \sum_{\left(\boldsymbol{\theta}^{N},\mathbf{u}^{M}\right)\in\mathcal{U}}p\left(\boldsymbol{\theta}^{N}\right)p\left(\mathbf{u}^{M}\right)\cdot\Pr\left\{ \textrm{error}\mid\boldsymbol{\theta}^{N},\mathbf{u}^{M}\right\} +2\varepsilon,\label{eq:PeNCUpp1}
\end{eqnarray}
where $(a)$ comes from $\Pr\left(\textrm{error}\mid\boldsymbol{\theta}^{N},\mathbf{u}^{M}\right)\leq1$
and $(b)$ follows from the fact that 
\begin{eqnarray}
\sum_{\left(\boldsymbol{\theta}^{N},\mathbf{u}^{M}\right)\in\mathcal{U}^{\textrm{c}}}p\left(\boldsymbol{\theta}^{N}\right)p\left(\mathbf{u}^{M}\right) & = & 1-\sum_{\left(\boldsymbol{\theta}^{N},\mathbf{u}^{M}\right)\in\mathcal{U}}p\left(\boldsymbol{\theta}^{N}\right)p\left(\mathbf{u}^{M}\right)\nonumber \\
 & = & 1-\sum_{\boldsymbol{\theta}^{N}\in\mathcal{A}_{\left[\mathsf{\Theta}\right]\varepsilon}^{N}}p\left(\boldsymbol{\theta}^{N}\right)\sum_{\mathbf{u}^{M}\in\mathcal{A}_{\left[\mathsf{u}\right]\varepsilon}^{M}}p\left(\mathbf{u}^{M}\right)\nonumber \\
 & \leq & 1-\left(1-\varepsilon\right)\left(1-\varepsilon\right)\leq2\varepsilon.\label{eq:PeNCUpp2}
\end{eqnarray}
Since $\bold{A}$ is generated randomly, define the
random event
\begin{equation}
\mathcal{E}\left(\boldsymbol{\theta}^{N},\mathbf{u}^{M};\mathbf{\boldsymbol{\varphi}}^{N},\mathbf{v}^{M}\right)=\left\{ \boldsymbol{\mathsf{A}}\boldsymbol{\theta}^{N}+\mathbf{u}^{M}=\boldsymbol{\mathsf{A}}\mathbf{\boldsymbol{\varphi}}^{N}+\mathbf{v}^{M}\right\} ,
\end{equation}
where $\left(\boldsymbol{\theta}^{N},\mathbf{u}^{M}\right)$ is the
realization of the environment state, and $(\mathbf{\boldsymbol{\varphi}}^{N},\mathbf{v}^{M})$
is the potential reconstruction result. Conditioned on $\left(\boldsymbol{\theta}^{N},\mathbf{u}^{M}\right)$,
$\Pr\left\{ \textrm{error}\mid\boldsymbol{\theta}^{N},\mathbf{u}^{M}\right\} $
is in fact the probability of the union of the events $\mathcal{E}\left(\boldsymbol{\theta}^{N},\mathbf{u}^{M};\mathbf{\boldsymbol{\varphi}}^{N},\mathbf{v}^{M}\right)$
with all the parameter pairs $(\mathbf{\boldsymbol{\varphi}}^{N},\mathbf{v}^{M})\in\mathbb{F}_{Q}^{N}\times\mathbb{F}_{Q}^{M}$
such that $\mathbf{\boldsymbol{\varphi}}^{N}\neq\boldsymbol{\theta}^{N}$
and $p\left(\mathbf{\boldsymbol{\varphi}}^{N}\right)p\left(\mathbf{v}^{M}\right)\geq p\left(\boldsymbol{\theta}^{N}\right)p\left(\mathbf{u}^{M}\right)$,
see (\ref{eq:ErrEventNC3}). The conditional error probability can
then be rewritten as
\begin{equation}
\Pr\left\{ \textrm{error}\mid\boldsymbol{\theta}^{N},\mathbf{u}^{M}\right\} =\Pr\left\{ \underset{_{p\left(\mathbf{\boldsymbol{\varphi}}^{N}\right)p\left(\mathbf{v}^{M}\right)\geq p\left(\boldsymbol{\theta}^{N}\right)p\left(\mathbf{u}^{M}\right)}^{\:\mathbf{\boldsymbol{\varphi}}^{N}\in\mathbb{F}_{Q}^{N}\setminus\left\{ \boldsymbol{\theta}^{N}\right\} ,\:\mathbf{v}^{M}\in\mathbb{F}_{Q}^{M}:}}{\bigcup}\mathcal{E}\left(\boldsymbol{\theta}^{N},\mathbf{u}^{M};\mathbf{\boldsymbol{\varphi}}^{N},\mathbf{v}^{M}\right)\right\} .\label{eq:PeConditional}
\end{equation}
Introducing (\ref{eq:PeConditional}) in (\ref{eq:PeNCUpp1}) and
applying the union bound yields 
\begin{eqnarray}
P_{\textrm{e}} & \leq & \sum_{\left(\boldsymbol{\theta}^{N},\mathbf{u}^{M}\right)\in\mathcal{U}}p\left(\boldsymbol{\theta}^{N}\right)p\left(\mathbf{u}^{M}\right)\underset{_{p\left(\mathbf{\boldsymbol{\varphi}}^{N}\right)p\left(\mathbf{v}^{M}\right)\geq p\left(\boldsymbol{\theta}^{N}\right)p\left(\mathbf{u}^{M}\right)}^{\:\mathbf{\boldsymbol{\varphi}}^{N}\in\mathbb{F}_{Q}^{N}\setminus\left\{ \boldsymbol{\theta}^{N}\right\} ,\:\mathbf{v}^{M}\in\mathbb{F}_{Q}^{M}:}}{\sum}\Pr\left\{ \mathcal{E}\left(\boldsymbol{\theta}^{N},\mathbf{u}^{M};\mathbf{\boldsymbol{\varphi}}^{N},\mathbf{v}^{M}\right)\right\} +2\varepsilon\nonumber \\
 & = &
 \sum_{\left(\boldsymbol{\theta}^{N},\mathbf{u}^{M}\right)\in\mathcal{U}}p\left(\boldsymbol{\theta}^{N}\right)p\left(\mathbf{u}^{M}\right)\nonumber \\
&&\cdot \underset{_{\quad\mathbf{v}^{M}\in\mathbb{F}_{Q}^{M}}^{\mathbf{\boldsymbol{\varphi}}^{N}\in\mathbb{F}_{Q}^{N}\setminus\left\{
 \boldsymbol{\theta}^{N}\right\} }}{\sum}\Phi\left(\boldsymbol{\theta}^{N},\mathbf{u}^{M};\mathbf{\boldsymbol{\varphi}}^{N},\mathbf{v}^{M}\right)\Pr\left\{ \mathcal{E}\left(\boldsymbol{\theta}^{N},\mathbf{u}^{M};\mathbf{\boldsymbol{\varphi}}^{N},\mathbf{v}^{M}\right)\right\} +2\varepsilon,\label{eq:PeNCUpp3}
\end{eqnarray}
where 
\begin{equation}
\Phi\left(\boldsymbol{\theta}^{N},\mathbf{u}^{M};\mathbf{\boldsymbol{\varphi}}^{N},\mathbf{v}^{M}\right)=\begin{cases}
1 & \textrm{ if }p\left(\mathbf{\boldsymbol{\varphi}}^{N}\right)p\left(\mathbf{v}^{M}\right)\geq p\left(\boldsymbol{\theta}^{N}\right)p\left(\mathbf{u}^{M}\right),\\
0 & \textrm{ if }p\left(\mathbf{\boldsymbol{\varphi}}^{N}\right)p\left(\mathbf{v}^{M}\right)<p\left(\boldsymbol{\theta}^{N}\right)p\left(\mathbf{u}^{M}\right).
\end{cases}\label{eq:IndicatorNC}
\end{equation}
Now consider the following lemma.

\medskip{}
\noindent \textbf{Lemma 1} (Upper bound of $\Phi$). \textit{Consider
some $s\in\mathbb{R}^{+}$ with $s\leq1$. For any $\boldsymbol{\theta}^{N},\mathbf{\boldsymbol{\varphi}}^{N}$
in }$\mathbb{F}_{Q}^{N}$\textit{ and $\mathbf{u}^{M},\mathbf{v}^{M}$
in $\mathbb{F}_{Q}^{M}$, the following inequality holds},
\begin{equation}
\Phi\left(\boldsymbol{\theta}^{N},\mathbf{\boldsymbol{\varphi}}^{N},\mathbf{u}^{M},\mathbf{v}^{M}\right)\leq\left(\frac{p\left(\mathbf{\boldsymbol{\varphi}}^{N}\right)p\left(\mathbf{v}^{M}\right)}{p\left(\boldsymbol{\theta}^{N}\right)p\left(\mathbf{u}^{M}\right)}\right)^{s}.\label{eq:BoundGallager}
\end{equation}

Lemma 1 is a part of Gallager's derivation of error exponents in \cite[Sec. 5.6]{gallager}.
Introducing (\ref{eq:BoundGallager}) with $s=1$ into (\ref{eq:PeNCUpp3}),
one gets 
\begin{equation}
P_{\textrm{e}}\leq\sum_{\left(\boldsymbol{\theta}^{N},\mathbf{u}^{M}\right)\in\mathcal{U}}\underset{_{\quad\mathbf{v}^{M}\in\mathbb{F}_{Q}^{M}}^{\mathbf{\boldsymbol{\varphi}}^{N}\in\mathbb{F}_{Q}^{N}\setminus\left\{ \boldsymbol{\theta}^{N}\right\} }}{\sum}p\left(\mathbf{\boldsymbol{\varphi}}^{N}\right)p\left(\mathbf{v}^{M}\right)\Pr\left\{ \mathcal{E}\left(\boldsymbol{\theta}^{N},\mathbf{u}^{M};\mathbf{\boldsymbol{\varphi}}^{N},\mathbf{v}^{M}\right)\right\} +2\varepsilon.\label{eq:PeNCUpp4}
\end{equation}
In (\ref{eq:PeNCUpp4}), 
\begin{equation}
\Pr\left\{ \mathcal{E}\left(\boldsymbol{\theta}^{N},\mathbf{u}^{M};\mathbf{\boldsymbol{\varphi}}^{N},\mathbf{v}^{M}\right)\right\} 
=\Pr\left\{ \boldsymbol{\mathsf{A}}\mathbf{\boldsymbol{\mu}}^{N}=\mathbf{s}^{M}\mid\mathbf{\boldsymbol{\mu}}^{N}\neq\mathbf{0},\mathbf{s}^{M}\right\}
\end{equation}
with $\mathbf{\boldsymbol{\mu}}^{N}=\mathbf{\boldsymbol{\varphi}}^{N}-\boldsymbol{\theta}^{N}\in\mathbb{F}_{Q}^{N}\setminus\left\{ \mathbf{0}\right\} $,
and $\mathbf{s}^{M}=\mathbf{v}^{M}-\mathbf{u}^{M}\in\mathbb{F}_{Q}^{M}$.
This probability depends on the sparsity of $\mathbf{\boldsymbol{\mu}}^{N}$
and of $\mathbf{s}^{M}$, let $d_{1}=\left\Vert \mathbf{\boldsymbol{\mu}}^{N}\right\Vert _{0}$
and $d_{2}=\left\Vert \mathbf{s}^{M}\right\Vert _{0}$. Both $d_{1}$
and $d_{2}$ are integers such that $1\leq d_{1}\leq N$ and $0\leq d_{2}\leq M$.
Define the multivariable function 
\begin{equation}
f\left(d_{1},d_{2};\gamma,Q,M\right)=\Pr\left\{ \boldsymbol{\mathsf{A}}\mathbf{\boldsymbol{\mu}}^{N}=\mathbf{s}^{M}\mid\left\Vert \mathbf{\boldsymbol{\mu}}^{N}\right\Vert _{0}=d_{1},\left\Vert \mathbf{s}^{M}\right\Vert _{0}=d_{2}\right\} ,\label{eq:fd1d2}
\end{equation}
where $\gamma$, $Q$, and $M$ are the parameters of the random matrix
$\boldsymbol{\mathsf{A}}$. 
\begin{equation}
\Pr\left\{ \boldsymbol{\mathsf{A}}\mathbf{\boldsymbol{\mu}}^{N}=\mathbf{0}\mid\mathbf{\boldsymbol{\mu}}^{N}\neq\mathbf{0}\right\} =f\left(d_{1},0;\gamma,Q,M\right) \nonumber
\end{equation}
has been evaluated in \cite{rankmin} and \cite{seong}. We provide
a simple extension of this result for $d_{2}\neq0$.
\noindent \medskip{}

\noindent \textbf{Lemma 2} (Properties of $f\left(d_{1},d_{2};\gamma,Q,M\right)$).
\textit{The function $f\left(d_{1},d_{2};\gamma,Q,M\right)$, defined
in (\ref{eq:fd1d2}), is non-increasing in $d_{2}$ for a given $d_{1}$
and 
\begin{equation}
f\left(d_{1},d_{2};\gamma,Q,M\right)\leq f\left(d_{1},0;\gamma,Q,M\right)=\left(Q^{-1}+\left(1-\frac{\gamma}{1-Q^{-1}}\right)^{d_{1}}\left(1-Q^{-1}\right)\right)^{M}.\label{eq:fd1d2Up}
\end{equation}
Moreover $f\left(d_{1},0;\gamma,Q,M\right)$ is non-increasing in $d_{1}$
and 
\begin{equation}
f\left(d_{1},0;\gamma,Q,M\right)\leq f\left(1,0;\gamma,Q,M\right)=\left(1-\gamma\right)^{M}.\label{eq:fd10}
\end{equation}
If $\gamma=1-Q^{-1}$, which corresponds to a uniformly distributed
sensing matrix, 
\begin{equation}
f\left(d_{1},d_{2};\gamma,Q,M\right)=Q^{-M}
\end{equation}
is
constant.\bigskip{}
}

\noindent See Appendix~\ref{sec:Proof_lemma1} for the proof details.
Using Lemma 2, (\ref{eq:PeNCUpp4}) can be expressed as 
\begin{eqnarray}
P_{\textrm{e}} & \overset{(a)}{\leq} & \sum_{d_{1}=1}^{N}\sum_{d_{2}=0}^{M}\underset{_{\mathbf{v}^{M}\in\mathbb{F}_{Q}^{M}:\left\Vert \mathbf{u}^{M}-\mathbf{v}^{M}\right\Vert _{0}=d_{2}}^{_{\,\mathbf{\boldsymbol{\varphi}}^{N}\in\mathbb{F}_{Q}^{N}:\left\Vert \mathbf{\boldsymbol{\varphi}}^{N}-\boldsymbol{\theta}^{N}\right\Vert _{0}=d_{1}}^{\qquad\left(\boldsymbol{\theta}^{N},\mathbf{u}^{M}\right)\in\mathcal{U}}}}{\sum}p\left(\mathbf{\boldsymbol{\varphi}}^{N}\right)p\left(\mathbf{v}^{M}\right)f\left(d_{1},d_{2};\gamma,Q,M\right)+2\varepsilon\nonumber \\
 & \overset{(b)}{\leq} & \sum_{d_{1}=1}^{N}\underset{^{_{\mathbf{\boldsymbol{\varphi}}^{N}\in\mathbb{F}_{Q}^{N}:\left\Vert \mathbf{\boldsymbol{\varphi}}^{N}-\boldsymbol{\theta}^{N}\right\Vert _{0}=d_{1}}^{\qquad\left(\boldsymbol{\theta}^{N},\mathbf{u}^{M}\right)\in\mathcal{U}}}}{\sum}p\left(\mathbf{\boldsymbol{\varphi}}^{N}\right)f\left(d_{1},0;\gamma,Q,M\right)\left(\sum_{\mathbf{v}^{M}\in\mathbb{F}_{Q}^{M}}p\left(\mathbf{v}^{M}\right)\right)+2\varepsilon\nonumber \\
 & \overset{(c)}{\leq} & \sum_{d_{1}=1}^{\left\lfloor \alpha N\right\rfloor }\underset{^{_{\mathbf{\boldsymbol{\varphi}}^{N}\in\mathbb{F}_{Q}^{N}:\left\Vert \mathbf{\boldsymbol{\varphi}}^{N}-\boldsymbol{\theta}^{N}\right\Vert _{0}=d_{1}}^{\qquad\left(\boldsymbol{\theta}^{N},\mathbf{u}^{M}\right)\in\mathcal{U}}}}{\sum}p\left(\mathbf{\boldsymbol{\varphi}}^{N}\right)f\left(1,0;\gamma,Q,M\right)\nonumber \\
 &  & +\sum_{d_{1}=\left\lceil \alpha N\right\rceil }^{N}\underset{^{_{\mathbf{\boldsymbol{\varphi}}^{N}\in\mathbb{F}_{Q}^{N}:\left\Vert \mathbf{\boldsymbol{\varphi}}^{N}-\boldsymbol{\theta}^{N}\right\Vert _{0}=d_{1}}^{\qquad\left(\boldsymbol{\theta}^{N},\mathbf{u}^{M}\right)\in\mathcal{U}}}}{\sum}p\left(\mathbf{\boldsymbol{\varphi}}^{N}\right)f\left(\left\lceil \alpha N\right\rceil ,0;\gamma,Q,M\right)+2\varepsilon,\label{eq:PeNCUpp5}
\end{eqnarray}
where $(a)$ is by the classification of $\mathbf{\boldsymbol{\varphi}}^{N}$
and $\mathbf{v}^{M}$ according to the $\ell_{0}$ norm of their difference
with $\boldsymbol{\theta}^{N}$ and $\mathbf{u}^{M}$ respectively
and $(b)$ is obtained using the bound (\ref{eq:fd1d2Up}) and using
$\sum_{\mathbf{v}^{M}\in\mathbb{F}_{Q}^{M}}p\left(\mathbf{v}^{M}\right)=1$.
The splitting in $(c)$ permits $f\left(d_{1},0;\gamma,Q,M\right)$
to be bounded in different cases, this idea comes from \cite{rankmin}
and is also meaningful here. The parameter $\alpha$ is a positive
real number with $0<\alpha<0.5$. The way to choose $\alpha$ is discussed
in Section~\ref{sec:SufficientConditionNC-2}. The two terms in (\ref{eq:PeNCUpp5}),
denoted by $P_{\mathcal{U}_{1}}\left(\alpha\right)$ and $P_{\mathcal{U}_{2}}\left(\alpha\right)$,
need to be considered separately. For the first term $P_{\mathcal{U}_{1}}\left(\alpha\right)$,
we have
\begin{eqnarray}
P_{\mathcal{U}_{1}}\left(\alpha\right) & = & f\left(1,0;\gamma,Q,M\right)\sum_{d_{1}=1}^{\left\lfloor \alpha N\right\rfloor }\underset{^{_{\mathbf{\boldsymbol{\varphi}}^{N}\in\mathbb{F}_{Q}^{N}:\left\Vert \mathbf{\boldsymbol{\varphi}}^{N}-\boldsymbol{\theta}^{N}\right\Vert _{0}=d_{1}}^{\qquad\left(\boldsymbol{\theta}^{N},\mathbf{u}^{M}\right)\in\mathcal{U}}}}{\sum}p\left(\mathbf{\boldsymbol{\varphi}}^{N}\right)\nonumber \\
 & \overset{(a)}{=} & \left(1-\gamma\right)^{M}\sum_{\mathbf{u}^{M}\in\mathcal{A}_{\left[\mathsf{u}\right]\varepsilon}^{M}}\sum_{d_{1}=1}^{\left\lfloor \alpha N\right\rfloor }\sum_{\mathbf{\boldsymbol{\varphi}}^{N}\in\mathbb{F}_{Q}^{N}}p\left(\mathbf{\boldsymbol{\varphi}}^{N}\right)\underset{^{_{\left\Vert \boldsymbol{\theta}^{N}-\mathbf{\boldsymbol{\varphi}}^{N}\right\Vert _{0}=d_{1}}^{\quad\boldsymbol{\theta}^{N}\in\mathcal{A}_{\left[\mathsf{\Theta}\right]\varepsilon}^{N}:}}}{\sum}1\nonumber \\
 & \overset{(b)}{\leq} & \left(1-\gamma\right)^{M}\sum_{\mathbf{u}^{M}\in\mathcal{A}_{\left[\mathsf{u}\right]\varepsilon}^{M}}\sum_{d_{1}=1}^{\left\lfloor \alpha N\right\rfloor }\sum_{\mathbf{\boldsymbol{\varphi}}^{N}\in\mathbb{F}_{Q}^{N}}p\left(\mathbf{\boldsymbol{\varphi}}^{N}\right)\left|\left\{ \boldsymbol{\theta}^{N}\in\mathbb{F}_{Q}^{N}:\left\Vert \boldsymbol{\theta}^{N}-\mathbf{\boldsymbol{\varphi}}^{N}\right\Vert _{0}=d_{1}\right\} \right|\nonumber \\
 & \overset{(c)}{\leq} & \left(1-\gamma\right)^{M}\sum_{\mathbf{u}^{M}\in\mathcal{A}_{\left[\mathsf{u}\right]\varepsilon}^{M}}\sum_{d_{1}=1}^{\left\lfloor \alpha N\right\rfloor }2^{NH_{2}\left(\frac{d_{1}}{N}\right)}\left(Q-1\right)^{d_{1}}\nonumber \\
 & \overset{(d)}{\leq} & \left(1-\gamma\right)^{M}\cdot\left|\mathcal{A}_{\left[\mathsf{u}\right]\varepsilon}^{M}\right|\cdot\alpha N\cdot2^{NH_{2}\left(\alpha\right)}\left(Q-1\right)^{\alpha N}\nonumber \\
 & \overset{(e)}{\leq} & 2^{-N\left(-\frac{M}{N}\left(H\left(p_{\mathsf{u}}\right)+\log\left(1-\gamma\right)+\varepsilon\right)-H_{2}\left(\alpha\right)-\alpha\log\left(Q-1\right)-\frac{\log\left(\alpha N\right)}{N}\right)}=P_{1}\left(\alpha\right)\label{eq:PeNCUpp6}
\end{eqnarray}
where $(a)$ is by changing the order of summation and $(b)$ is obtained
considering all $\boldsymbol{\theta}^{N}\in\mathbb{F}_{Q}^{N}$ and
not only typical sequences. The bound $(c)$ is obtained noticing
that 
\begin{eqnarray}
\left|\left\{ \boldsymbol{\theta}^{N}\in\mathbb{F}_{Q}^{N}:\left\Vert \boldsymbol{\theta}^{N}-\mathbf{\boldsymbol{\varphi}}^{N}\right\Vert _{0}=d_{1}\right\} \right| & = & \left(_{d_{1}}^{N}\right)\left(Q-1\right)^{d_{1}}\nonumber \\
 & \leq & 2^{NH_{2}\left(\frac{d_{1}}{N}\right)}\left(Q-1\right)^{d_{1}},\label{eq:PeNCUpp7}
\end{eqnarray}
where $H_{2}\left(p\right)$ denotes the entropy of a Bernoulli-$p$
source and $\sum_{\mathbf{\boldsymbol{\varphi}}^{N}\in\mathbb{F}_{Q}^{N}}p\left(\mathbf{\boldsymbol{\varphi}}^{N}\right)=1$;
$(d)$ is because of the monotonicity of the function $H_{2}\left(\frac{d_{1}}{N}\right)$,
which is increasing in $d_{1}$ as $d_{1}\leq\left\lfloor \alpha N\right\rfloor <N/2$;
$\left(e\right)$ comes from \cite[Theorem 3.1.2]{EIT}, the upper
bound of the size of $\mathcal{A}_{\left[\mathsf{u}\right]\varepsilon}^{M}$,
\emph{i.e.}, 
\begin{equation}
\left|\mathcal{A}_{\left[\mathsf{u}\right]\varepsilon}^{M}\right|\leq2^{M\left(H\left(p_{\mathsf{u}}\right)+\varepsilon\right)},\label{eq:AEPCarU}
\end{equation}
for $M>M_{\varepsilon}$. Similarly, for $N>N_{\varepsilon}$, one
has 
\begin{equation}
\left|\mathcal{A}_{\left[\mathsf{\Theta}\right]\varepsilon}^{N}\right|\leq2^{N\left(\mathcal{H}\left(\mathsf{\mathsf{\Theta}}\right)+\varepsilon\right)}.\label{eq:AEPCarTheta}
\end{equation}
Now we turn to $P_{\mathcal{U}_{2}}\left(\alpha\right)$, 
\begin{eqnarray}
P_{\mathcal{U}_{2}}\left(\alpha\right) & = & \sum_{d_{1}=\left\lceil \alpha
N\right\rceil
}^{N}\underset{^{_{\mathbf{\boldsymbol{\varphi}}^{N}\in\mathbb{F}_{Q}^{N}:\left\Vert
\mathbf{\boldsymbol{\varphi}}^{N}-\boldsymbol{\theta}^{N}\right\Vert
_{0}=d_{1}}^{\quad\,\,\,\,\,\,\,\,\left(\boldsymbol{\theta}^{N},\mathbf{u}^{M}\right)\in\mathcal{U}}}}{\sum}p\left(\mathbf{\boldsymbol{\varphi}}^{N}\right)f\left(\left\lceil \alpha N\right\rceil ,0;\gamma,Q,M\right)\nonumber \\
 & \overset{(a)}{\leq} & \sum_{\left(\boldsymbol{\theta}^{N},\mathbf{u}^{M}\right)\in\mathcal{U}}\sum_{\mathbf{\boldsymbol{\varphi}}^{N}\in\mathbb{F}_{Q}^{N}}p\left(\mathbf{\boldsymbol{\varphi}}^{N}\right)f\left(\left\lceil \alpha N\right\rceil ,0;\gamma,Q,M\right)\nonumber \\
 & = & \left|\mathcal{A}_{\left[\mathsf{\Theta}\right]\varepsilon}^{N}\right|\cdot\left|\mathcal{A}_{\left[\mathsf{u}\right]\varepsilon}^{M}\right|\cdot\left(Q^{-1}+\left(1-\frac{\gamma}{1-Q^{-1}}\right)^{\left\lceil \alpha N\right\rceil }\left(1-Q^{-1}\right)\right)^{M}\nonumber \\
 & \overset{(b)}{\leq} & 2^{-N\left(-\mathcal{H}\left(\mathsf{\mathsf{\Theta}}\right)-\frac{M}{N}\left(H\left(p_{\mathsf{u}}\right)+\log\left(Q^{-1}+\left(1-\frac{\gamma}{1-Q^{-1}}\right)^{\left\lceil \alpha N\right\rceil }\left(1-Q^{-1}\right)\right)+\varepsilon\right)-\varepsilon\right)}=P_{2}\left(\alpha\right),\label{eq:PeNCUpp8}
\end{eqnarray}
where $(a)$ is by ignoring the constraint that $\left\Vert \mathbf{\boldsymbol{\varphi}}^{N}-\boldsymbol{\theta}^{N}\right\Vert _{0}=d_{1}$,
and $(b)$ is by the upper bounds of $\left|\mathcal{A}_{\left[\mathsf{\Theta}\right]\varepsilon}^{N}\right|$
and $\left|\mathcal{A}_{\left[\mathsf{u}\right]\varepsilon}^{M}\right|$,
as before. Equations (\ref{eq:PeNCUpp5}), (\ref{eq:PeNCUpp6}), and
(\ref{eq:PeNCUpp8}) complete the proof.
\end{proof}

\subsection{Sufficient Condition}

\label{sec:SufficientConditionNC-2}

In this section, sufficient conditions for the WN and NC cases are
derived to get a vanishing upper bound of error probability.
\begin{prop}[Sufficient condition, WN and NC cases]
\noindent \label{Prop:SufficientCond NC}Assume the absence of sensing
noise and consider a sensing matrix with sparsity factor $\gamma$.
For some $\delta\in\mathbb{R}^{+}$ (which may be taken arbitrary close
to zero), there exists small positive real numbers $\varepsilon$,
$\xi$, and integers $N_{\delta}$, $M_{\varepsilon}$ such that $\forall N>N_{\delta}$
and $M>M_{\varepsilon}$, if the following conditions hold \end{prop}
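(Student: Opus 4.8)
The plan is to turn the upper bound $P_{\textrm{e}}\le P_1(\alpha)+P_2(\alpha)+2\varepsilon$ of Proposition~\ref{Prop:UpperBoundNC} into conditions on $M/N$ and on $\gamma$ by forcing both $P_1(\alpha)$ and $P_2(\alpha)$ to vanish as $N\to\infty$. Writing $P_1(\alpha)=2^{-NE_1}$ and $P_2(\alpha)=2^{-NE_2}$ for the two exponents in (\ref{eq:P1alpha}) and (\ref{eq:P2alpha}), it suffices to exhibit $\varepsilon>0$, a slack $\xi>0$, and a value of the free parameter $\alpha\in(0,1/2)$ for which $E_1\ge\xi$ and $E_2\ge\xi$ for all sufficiently large $N$; then $P_1(\alpha)+P_2(\alpha)\le 2^{1-N\xi}$, and taking $\varepsilon<\delta/4$ and $N>N_\delta$ with $N_\delta$ large enough would give $P_{\textrm{e}}<\delta$. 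So the proof reduces to bounding the two exponents from below.

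I would first handle $E_2$. Since $0<\gamma\le 1-Q^{-1}$, the quantity $\bigl(1-\tfrac{\gamma}{1-Q^{-1}}\bigr)^{\lceil\alpha N\rceil}$ is non-negative and non-increasing in $N$, so past a threshold it drops below any prescribed level and $\log\!\bigl(Q^{-1}+\bigl(1-\tfrac{\gamma}{1-Q^{-1}}\bigr)^{\lceil\alpha N\rceil}(1-Q^{-1})\bigr)$ stays within $\varepsilon$ of $-\log Q$. Substituting this in (\ref{eq:P2alpha}) yields, for $N$ large, $E_2\ge -\mathcal H(\mathsf\Theta)+\tfrac MN\bigl(\log Q-H(p_{\mathsf u})-2\varepsilon\bigr)-\varepsilon$, so $E_2\ge\xi$ is guaranteed once $H(p_{\mathsf u})<\log Q$ and $\tfrac MN>\dfrac{\mathcal H(\mathsf\Theta)+\varepsilon+\xi}{\log Q-H(p_{\mathsf u})-2\varepsilon}$. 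These two requirements match, up to the vanishing slack, the necessary conditions (\ref{eq:necessary2}) and (\ref{eq:necessary3}) specialized to the NC case.

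I would then treat $E_1$. Using $\tfrac1N\log(\alpha N)\to0$, for $N$ large the exponent in (\ref{eq:P1alpha}) obeys $E_1\ge -\tfrac MN\bigl(H(p_{\mathsf u})+\log(1-\gamma)+\varepsilon\bigr)-H_2(\alpha)-\alpha\log(Q-1)-\varepsilon$. For the leading term to be positive one needs $H(p_{\mathsf u})+\log(1-\gamma)+\varepsilon<0$, which (for $\varepsilon$ small) amounts to the constraint on the sparsity factor
\begin{equation}
\gamma>1-2^{-H(p_{\mathsf u})},
\end{equation}
reducing to $\gamma>0$ in the WN case, where $H(p_{\mathsf u})=0$. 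Under this constraint $E_1\ge\xi$ holds provided $\tfrac MN>\dfrac{H_2(\alpha)+\alpha\log(Q-1)+\varepsilon+\xi}{-H(p_{\mathsf u})-\log(1-\gamma)-\varepsilon}$, and since $H_2(\alpha)+\alpha\log(Q-1)\to0$ as $\alpha\to0^+$, this bound can be made arbitrarily small by choosing $\alpha$ small; for such $\alpha$ it is dominated by the bound coming from $E_2$. Thus the sufficient conditions I would state are $H(p_{\mathsf u})<\log Q$, the sparsity bound above, and $\tfrac MN>\mathcal H(\mathsf\Theta)/(\log Q-H(p_{\mathsf u}))+\xi$.

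The main obstacle is organizational rather than analytical: the small parameters must be chosen in the right order --- $\delta$ first, then $\varepsilon$ and $\xi$ (small relative to $-H(p_{\mathsf u})-\log(1-\gamma)$ and to $\log Q-H(p_{\mathsf u})$), then $\alpha$ small enough that the $E_1$-bound on $M/N$ falls below the $E_2$-bound, and only then $N_\delta$ and $M_\varepsilon$ --- so that all the inequalities above hold at once. In particular $\bigl(1-\tfrac{\gamma}{1-Q^{-1}}\bigr)^{\lceil\alpha N\rceil}$ has to be controlled uniformly for every $N>N_\delta$, which is why I would invoke its monotonicity in $N$ rather than a mere pointwise limit. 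I would also check the boundary case $\gamma=1-Q^{-1}$ of a uniformly distributed sensing matrix: there the term vanishes identically for $N\ge1/\alpha$ and $\log(1-\gamma)=-\log Q$, so the sparsity bound collapses precisely to $H(p_{\mathsf u})<\log Q$, and no clash with the $P_2(\alpha)$ condition occurs.
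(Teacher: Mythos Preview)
Your proposal is correct and follows essentially the same route as the paper: you start from the bound $P_{\textrm e}\le P_1(\alpha)+P_2(\alpha)+2\varepsilon$ of Proposition~\ref{Prop:UpperBoundNC}, force each exponent to be positive, obtain the sparsity constraint $\gamma>1-2^{-H(p_{\mathsf u})-\varepsilon}$ from $E_1$ and the main compression-ratio bound from $E_2$, and then pick $\alpha$ small enough that the $E_1$ constraint on $M/N$ is dominated by the $E_2$ constraint. The only differences from the paper are cosmetic --- the order in which $E_1$ and $E_2$ are treated, your use of a single margin $\xi$ on both exponents versus the paper's $\xi=\log(1+\sigma)+\varepsilon$, and your explicit remark on monotonicity and on the boundary case $\gamma=1-Q^{-1}$.
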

\begin{itemize}
\item \noindent \textit{the communication noise is not uniformly distributed,
i.e., 
\begin{equation}
H\left(p_{\mathsf{u}}\right)<\log Q -\xi,\label{eq:SufficientNoise}
\end{equation}
}
\item \noindent \textit{the sparsity factor is lower bounded 
\begin{equation}
\gamma>1-2^{-H\left(p_{\mathsf{u}}\right)-\varepsilon},\label{eq:SufficientGamma}
\end{equation}
}
\item \noindent \textit{the compression ratio $M/N$ satisfies 
\begin{equation}
\frac{M}{N}>\frac{\mathcal{H}\left(\mathsf{\mathsf{\Theta}}\right)+\varepsilon}{\log Q-H\left(p_{\mathsf{u}}\right)-\xi},\label{eq:SufficientNC Ratio}
\end{equation}
}
\end{itemize}
\textit{then one has $P_{\textrm{e}}\leq\delta$ using MAP decoding.
As $N\rightarrow\infty$ and $M\rightarrow\infty$, $\varepsilon$
and $\xi$ can be chosen arbitrary close to zero. }
\begin{proof}
Both $P_{1}\left(\alpha\right)$ and $P_{2}\left(\alpha\right)$ need
to be vanishing for increasing $N$ and $M$. The exponent of each
term is considered respectively. Define, from (\ref{eq:PeNCUpp6}),
\begin{equation}
E_{1}^{\textrm{NC}}=-\frac{M}{N}\left(H\left(p_{\mathsf{u}}\right)+\log\left(1-\gamma\right)+\varepsilon\right)-H_{2}\left(\alpha\right)-\alpha\log\left(Q-1\right)-\frac{\log\left(\alpha N\right)}{N}.\label{eq:E1NC}
\end{equation}
Then $\lim_{N\rightarrow\infty}2^{-NE_{1}^{\textrm{NC}}}=0$ if $E_{1}^{\textrm{NC}}>0$.
Thus, if $E_{1}^{\textrm{NC}}>0$, for any $\tau_{1}\in\mathbb{R}^{+}$
arbitrarily small, $\exists N_{\tau_{1}}$ such that $\forall N>N_{\tau_{1}}$,
one has $P_{1}\left(\alpha\right)<\tau_{1}$.

Notice that if $H\left(p_{\mathsf{u}}\right)+\log\left(1-\gamma\right)+\varepsilon\geq0$,
$E_{1}^{\textrm{NC}}$ is negative, thus one should first have 
\begin{equation}
H\left(p_{\mathsf{u}}\right)+\log\left(1-\gamma\right)+\varepsilon<0,\label{eq:SuffNC1}
\end{equation}
leading to (\ref{eq:SufficientGamma}). With this condition, $E_{1}^{\textrm{NC}}>0$
leads to 
\begin{equation}
\frac{M}{N}>\frac{H_{2}\left(\alpha\right)+\alpha\log\left(Q-1\right)+\frac{\log\left(\alpha N\right)}{N}}{\log\frac{1}{1-\gamma}-H\left(p_{\mathsf{u}}\right)-\varepsilon}.\label{eq:SuffNC2}
\end{equation}
Similarly, define from (\ref{eq:PeNCUpp8})
\begin{equation}
E_{2}^{\textrm{NC}}=-\mathcal{H}\left(\mathsf{\mathsf{\Theta}}\right)-\frac{M}{N}\left(H\left(p_{\mathsf{u}}\right)+\log\left(Q^{-1}+\left(1-\frac{\gamma}{1-Q^{-1}}\right)^{\left\lceil \alpha N\right\rceil }\left(1-Q^{-1}\right)\right)+\varepsilon\right)-\varepsilon.\label{eq:E2NC}
\end{equation}
Again, if $E_{2}^{\textrm{NC}}>0$, for any $\tau_{2}\in\mathbb{R}^{+}$
arbitrarily small, $\exists N_{\tau_{2}}\in\mathbb{N}^{+}$ such that
$\forall N>N_{\tau_{2}}$, one has $P_{2}\left(\alpha\right)<\tau_{2}$.
Since $0<\gamma\leq1-Q^{-1}$, one gets $0\leq1-\frac{\gamma}{1-Q^{-1}}<1$
and 
\begin{equation}
\lim_{N\rightarrow\infty}\left(1-\frac{\gamma}{1-Q^{-1}}\right)^{\left\lceil \alpha N\right\rceil }=0.\label{eq:SuffNC3}
\end{equation}
Thus for $\sigma\in\mathbb{R}^{+}$ arbitrarily small, there exists
an $N_{\sigma}$ such that for $\forall N>N_{\sigma}$, 
\begin{equation}
\left(1-\frac{\gamma}{1-Q^{-1}}\right)^{\left\lceil \alpha N\right\rceil }\left(1-Q^{-1}\right)<\sigma Q^{-1}.\label{eq:SuffNC4}
\end{equation}
Hence $E_{2}^{\textrm{NC}}$ in (\ref{eq:E2NC}) can be lower bounded
by 
\begin{equation}
E_{2}^{\textrm{NC}}>-\mathcal{H}\left(\mathsf{\mathsf{\Theta}}\right)-\frac{M}{N}\left(H\left(p_{\mathsf{u}}\right)+\log\left(Q^{-1}+\sigma Q^{-1}\right)+\varepsilon\right)-\varepsilon,\label{eq:SuffNC5}
\end{equation}
for $N>N_{\sigma}$. If this lower bound is positive, then $E_{2}^{\textrm{NC}}$
is positive. Again, if $H\left(p_{\mathsf{u}}\right)-\log
Q+\log(1+\sigma)+\varepsilon\leq 0$, one
obtains a negative lower bound for $E_{2}^{\textrm{NC}}$ from (\ref{eq:SuffNC5}).
Thus, one deduces (\ref{eq:SufficientNoise}) in
Proposition~\ref{Prop:SufficientCond NC}, with
\begin{equation}
\xi=\log\left(1+\sigma\right)+\varepsilon.\label{eq:Sigma}
\end{equation}
From (\ref{eq:SufficientNoise}), to get a positive lower bound
for (\ref{eq:SuffNC5}), one should have 
\begin{equation}
\frac{M}{N}>\frac{\mathcal{H}\left(\mathsf{\mathsf{\Theta}}\right)+\varepsilon}{\log Q-H\left(p_{\mathsf{u}}\right)-\log\left(1+\sigma\right)-\varepsilon}.\label{eq:SuffNC6}
\end{equation}
From \eqref{eq:SuffNC6} and \eqref{eq:Sigma}, 
with $\xi\rightarrow0$ as $N\rightarrow\infty$, one gets (\ref{eq:SufficientNC Ratio})
in Proposition~\ref{Prop:SufficientCond NC}.

From (\ref{eq:SuffNC2}) and (\ref{eq:SuffNC6}), one obtains 
\begin{equation}
\frac{M}{N}>\max\left\{ \frac{H_{2}\left(\alpha\right)+\alpha\log\left(Q-1\right)+\frac{\log\left(\alpha N\right)}{N}}{\log\frac{1}{1-\gamma}-H\left(p_{\mathsf{u}}\right)-\varepsilon},\:\frac{\mathcal{H}\left(\mathsf{\mathsf{\Theta}}\right)+\varepsilon}{\log Q-H\left(p_{\mathsf{u}}\right)-\xi}\right\} .\label{eq:SuffNC7}
\end{equation}
The value of $\alpha$ should be chosen such that the lower bound
(\ref{eq:SuffNC7}) on $M/N$ is minimum. One may compare (\ref{eq:SuffNC7})
with the necessary condition (\ref{eq:necessary3}). The second term
of (\ref{eq:SuffNC7}) is similar to (\ref{eq:necessary3}), since
both $\xi$ and $\varepsilon$ can be made arbitrarily close to $0$
as $N\rightarrow\infty$. The best value for $\alpha$ has thus to
be such that 
\begin{equation}
\frac{H_{2}\left(\alpha\right)+\alpha\log\left(Q-1\right)+\frac{\log\left(\alpha N\right)}{N}}{\log\frac{1}{1-\gamma}-H\left(p_{\mathsf{u}}\right)-\varepsilon}\leq\frac{\mathcal{H}\left(\mathsf{\mathsf{\Theta}}\right)+\varepsilon}{\log Q-H\left(p_{\mathsf{u}}\right)-\xi}.\label{eq:SuffNC8}
\end{equation}
The function $H_{2}\left(\alpha\right)+\alpha\log\left(Q-1\right)$
is increasing when $\alpha\in\left]0,0.5\right[$ and tends to $0$
as $\alpha\rightarrow0$. The term $\log\left(\alpha N\right)/N$
is also negligible for $N$ large. Thus, there always exists some
$\alpha$ satisfying (\ref{eq:SuffNC8}). Since the speed of convergence
of $\xi$ is affected by $\alpha$, we choose the largest $\alpha$
that satisfies (\ref{eq:SuffNC8}). Finally, the sufficient condition
(\ref{eq:SufficientNC Ratio}) is obtained for $M/N$.

From (\ref{eq:PeUpperNC}), one may conclude that 
\begin{equation}
P_{\textrm{e}}\leq\tau_{1}+\tau_{2}+2\varepsilon.\label{eq:SuffNC9}
\end{equation}
To ensure $P_{\textrm{e}}<\delta$, we should choose $\tau_{1}$,
$\tau_{2}$, and $\varepsilon$ to satisfy $\tau_{1}+\tau_{2}+2\varepsilon<\delta$.
Then a proper value of $\sigma$, which depends on $\tau_{2}$ and
$\varepsilon$, can be chosen. At last, $\xi$ is obtained from (\ref{eq:Sigma}).
With these well determined parameters, if all the three conditions
in Proposition~\ref{Prop:SufficientCond NC} hold, there exists integers
$N_{\varepsilon}$, $N_{\tau_{1}}$, $N_{\tau_{2}}$, and $N_{\sigma}$,
such that for any 
\begin{equation}
N>N_{\delta}=\max\left\{ N_{\varepsilon},N_{\tau_{1}},N_{\tau_{2}}N_{\sigma}\right\} ,
\end{equation}
and $M>M_{\varepsilon}$, one has $P_{\textrm{e}}<\delta$.
\end{proof}

\subsection{Discussion and Numerical Results}

\label{sec:SufficientConditionNC-3}

In \cite[Eq. (24)]{CS}, considering a sparse and iid source, a uniformly
distributed random matrix $\boldsymbol{\mathsf{A}}$, and the minimum
empirical entropy decoder, the following error exponent in the case
NC is obtained 
\begin{equation}
E_{0}^{\textrm{NC}}=\min_{p,q}D\left(p\parallel p_{\mathsf{\Theta}}\right)+\frac{M}{N}D\left(q\parallel p_{\mathsf{u}}\right)+\left|\frac{M}{N}\log Q-H\left(p\right)-\frac{M}{N}H\left(q\right)\right|^{+},\label{eq:cs1}
\end{equation}
where $D\left(\cdot\parallel\cdot\right)$ denotes the relative entropy
between two distributions and $\left|\cdot\right|^{+}=\max\left\{ 0,\cdot\right\} $.
In parallel, \cite{RLNC} proposed an approach to prove that the upper
bound for the probability of decoding error $P_{\textrm{e}}$ under
minimum empirical entropy decoding is equal to that of the maximum
$Q$-probability decoder. As discussed in Section~\ref{sec:SufficientConditionNC-1},
in the WN and NC cases, the MAP decoder in the considered context
is equivalent to the maximum $Q$-probability decoder. As a consequence,
(\ref{eq:cs1}) is also the error exponent of the MAP decoder in the
considered context. A proof for (\ref{eq:cs1}) using the method of
types need to do some assumptions on the topology of the considered
sensor network to specify the type of $\boldsymbol{\theta}^{N}$.
For correlated sources, one can extend (\ref{eq:cs1}) considering
Markov model, and use higher-order types, leading to cumbersome derivations.

From (\ref{eq:cs1}), provided that $E_{0}^{\textrm{NC}}>0$, $P_{\textrm{e}}$
tends to 0 as $N$ increases. $E_{0}^{\textrm{NC}}$ cannot be negative
and $E_{0}^{\textrm{NC}}=0$ if and only if 
\begin{equation}
\begin{cases}
D\left(p\parallel p_{\mathsf{\Theta}}\right)=0,\\
D\left(q\parallel p_{\mathsf{u}}\right)=0,\\
\frac{M}{N}\log Q-H\left(p\right)-\frac{M}{N}H\left(q\right)\leq0.
\end{cases}\label{eq:cs2}
\end{equation}
Thus, (\ref{eq:cs2}) implies that $\frac{M}{N}\log Q-H\left(p_{\mathsf{\Theta}}\right)-\frac{M}{N}H\left(p_{\mathsf{u}}\right)\leq0$.
Thus, a necessary and sufficient condition to have $E_{0}^{\textrm{NC}}>0$
is $\frac{M}{N}\log Q-H\left(p_{\mathsf{\Theta}}\right)-\frac{M}{N}H\left(p_{\mathsf{u}}\right)>0$,
which is the same as (\ref{eq:SufficientNC Ratio}) with $\gamma=1-Q^{-1}$
(corresponding to $\boldsymbol{\mathsf{A}}$ uniformly distributed).
The proof using weak typicality leads to the same results (in terms
of sufficient condition for having asymptotically vanishing $P_{\textrm{e}}$)
as the technique in \cite{CS}.

In the noiseless case, since $\gamma$ can be chosen arbitrarily small,
the necessary condition in Proposition~\ref{Prop:CNforNCSCase} and
the sufficient condition in Proposition~\ref{Prop:SufficientCond NC}
asymptotically coincide. This confirms the numerical results obtained
in \cite{seong}. In the NC case, the difference between the two conditions
comes from the constraint linking $\gamma$ and the entropy of the
communication noise. In Section~\ref{sec:NecessaryCondition}, the
structure of $\boldsymbol{\mathsf{A}}$ was not considered and no
condition on $\gamma$ has been obtained. The lower bound on $\gamma$
implies that $\boldsymbol{\mathsf{A}}$ should be dense enough to
fight against the noise. Since the communication noise is iid, for
a given probability of having one entry of $\boldsymbol{\mathsf{u}}^{M}$
non-zero, \emph{i.e.}, $\Pr\left(\mathsf{u}\neq0\right)$, the entropy
$H\left(p_{\mathsf{u}}\right)$ is maximized when $p_{\mathsf{u}}\left(q\right)=\Pr\left(\mathsf{u}\neq0\right)/\left(Q-1\right)$
for any $q\in\mathbb{F}_{Q}\setminus\left\{ 0\right\} $. This corresponds
to the worst noise in terms of compression efficiency.

Figure~\ref{fig:Lower-bound-gamma} represents the lower bound of
$\gamma$ as a function of $\Pr\left(\mathsf{u}\neq0\right)$, ranging
from $10^{-5}$ to $10^{-1}$, for different value of $Q$. There
is almost no requirement on $\gamma$ when $\Pr\left(\mathsf{u}\neq0\right)\leq5\times10^{-4}$.
For a given noise level, a larger size of the finite field needs a
denser sensing matrix. Figure~\ref{fig:Optimum-compression-ratio}
shows the influence of the communication noise on the optimum compression
ratio. The lower bound of $M/N$ is represented as a function of $\mathcal{H}\left(\mathsf{\mathsf{\Theta}}\right)/\log Q$,
for different values of $Q$ and for different values of $\Pr\left(\mathsf{u}\neq0\right)$. 

\begin{figure}[h]
\begin{centering}
\includegraphics[width=0.65\columnwidth]{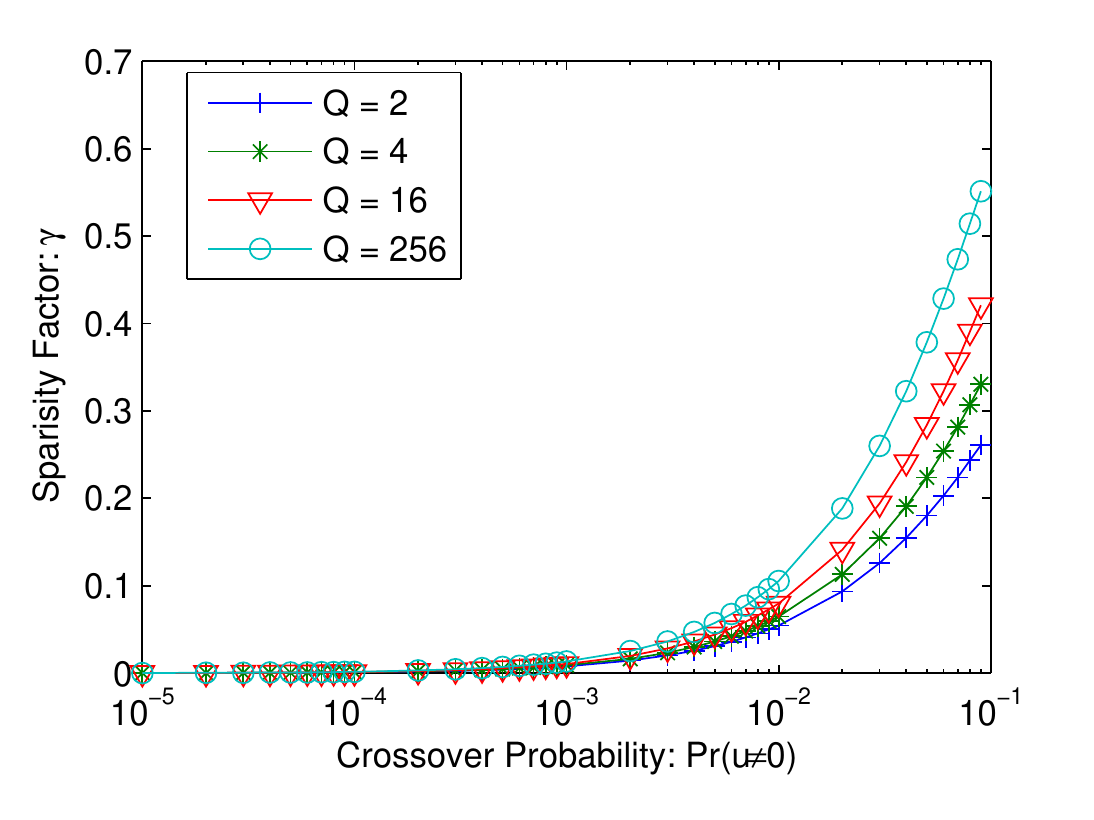}
\par\end{centering}

\caption{Lower bound of $\gamma$ to achieve the optimum compression ratio
for $N\rightarrow\infty$, according to (\ref{eq:SufficientGamma})\label{fig:Lower-bound-gamma} }
\end{figure}
\begin{figure}[th]
\begin{centering}
\includegraphics[width=1\columnwidth]{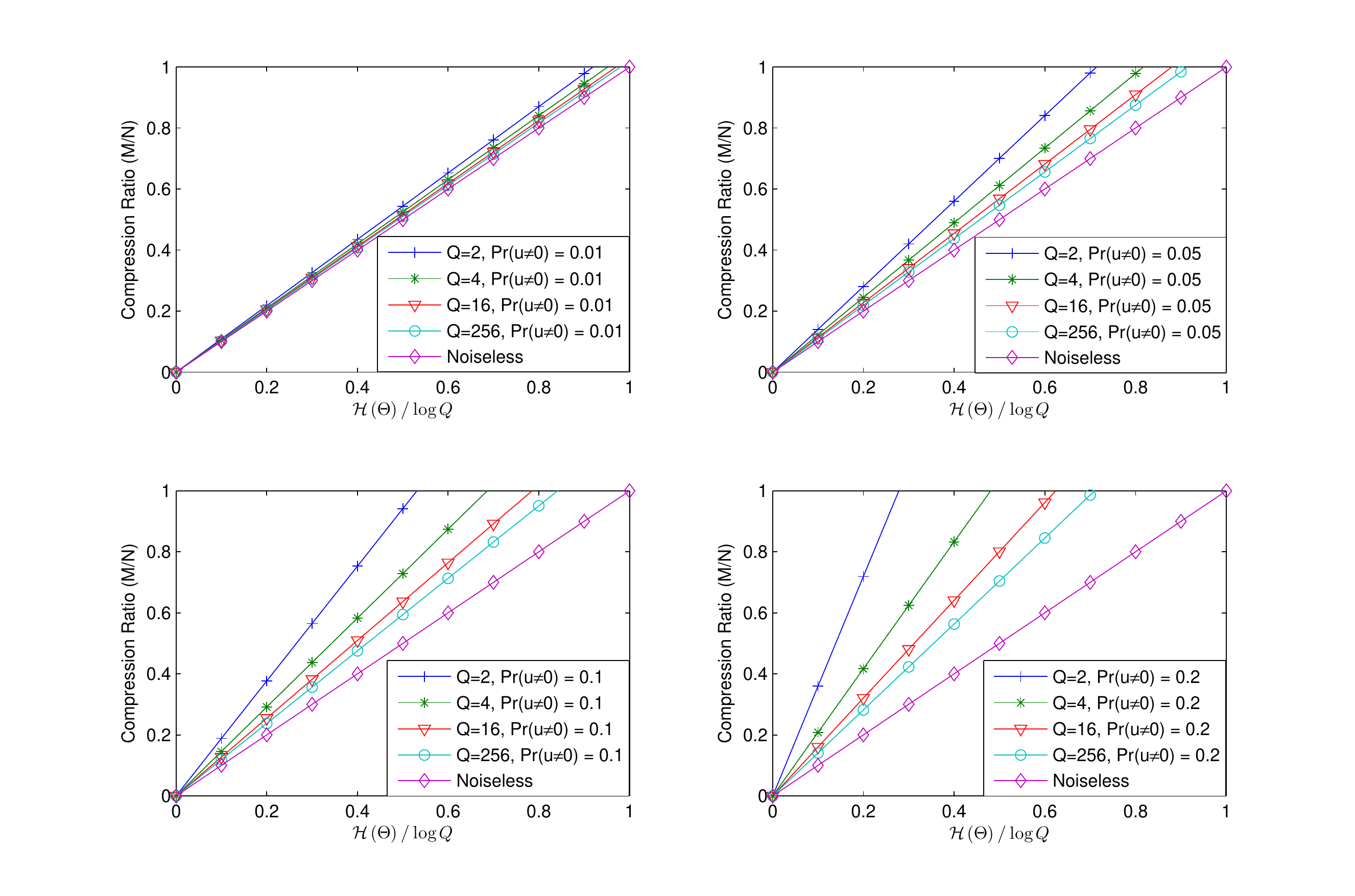}
\par\end{centering}

\caption{Optimum asymptotic achievable compression ratio in function of $\mathcal{H}\left(\mathsf{\mathsf{\Theta}}\right)/\log Q$,
according to (\ref{eq:SufficientNC Ratio}), for a crossover
probability equal to $0.01$, $0.05$, $0.1$, and $0.2$ respectively,
and without noise \label{fig:Optimum-compression-ratio}}
 
\end{figure}

\section{Sufficient Condition in Presence of Sensing Noise}

\label{sec:SufficientConditionNS}

This section performs an achievability study in presence of sensing
noise by considering the conditional pmf $p_{\mathsf{x}\mid\Theta}$.
The communication noise $\mathbf{u}^{M}$ is first neglected to simplify
the problem (NS case). The extension to the NCS case is easily obtained
from the NS case. Assume that $\boldsymbol{\theta}^{N}$ is the true
state vector and that $\mathbf{x}^{N}$ represents the measurements
of the sensors. The sink receives $\mathbf{y}^{M}=\mathbf{A}\mathbf{x}^{N}$.
The \emph{a posteriori} pmf (\ref{eq:MAPEstimatePMF}) can be written
as 
\begin{equation}
p\left(\boldsymbol{\theta}^{N}\mid\mathbf{y}^{M},\mathbf{A}\right)\propto\sum_{\mathbf{z}^{N}\in\mathbb{F}_{Q}^{N}}p\left(\boldsymbol{\theta}^{N}\right)p\left(\mathbf{z}^{N}\mid\boldsymbol{\theta}^{N}\right)1_{\mathbf{y}^{M}=\mathbf{A}\mathbf{z}^{N}}.\label{eq:aposterioriNS}
\end{equation}
In the case of MAP estimation, an error occurs if there exists a vector
$\mathbf{\boldsymbol{\varphi}}^{N}\in\mathbb{F}_{Q}^{N}\setminus\left\{ \boldsymbol{\theta}^{N}\right\} $
such that 
\begin{equation}
\sum_{\mathbf{z}^{N}\in\mathbb{F}_{Q}^{N}}p\left(\boldsymbol{\theta}^{N},\mathbf{z}^{N}\right)1_{\mathbf{y}^{M}=\mathbf{A}\mathbf{z}^{N}}\leq\sum_{\mathbf{z}^{N}\in\mathbb{F}_{Q}^{N}}p\left(\mathbf{\boldsymbol{\varphi}}^{N},\mathbf{z}^{N}\right)1_{\mathbf{y}^{M}=\mathbf{A}\mathbf{z}^{N}}.\label{eq:conditionNS1}
\end{equation}

\noindent $\boldsymbol{\theta}^{N}$ and $\mathbf{x}^{N}$ are considered
as fixed, but unknown. The decoder has knowledge of $\mathbf{A}$
and $\mathbf{y}^{M}=\mathbf{A}\mathbf{x}^{N}$, thus an alternative
way to express (\ref{eq:conditionNS1}) is 
\begin{equation}
\sum_{\mathbf{z}^{N}\in\mathbb{F}_{Q}^{N}}p\left(\boldsymbol{\theta}^{N},\mathbf{z}^{N}\right)1_{\mathbf{A}\mathbf{x}^{N}=\mathbf{A}\mathbf{z}^{N}}\leq\sum_{\mathbf{z}^{N}\in\mathbb{F}_{Q}^{N}}p\left(\mathbf{\boldsymbol{\varphi}}^{N},\mathbf{z}^{N}\right)1_{\mathbf{A}\mathbf{x}^{N}=\mathbf{A}\mathbf{z}^{N}}.\label{eq:conditionNS2}
\end{equation}

\subsection{Achievability Study}

\label{sec:SufficientConditionNS-1}

We begin with the extension of the basic weakly typical set as introduced
in Section~\ref{sec:SufficientConditionNC-1}. For any $\varepsilon>0$
and $N\in\mathbb{N}^{+}$, based on $\mathcal{A}_{\left[\mathsf{\Theta}\right]\varepsilon}^{N}$
for $\boldsymbol{\theta}^{N}$, one defines the weakly conditional
typical set $\mathcal{A}_{\left[\mathsf{x}\mid\mathsf{\Theta}\right]\varepsilon}^{N}\left(\boldsymbol{\theta}^{N}\right)$
for $\mathbf{x}^{N}$, which is conditionally distributed with respect
to $p_{\mathsf{x}\mid\mathsf{\Theta}}$, with $\boldsymbol{\theta}^{N}\in\mathcal{A}_{\left[\mathsf{\Theta}\right]\varepsilon}^{N}$,
{
\begin{equation}
\mathcal{A}_{\left[\mathsf{x}\mid\mathsf{\Theta}\right]\varepsilon}^{N}\left(\boldsymbol{\theta}^{N}\right)=\left\{ \mathbf{x}^{N}\in\mathbb{F}_{Q}^{N}\mbox{ such that }\left|-\frac{1}{N}\log p\left(\mathbf{x}^{N}\mid\boldsymbol{\theta}^{N}\right)-\mathcal{H}\left(\mathsf{x}\mid\mathsf{\Theta}\right)\right|\leq\varepsilon\right\} .\label{eq:ConditionalTypical}
\end{equation}
}Since $\mathcal{H}\left(\mathsf{\mathsf{\Theta}},\mathsf{x}\right)=\mathcal{H}\left(\mathsf{\mathsf{\Theta}}\right)+\mathcal{H}\left(\mathsf{x}\mid\mathsf{\Theta}\right)$,
if $\boldsymbol{\theta}^{N}\in\mathcal{A}_{\left[\mathsf{\Theta}\right]\varepsilon}^{N}$
and $\mathbf{x}^{N}\in\mathcal{A}_{\left[\mathsf{x}\mid\mathsf{\Theta}\right]\varepsilon}^{N}\left(\boldsymbol{\theta}^{N}\right)$,
then $\left(\boldsymbol{\theta}^{N},\mathbf{x}^{N}\right)\in\mathcal{A}_{\left[\mathsf{\Theta},\mathsf{x}\right]2\varepsilon}^{N}$
by consistency, where $\mathcal{A}_{\left[\mathsf{\Theta},\mathsf{x}\right]2\varepsilon}^{N}$
denotes the weakly joint typical set, \emph{i.e.}, the set of pairs
$\left(\boldsymbol{\theta}^{N},\mathbf{x}^{N}\right)\in\mathbb{F}_{Q}^{N}\times\mathbb{F}_{Q}^{N}$
such that 
\begin{equation}
\left|-\frac{1}{N}\log p\left(\boldsymbol{\theta}^{N},\mathbf{x}^{N}\right)-\mathcal{H}\left(\mathsf{\mathsf{\Theta}},\mathsf{x}\right)\right|\leq2\varepsilon.\label{eq:JointTypical}
\end{equation}
For any \textit{$\varepsilon>0$} there exist an $N_{\varepsilon}$
such that for all $N\geq N_{\varepsilon}$ and for any $\boldsymbol{\theta}^{N}\in\mathcal{A}_{\left[\mathsf{\Theta}\right]\varepsilon}^{N}$,
one has $\Pr\left\{ \boldsymbol{\mathsf{x}}^{N}\in\mathcal{A}_{\left[\mathsf{x}\mid\mathsf{\Theta}\right]\varepsilon}^{N}\left(\boldsymbol{\theta}^{N}\right)\right\} \geq1-\varepsilon$
and $\Pr\left\{ \left(\boldsymbol{\mathsf{\Theta}}^{N},\boldsymbol{\mathsf{x}}^{N}\right)\in\mathcal{A}_{\left[\mathsf{\Theta},\mathsf{x}\right]2\varepsilon}^{N}\right\} \geq1-2\varepsilon$.
The cardinality of the set $\mathcal{A}_{\left[\mathsf{\Theta},\mathsf{x}\right]2\varepsilon}^{N}$
satisfies
\begin{equation}
\left|\mathcal{A}_{\left[\mathsf{\Theta},\mathsf{x}\right]2\varepsilon}^{N}\right|\leq2^{N\left(\mathcal{H}\left(\mathsf{\mathsf{\Theta}},\mathsf{x}\right)+2\varepsilon\right)}.\label{eq:AEPJoint}
\end{equation}
One may have $\varepsilon$ arbitrary close to zero as $N\rightarrow\infty$.

Considering $\mathcal{A}_{\left[\mathsf{\Theta},\mathsf{x}\right]2\varepsilon}^{N}$,
the estimation error probability is bounded by
\begin{eqnarray}
P_{\textrm{e}} & \leq & \sum_{\left(\boldsymbol{\theta}^{N},\mathbf{x}^{N}\right)\in\mathcal{A}_{\left[\mathsf{\Theta},\mathsf{x}\right]2\varepsilon}^{N}}p\left(\boldsymbol{\theta}^{N},\mathbf{x}^{N}\right)\Pr\left\{ \textrm{error}\mid\boldsymbol{\theta}^{N},\mathbf{x}^{N}\right\} +\sum_{\left(\boldsymbol{\theta}^{N},\mathbf{x}^{N}\right)\notin\mathcal{A}_{\left[\mathsf{\Theta},\mathsf{x}\right]2\varepsilon}^{N}}p\left(\boldsymbol{\theta}^{N},\mathbf{x}^{N}\right)\nonumber \\
 & \leq & \sum_{\left(\boldsymbol{\theta}^{N},\mathbf{x}^{N}\right)\in\mathcal{A}_{\left[\mathsf{\Theta},\mathsf{x}\right]2\varepsilon}^{N}}p\left(\boldsymbol{\theta}^{N},\mathbf{x}^{N}\right)\cdot\Pr\left\{ \textrm{error}\mid\boldsymbol{\theta}^{N},\mathbf{x}^{N}\right\} +2\varepsilon,\label{eq:PeNS}
\end{eqnarray}
Errors appear mainly due to a bad sensing matrix. Averaging over all
$\mathbf{A}\in\mathbb{F}_{Q}^{M\times N}$, (\ref{eq:PeNS}) becomes
\begin{equation}
P_{\textrm{e}}\leq\sum_{\mathbf{A}\in\mathbb{F}_{Q}^{M\times N}}p\left(\mathbf{A}\right)\sum_{\boldsymbol{\theta}^{N}\in\mathcal{A}_{\left[\mathsf{\Theta}\right]\varepsilon}^{N}}\sum_{\mathbf{x}^{N}\in\mathcal{A}_{\left[\mathsf{x}\mid\mathsf{\Theta}\right]\varepsilon}^{N}\left(\boldsymbol{\theta}^{N}\right)}p\left(\boldsymbol{\theta}^{N},\mathbf{x}^{N}\right)\Pr\left\{ \textrm{error}\mid\boldsymbol{\theta}^{N},\mathbf{x}^{N},\mathbf{A}\right\} +2\varepsilon,\label{eq:PeNS1}
\end{equation}
where $p\left(\mathbf{A}\right)=\Pr\left\{ \boldsymbol{\mathsf{A}}=\mathbf{A}\right\} $.
$\Pr\left\{ \textrm{error}\mid\boldsymbol{\theta}^{N},\mathbf{x}^{N},\mathbf{A}\right\} $
can be written as 
\begin{equation}
\Pr\left\{ \textrm{error}\mid\boldsymbol{\theta}^{N},\mathbf{x}^{N},\mathbf{A}\right\} =\begin{cases}
1 & \mbox{ if }\exists\,\mathbf{\boldsymbol{\varphi}}^{N}\in\mathbb{F}_{Q}^{N}\setminus\left\{ \boldsymbol{\theta}^{N}\right\} \textrm{ s.t. \eqref{eq:conditionNS2} holds,}\\
0 & \mbox{ if }\forall\,\mathbf{\boldsymbol{\varphi}}^{N}\in\mathbb{F}_{Q}^{N}\setminus\left\{ \boldsymbol{\theta}^{N}\right\} ,\textrm{ \eqref{eq:conditionNS2} does not hold.}
\end{cases}\label{eq:PeNCCond}
\end{equation}
Using again the idea of Lemma~1,
%A\textcolor{red}{AA1ZZ}Z
the conditional
error probability is bounded by 
\begin{equation}
\Pr\left\{ \textrm{error}\mid\boldsymbol{\theta}^{N},\mathbf{x}^{N},\mathbf{A}\right\} \leq\sum_{\mathbf{\boldsymbol{\varphi}}^{N}\in\mathbb{F}_{Q}^{N}\setminus\left\{ \boldsymbol{\theta}^{N}\right\} }\frac{\sum_{\mathbf{z}_{1}^{N}\in\mathbb{F}_{Q}^{N}}p\left(\mathbf{\boldsymbol{\varphi}}^{N},\mathbf{z}_{1}^{N}\right)1_{\mathbf{A}\mathbf{x}^{N}=\mathbf{A}\mathbf{z}_{1}^{N}}}{\sum_{\mathbf{z}_{2}^{N}\in\mathbb{F}_{Q}^{N}}p\left(\boldsymbol{\theta}^{N},\mathbf{z}_{2}^{N}\right)1_{\mathbf{A}\mathbf{x}^{N}=\mathbf{A}\mathbf{z}_{2}^{N}}}.\label{eq:PeNsCond Upper}
\end{equation}
From (\ref{eq:PeNS1}) and (\ref{eq:PeNsCond Upper}), one gets 
\begin{equation}
P_{\textrm{e}}\leq\sum_{_{\boldsymbol{\theta}^{N}\in\mathcal{A}_{\left[\mathsf{\Theta}\right]\varepsilon}^{N}}^{\mathbf{A}\in\mathbb{F}_{Q}^{M\times N}}}p\left(\mathbf{A}\right)\hspace{-4mm}\sum_{\mathbf{x}^{N}\in\mathcal{A}_{\left[\mathsf{x}\mid\mathsf{\Theta}\right]\varepsilon}^{N}\left(\boldsymbol{\theta}^{N}\right)}\hspace{-5mm}p\left(\boldsymbol{\theta}^{N},\mathbf{x}^{N}\right)\hspace{-3mm}\sum_{\mathbf{\boldsymbol{\varphi}}^{N}\in\mathbb{F}_{Q}^{N}\setminus\left\{ \boldsymbol{\theta}^{N}\right\} }\frac{\sum_{\mathbf{z}_{1}^{N}\in\mathbb{F}_{Q}^{N}}p\left(\mathbf{\boldsymbol{\varphi}}^{N},\mathbf{z}_{1}^{N}\right)1_{\mathbf{A}\mathbf{x}^{N}=\mathbf{A}\mathbf{z}_{1}^{N}}}{\sum_{\mathbf{z}_{2}^{N}\in\mathbb{F}_{Q}^{N}}p\left(\boldsymbol{\theta}^{N},\mathbf{z}_{2}^{N}\right)1_{\mathbf{A}\mathbf{x}^{N}=\mathbf{A}\mathbf{z}_{2}^{N}}}+2\varepsilon.\label{eq:PeNS2}
\end{equation}
Now, for some $\boldsymbol{\theta}^{N}\in\mathcal{A}_{\left[\mathsf{\Theta}\right]\varepsilon}^{N}$,
consider the direct image by $\mathbf{A}$ of the conditional typical
set $\mathcal{A}_{\left[\mathsf{x}\mid\mathsf{\Theta}\right]\varepsilon}^{N}\left(\boldsymbol{\theta}^{N}\right)$
\begin{equation}
\mathcal{Y}_{\varepsilon}\left(\mathbf{A},\boldsymbol{\theta}^{N}\right)=\left\{ \mathbf{y}^{M}=\mathbf{A}\mathbf{x}^{N},\textrm{ for all }\;\mathbf{x}^{N}\in\mathcal{A}_{\left[\mathsf{x}\mid\mathsf{\Theta}\right]\varepsilon}^{N}\left(\boldsymbol{\theta}^{N}\right)\right\} .\label{eq:Yset}
\end{equation}
\bigskip{} 
\noindent \textbf{Lemma 3.} \textit{For any arbitrary real-valued
function $h\left(\mathbf{x}^{N}\right)$ with $\mathbf{x}^{N}\in\mathbb{F}_{Q}^{N}$,
one has} 
\begin{equation}
\sum_{\mathbf{x}^{N}\in\mathcal{A}_{\left[\mathsf{x}\mid\mathsf{\Theta}\right]\varepsilon}^{N}\left(\boldsymbol{\theta}^{N}\right)}h\left(\mathbf{x}^{N}\right)=\sum_{\mathbf{y}^{M}\in\mathcal{Y}_{\varepsilon}\left(\mathbf{A},\boldsymbol{\theta}^{N}\right)}\sum_{\mathbf{x}^{N}\in\mathcal{A}_{\left[\mathsf{x}\mid\mathsf{\Theta}\right]\varepsilon}^{N}\left(\boldsymbol{\theta}^{N}\right)}h\left(\mathbf{x}^{N}\right)1_{\mathbf{y}^{M}=\mathbf{A}\mathbf{x}^{N}}.\label{eq:Seperation1}
\end{equation}

\begin{proof}
For a given $\mathbf{y}^{M}\in\mathcal{Y}_{\varepsilon}\left(\mathbf{A},\boldsymbol{\theta}^{N}\right)$,
consider the set 
\begin{equation}
\mathcal{X}_{\varepsilon}\left(\mathbf{y}^{M},\mathbf{A},\boldsymbol{\theta}^{N}\right)=\left\{ \mathbf{x}^{N}\in\mathcal{A}_{\left[\mathsf{x}\mid\mathsf{\Theta}\right]\varepsilon}^{N}\left(\boldsymbol{\theta}^{N}\right)\mbox{ such that }\mathbf{y}^{M}=\mathbf{A}\mathbf{x}^{N}\right\} .\label{eq:Xset}
\end{equation}
Then one has 
\begin{equation}
\mathcal{A}_{\left[\mathsf{x}\mid\mathsf{\Theta}\right]\varepsilon}^{N}\left(\boldsymbol{\theta}^{N}\right)=\bigcup_{\mathbf{y}^{M}\in\mathcal{Y}_{\varepsilon}\left(\mathbf{A},\boldsymbol{\theta}^{N}\right)}\mathcal{X}_{\varepsilon}\left(\mathbf{y}^{M},\mathbf{A},\boldsymbol{\theta}^{N}\right),\label{eq:Seperation2}
\end{equation}
with $\mathcal{X}_{\varepsilon}\left(\mathbf{y}_{i}^{M},\mathbf{A},\boldsymbol{\theta}^{N}\right)\cap\mathcal{X}_{\varepsilon}\left(\mathbf{y}_{j}^{M},\mathbf{A},\boldsymbol{\theta}^{N}\right)=\emptyset$
for any $\mathbf{y}_{i}^{M}\neq\mathbf{y}_{j}^{M},$ since the multiplication
by $\mathbf{A}$ is a surjection from $\mathcal{A}_{\left[\mathsf{x}\mid\mathsf{\Theta}\right]\varepsilon}^{N}\left(\boldsymbol{\theta}^{N}\right)$
to $\mathcal{Y}_{\varepsilon}\left(\mathbf{A},\boldsymbol{\theta}^{N}\right)$.
So any sum over $\mathbf{x}^{N}\in\mathcal{A}_{\left[\mathsf{x}\mid\mathsf{\Theta}\right]\varepsilon}^{N}\left(\boldsymbol{\theta}^{N}\right)$
can be decomposed as
\begin{eqnarray}
\sum_{\mathbf{x}^{N}\in\mathcal{A}_{\left[\mathsf{x}\mid\mathsf{\Theta}\right]\varepsilon}^{N}\left(\boldsymbol{\theta}^{N}\right)}h\left(\mathbf{x}^{N}\right) & = & \sum_{\mathbf{y}^{M}\in\mathcal{Y}_{\varepsilon}\left(\mathbf{A},\boldsymbol{\theta}^{N}\right)}\sum_{\mathbf{x}^{N}\in\mathcal{X}_{\varepsilon}\left(\mathbf{y}^{M},\mathbf{A},\boldsymbol{\theta}^{N}\right)}h\left(\mathbf{x}^{N}\right)\nonumber \\
 & = & \sum_{\mathbf{y}^{M}\in\mathcal{Y}_{\varepsilon}\left(\mathbf{A},\boldsymbol{\theta}^{N}\right)}\sum_{\mathbf{x}^{N}\in\mathcal{A}_{\left[\mathsf{x}\mid\mathsf{\Theta}\right]\varepsilon}^{N}\left(\boldsymbol{\theta}^{N}\right)}h\left(\mathbf{x}^{N}\right)1_{\mathbf{y}^{M}=\mathbf{A}\mathbf{x}^{N}}.\label{eq:Seperation4}
\end{eqnarray}

\end{proof}
Applying (\ref{eq:Seperation1}) to (\ref{eq:PeNS2}), one obtains
\begin{align}
P_{\textrm{e}} & \leq \sum_{_{\boldsymbol{\theta}^{N}\in\mathcal{A}_{\left[\mathsf{\Theta}\right]\varepsilon}^{N}}^{\mathbf{A}\in\mathbb{F}_{Q}^{M\times N}}}p\left(\mathbf{A}\right)\sum_{\mathbf{y}^{M}\in\mathcal{Y}_{\varepsilon}\left(\mathbf{A},\boldsymbol{\theta}^{N}\right)}\sum_{\mathbf{x}^{N}\in\mathcal{A}_{\left[\mathsf{x}\mid\mathsf{\Theta}\right]\varepsilon}^{N}\left(\boldsymbol{\theta}^{N}\right)}p\left(\boldsymbol{\theta}^{N},\mathbf{x}^{N}\right)1_{\mathbf{y}^{M}=\mathbf{A}\mathbf{x}^{N}}\nonumber \\
 &   \cdot\left(\sum_{\mathbf{\boldsymbol{\varphi}}^{N}\in\mathbb{F}_{Q}^{N}\setminus\left\{ \boldsymbol{\theta}^{N}\right\} }\frac{\sum_{\mathbf{z}_{1}^{N}\in\mathbb{F}_{Q}^{N}}p\left(\mathbf{\boldsymbol{\varphi}}^{N},\mathbf{z}_{1}^{N}\right)1_{\mathbf{y}^{M}=\mathbf{A}\mathbf{z}_{1}^{N}}}{\sum_{\mathbf{z}_{2}^{N}\in\mathbb{F}_{Q}^{N}}p\left(\boldsymbol{\theta}^{N},\mathbf{z}_{2}^{N}\right)1_{\mathbf{y}^{M}=\mathbf{A}\mathbf{z}_{2}^{N}}}\right)+2\varepsilon\nonumber \\
 & =  \sum_{_{\boldsymbol{\theta}^{N}\in\mathcal{A}_{\left[\mathsf{\Theta}\right]\varepsilon}^{N}}^{\mathbf{A}\in\mathbb{F}_{Q}^{M\times N}}}p\left(\mathbf{A}\right)\sum_{\mathbf{y}^{M}\in\mathcal{Y}_{\varepsilon}\left(\mathbf{A},\boldsymbol{\theta}^{N}\right)}\left(\sum_{\mathbf{\boldsymbol{\varphi}}^{N}\in\mathbb{F}_{Q}^{N}\setminus\left\{ \boldsymbol{\theta}^{N}\right\} }\sum_{\mathbf{z}_{1}^{N}\in\mathbb{F}_{Q}^{N}}p\left(\mathbf{\boldsymbol{\varphi}}^{N},\mathbf{z}_{1}^{N}\right)\cdot1_{\mathbf{y}^{M}=\mathbf{A}\mathbf{z}_{1}^{N}}\right)\nonumber \\
 &   \cdot\left(\frac{\sum_{\mathbf{x}^{N}\in\mathcal{A}_{\left[\mathsf{x}\mid\mathsf{\Theta}\right]\varepsilon}^{N}\left(\boldsymbol{\theta}^{N}\right)}p\left(\boldsymbol{\theta}^{N},\mathbf{x}^{N}\right)1_{\mathbf{y}^{M}=\mathbf{A}\mathbf{x}^{N}}}{\sum_{\mathbf{z}_{2}^{N}\in\mathbb{F}_{Q}^{N}}p\left(\boldsymbol{\theta}^{N},\mathbf{z}_{2}^{N}\right)1_{\mathbf{y}^{M}=\mathbf{A}\mathbf{z}_{2}^{N}}}\right)+2\varepsilon\nonumber \\
 & \leq  \sum_{_{\boldsymbol{\theta}^{N}\in\mathcal{A}_{\left[\mathsf{\Theta}\right]\varepsilon}^{N}}^{\mathbf{A}\in\mathbb{F}_{Q}^{M\times N}}}p\left(\mathbf{A}\right)\sum_{\mathbf{y}^{M}\in\mathcal{Y}_{\varepsilon}\left(\mathbf{A},\boldsymbol{\theta}^{N}\right)}\left(\sum_{\mathbf{\boldsymbol{\varphi}}^{N}\in\mathbb{F}_{Q}^{N}\setminus\left\{ \boldsymbol{\theta}^{N}\right\} }\sum_{\mathbf{z}_{1}^{N}\in\mathbb{F}_{Q}^{N}}p\left(\mathbf{\boldsymbol{\varphi}}^{N},\mathbf{z}_{1}^{N}\right)1_{\mathbf{y}^{M}=\mathbf{A}\mathbf{z}_{1}^{N}}\right)+2\varepsilon,\label{eq:PeNS3}
\end{align}
since we have 
\begin{equation}
\frac{\sum_{\mathbf{x}^{N}\in\mathcal{A}_{\left[\mathsf{x}\mid\mathsf{\Theta}\right]\varepsilon}^{N}\left(\boldsymbol{\theta}^{N}\right)}p\left(\boldsymbol{\theta}^{N},\mathbf{x}^{N}\right)1_{\mathbf{y}^{M}=\mathbf{A}\mathbf{x}^{N}}}{\sum_{\mathbf{z}_{2}^{N}\in\mathbb{F}_{Q}^{N}}p\left(\boldsymbol{\theta}^{N},\mathbf{z}_{2}^{N}\right)1_{\mathbf{y}^{M}=\mathbf{A}\mathbf{z}_{2}^{N}}}\leq1.\label{eq:PeNS4}
\end{equation}
The bound (\ref{eq:PeNS4}) is tight because for $N$ sufficiently
large, the probability of the non-typical set vanishes. Recall that
$\mathbf{y}^{M}=\mathbf{A}\mathbf{x}^{N}$, even though $\mathbf{x}^{N}$
is not explicit in (\ref{eq:PeNS3}). As a vector $\mathbf{y}^{M}$
may correspond to several $\mathbf{x}^{N}$s, (\ref{eq:PeNS3}) is
further bounded by
\begin{eqnarray}
P_{\textrm{e}} & \leq & \sum_{_{\boldsymbol{\theta}^{N}\in\mathcal{A}_{\left[\mathsf{\Theta}\right]\varepsilon}^{N}}^{\mathbf{A}\in\mathbb{F}_{Q}^{M\times N}}}p\left(\mathbf{A}\right)\sum_{\mathbf{x}^{N}\in\mathcal{A}_{\left[\mathsf{x}\mid\mathsf{\Theta}\right]\varepsilon}^{N}\left(\boldsymbol{\theta}^{N}\right)}\sum_{_{\quad\mathbf{z}_{1}^{N}\in\mathbb{F}_{Q}^{N}}^{\mathbf{\boldsymbol{\varphi}}^{N}\in\mathbb{F}_{Q}^{N}\setminus\left\{ \boldsymbol{\theta}^{N}\right\} }}p\left(\mathbf{\boldsymbol{\varphi}}^{N},\mathbf{z}_{1}^{N}\right)1_{\mathbf{A}\mathbf{x}^{N}=\mathbf{A}\mathbf{z}_{1}^{N}}+2\varepsilon\nonumber \\
 & \leq & \sum_{_{\mathbf{\boldsymbol{\varphi}}^{N}\in\mathbb{F}_{Q}^{N}\setminus\left\{ \boldsymbol{\theta}^{N}\right\} }^{\;\;\;\;\boldsymbol{\theta}^{N}\in\mathcal{A}_{\left[\mathsf{\Theta}\right]\varepsilon}^{N}}}\sum_{_{\;\:\quad\mathbf{z}_{1}^{N}\in\mathbb{F}_{Q}^{N}}^{\mathbf{x}^{N}\in\mathcal{A}_{\left[\mathsf{x}\mid\mathsf{\Theta}\right]\varepsilon}^{N}\left(\boldsymbol{\theta}^{N}\right)}}p\left(\mathbf{\boldsymbol{\varphi}}^{N},\mathbf{z}_{1}^{N}\right)\sum_{\mathbf{A}\in\mathbb{F}_{Q}^{M\times N}}p\left(\mathbf{A}\right)1_{\mathbf{A}\mathbf{x}^{N}=\mathbf{A}\mathbf{z}_{1}^{N}}+2\varepsilon.\label{eq:PeNS5}
\end{eqnarray}
Since 
\begin{equation}
\sum_{\mathbf{A}\in\mathbb{F}_{Q}^{M\times N}}p\left(\mathbf{A}\right)1_{\mathbf{A}\mathbf{x}^{N}=\mathbf{A}\mathbf{z}_{1}^{N}}=\Pr\left\{ \boldsymbol{\mathsf{A}}\mathbf{x}^{N}=\boldsymbol{\mathsf{A}}\mathbf{z}_{1}^{N}\right\} ,\label{eq:PA}
\end{equation}
one gets
\begin{equation}
P_{\textrm{e}}\leq\sum_{_{\mathbf{\boldsymbol{\varphi}}^{N}\in\mathbb{F}_{Q}^{N}\setminus\left\{ \boldsymbol{\theta}^{N}\right\} }^{\;\;\;\;\boldsymbol{\theta}^{N}\in\mathcal{A}_{\left[\mathsf{\Theta}\right]\varepsilon}^{N}}}\sum_{_{\;\:\quad\mathbf{z}_{1}^{N}\in\mathbb{F}_{Q}^{N}}^{\mathbf{x}^{N}\in\mathcal{A}_{\left[\mathsf{x}\mid\mathsf{\Theta}\right]\varepsilon}^{N}\left(\boldsymbol{\theta}^{N}\right)}}p\left(\mathbf{\boldsymbol{\varphi}}^{N},\mathbf{z}_{1}^{N}\right)\Pr\left\{ \boldsymbol{\mathsf{A}}\mathbf{x}^{N}=\boldsymbol{\mathsf{A}}\mathbf{z}_{1}^{N}\right\} +2\varepsilon.\label{eq:PeNS6}
\end{equation}
Suppose that $\left\Vert \mathbf{x}^{N}-\mathbf{z}_{1}^{N}\right\Vert _{0}=d$.
If $d=0$, $\Pr\left\{ \boldsymbol{\mathsf{A}}\mathbf{x}^{N}=\boldsymbol{\mathsf{A}}\mathbf{z}_{1}^{N}\right\} $
equals 1. Otherwise we can apply Lemma~2,
%A\textcolor{red}{AA2ZZ}Z,
without communication noise, $\Pr\left\{ \boldsymbol{\mathsf{A}}\mathbf{x}^{N}=\boldsymbol{\mathsf{A}}\mathbf{z}_{1}^{N}\right\} =f\left(d,0;\gamma,Q,M\right)$
. Depending on $d$ being zero or not, $P_{\mathcal{A}}$ is split
as follows
\begin{equation}
P_{\textrm{e}}\leq P_{\mathcal{A}_{1}}+P_{\mathcal{A}_{2}}+2\varepsilon,\label{eq:PeNS7}
\end{equation}
where 
\begin{equation}
P_{\mathcal{A}_{1}}=\sum_{_{\mathbf{\boldsymbol{\varphi}}^{N}\in\mathbb{F}_{Q}^{N}\setminus\left\{ \boldsymbol{\theta}^{N}\right\} }^{\;\;\;\;\boldsymbol{\theta}^{N}\in\mathcal{A}_{\left[\mathsf{\Theta}\right]\varepsilon}^{N}}}\sum_{\mathbf{z}_{1}^{N}\in\mathcal{A}_{\left[\mathsf{x}\mid\mathsf{\Theta}\right]\varepsilon}^{N}\left(\boldsymbol{\theta}^{N}\right)}p\left(\mathbf{\boldsymbol{\varphi}}^{N},\mathbf{z}_{1}^{N}\right),\label{eq:PA1}
\end{equation}
and 
\begin{equation}
P_{\mathcal{A}_{2}}=\sum_{_{\mathbf{\boldsymbol{\varphi}}^{N}\in\mathbb{F}_{Q}^{N}\setminus\left\{ \boldsymbol{\theta}^{N}\right\} }^{\;\;\;\;\boldsymbol{\theta}^{N}\in\mathcal{A}_{\left[\mathsf{\Theta}\right]\varepsilon}^{N}}}\sum_{_{\:\:\:\mathbf{z}_{1}^{N}\in\mathbb{F}_{Q}^{N}\setminus\left\{ \mathbf{x}^{N}\right\} }^{\mathbf{x}^{N}\in\mathcal{A}_{\left[\mathsf{x}\mid\mathsf{\Theta}\right]\varepsilon}^{N}\left(\boldsymbol{\theta}^{N}\right)}}p\left(\mathbf{\boldsymbol{\varphi}}^{N},\mathbf{z}_{1}^{N}\right)\Pr\left\{ \boldsymbol{\mathsf{A}}\mathbf{x}^{N}=\boldsymbol{\mathsf{A}}\mathbf{z}_{1}^{N}\right\} .\label{eq:PA2}
\end{equation}

\noindent \textbf{Lemma 4.} \textit{
A sufficient condition for $P_{\mathcal{A}_{1}}\leq2\varepsilon$ 
is that,
for any pair of vectors $(\boldsymbol{\theta}^{N},\mathbf{\boldsymbol{\varphi}}^{N})\in\mathcal{A}_{\left[\mathsf{\Theta}\right]\varepsilon}^{N}\times\mathcal{A}_{\left[\mathsf{\Theta}\right]\varepsilon}^{N}$
such that $\boldsymbol{\theta}^{N}\neq\mathbf{\boldsymbol{\varphi}}^{N}$,
\begin{equation}
\mathcal{A}_{\left[\mathsf{x}\mid\mathsf{\Theta}\right]\varepsilon}^{N}\left(\boldsymbol{\theta}^{N}\right)\cap\mathcal{A}_{\left[\mathsf{x}\mid\mathsf{\Theta}\right]\varepsilon}^{N}\left(\mathbf{\boldsymbol{\varphi}}^{N}\right)=\emptyset.\label{eq:overlapping}
\end{equation}
%then $P_{\mathcal{A}_{1}}\leq2\varepsilon$.
}
\begin{proof}
Assume that (\ref{eq:overlapping}) is satisfied. Changing the order
of summation, (\ref{eq:PA1}) becomes
\begin{equation}
P_{\mathcal{A}_{1}}=\sum_{\mathbf{\boldsymbol{\varphi}}^{N}\in\mathbb{F}_{Q}^{N}}p\left(\mathbf{\boldsymbol{\varphi}}^{N}\right)\sum_{_{\,\mathbf{z}_{1}^{N}\in\mathcal{A}_{\left[\mathsf{x}\mid\mathsf{\Theta}\right]\varepsilon}^{N}\left(\boldsymbol{\theta}^{N}\right)}^{\boldsymbol{\theta}^{N}\in\mathcal{A}_{\left[\mathsf{\Theta}\right]\varepsilon}^{N}\setminus\left\{ \mathbf{\boldsymbol{\varphi}}^{N}\right\} }}p\left(\mathbf{z}_{1}^{N}\mid\mathbf{\boldsymbol{\varphi}}^{N}\right),\label{eq:PA11}
\end{equation}
which can be further decomposed as $P_{\mathcal{A}_{1}}=P_{\mathcal{A}_{11}}+P_{\mathcal{A}_{12}}$,
with 
\begin{eqnarray}
P_{\mathcal{A}_{11}} & = & \sum_{\mathbf{\boldsymbol{\varphi}}^{N}\in\mathcal{A}_{\left[\mathsf{\Theta}\right]\varepsilon}^{N}}p\left(\mathbf{\boldsymbol{\varphi}}^{N}\right)\sum_{_{\,\mathbf{z}_{1}^{N}\in\mathcal{A}_{\left[\mathsf{x}\mid\mathsf{\Theta}\right]\varepsilon}^{N}\left(\boldsymbol{\theta}^{N}\right)}^{\boldsymbol{\theta}^{N}\in\mathcal{A}_{\left[\mathsf{\Theta}\right]\varepsilon}^{N}\setminus\left\{ \mathbf{\boldsymbol{\varphi}}^{N}\right\} }}p\left(\mathbf{z}_{1}^{N}\mid\mathbf{\boldsymbol{\varphi}}^{N}\right)\nonumber \\
 & \overset{(a)}{\leq} & \sum_{\mathbf{\boldsymbol{\varphi}}^{N}\in\mathcal{A}_{\left[\mathsf{\Theta}\right]\varepsilon}^{N}}p\left(\mathbf{\boldsymbol{\varphi}}^{N}\right)\sum_{\mathbf{z}_{1}^{N}\in\mathbb{F}_{Q}^{N}\setminus\mathcal{A}_{\left[\mathsf{x}\mid\mathsf{\Theta}\right]\varepsilon}^{N}\left(\mathbf{\boldsymbol{\varphi}}^{N}\right)}p\left(\mathbf{z}_{1}^{N}\mid\mathbf{\boldsymbol{\varphi}}^{N}\right)\nonumber \\
 & \leq & \sum_{\mathbf{\boldsymbol{\varphi}}^{N}\in\mathcal{A}_{\left[\mathsf{\Theta}\right]\varepsilon}^{N}}p\left(\mathbf{\boldsymbol{\varphi}}^{N}\right)\varepsilon\leq\varepsilon,\label{eq:PA12}
\end{eqnarray}
where $(a)$ comes from the fact that if (\ref{eq:overlapping}) is
satisfied, one has 
\begin{equation}
\bigcup_{\boldsymbol{\theta}^{N}\in\mathcal{A}_{\left[\mathsf{\Theta}\right]\varepsilon}^{N}\setminus\left\{ \mathbf{\boldsymbol{\varphi}}^{N}\right\} }\mathcal{A}_{\left[\mathsf{x}\mid\mathsf{\Theta}\right]\varepsilon}^{N}\left(\boldsymbol{\theta}^{N}\right)\subseteq\mathbb{F}_{Q}^{N}\setminus\mathcal{A}_{\left[\mathsf{x}\mid\mathsf{\Theta}\right]\varepsilon}^{N}\left(\mathbf{\boldsymbol{\varphi}}^{N}\right).\label{eq:PA13}
\end{equation}
On the other hand, 
\begin{eqnarray}
P_{\mathcal{A}_{12}} & = & \sum_{\mathbf{\boldsymbol{\varphi}}^{N}\in\mathbb{F}_{Q}^{N}\setminus\mathcal{A}_{\left[\mathsf{\Theta}\right]\varepsilon}^{N}}p\left(\mathbf{\boldsymbol{\varphi}}^{N}\right)\sum_{_{\mathbf{z}_{1}^{N}\in\mathcal{A}_{\left[\mathsf{x}\mid\mathsf{\Theta}\right]\varepsilon}^{N}\left(\boldsymbol{\theta}^{N}\right)}^{\quad\:\boldsymbol{\theta}^{N}\in\mathcal{A}_{\left[\mathsf{\Theta}\right]\varepsilon}^{N}}}p\left(\mathbf{z}_{1}^{N}\mid\mathbf{\boldsymbol{\varphi}}^{N}\right)\nonumber \\
 & \leq & \sum_{\mathbf{\boldsymbol{\varphi}}^{N}\in\mathbb{F}_{Q}^{N}\setminus\mathcal{A}_{\left[\mathsf{\Theta}\right]\varepsilon}^{N}}p\left(\mathbf{\boldsymbol{\varphi}}^{N}\right)\leq\varepsilon,\label{eq:PA14}
\end{eqnarray}
since for this part
\begin{equation}
\bigcup_{\boldsymbol{\theta}^{N}\in\mathcal{A}_{\left[\mathsf{\Theta}\right]\varepsilon}^{N}}\mathcal{A}_{\left[\mathsf{x}\mid\mathsf{\Theta}\right]\varepsilon}^{N}\left(\boldsymbol{\theta}^{N}\right)\subseteq\mathbb{F}_{Q}^{N}.\label{eq:PA15}
\end{equation}
From (\ref{eq:PA12}) and (\ref{eq:PA14}), Lemma~4
%A\textcolor{red}{AA4ZZ}Z
is proved.
\end{proof}
Now consider the term (\ref{eq:PA2}), 
\begin{eqnarray}
P_{\mathcal{A}_{2}} & = &
\sum_{d=1}^{N}\sum_{_{\mathbf{\boldsymbol{\varphi}}^{N}\in\mathbb{F}_{Q}^{N}\setminus\left\{
\boldsymbol{\theta}^{N}\right\}
}^{\;\;\;\;\boldsymbol{\theta}^{N}\in\mathcal{A}_{\left[\mathsf{\Theta}\right]\varepsilon}^{N}}}\sum_{_{\mathbf{z}_{1}^{N}\in\mathbb{F}_{Q}^{N}:\left\Vert
\mathbf{x}^{N}-\mathbf{z}_{1}^{N}\right\Vert _{0}=d}^{~ ~ ~\mathbf{x}^{N}\in\mathcal{A}_{\left[\mathsf{x}\mid\mathsf{\Theta}\right]\varepsilon}^{N}\left(\boldsymbol{\theta}^{N}\right)}}p\left(\mathbf{\boldsymbol{\varphi}}^{N},\mathbf{z}_{1}^{N}\right)\cdot f\left(d,0;\gamma,Q,M\right)\nonumber \\
 & \leq & \sum_{d=1}^{\left\lfloor \beta N\right\rfloor }\sum_{_{\mathbf{\boldsymbol{\varphi}}^{N}\in\mathbb{F}_{Q}^{N}\setminus\left\{ \boldsymbol{\theta}^{N}\right\} }^{\;\;\;\;\boldsymbol{\theta}^{N}\in\mathcal{A}_{\left[\mathsf{\Theta}\right]\varepsilon}^{N}}}\sum_{\mathbf{z}_{1}^{N}\in\mathbb{F}_{Q}^{N}}\sum_{\mathbf{x}^{N}\in\mathbb{F}_{Q}^{N}:\left\Vert \mathbf{x}^{N}-\mathbf{z}_{1}^{N}\right\Vert _{0}=d}p\left(\mathbf{\boldsymbol{\varphi}}^{N},\mathbf{z}_{1}^{N}\right)\cdot f\left(1,0;\gamma,Q,M\right)\nonumber \\
 &  & +\sum_{_{\mathbf{x}^{N}\in\mathcal{A}_{\left[\mathsf{x}\mid\mathsf{\Theta}\right]\varepsilon}^{N}\left(\boldsymbol{\theta}^{N}\right)}^{\quad\boldsymbol{\theta}^{N}\in\mathcal{A}_{\left[\mathsf{\Theta}\right]\varepsilon}^{N}}}\sum_{_{\;\:\quad\,\mathbf{z}_{1}^{N}\in\mathbb{F}_{Q}^{N}}^{\mathbf{\boldsymbol{\varphi}}^{N}\in\mathcal{A}_{\left[\mathsf{\Theta}\right]\varepsilon}^{N}\setminus\left\{ \boldsymbol{\theta}^{N}\right\} }}p\left(\mathbf{\boldsymbol{\varphi}}^{N},\mathbf{z}_{1}^{N}\right)\cdot f\left(\left\lceil \beta N\right\rceil ,0;\gamma,Q,M\right),\label{eq:PA21}
\end{eqnarray}
which is similar to (\ref{eq:PeNCUpp5}) in Section~\ref{sec:SufficientConditionNC-1}.
For $N$ sufficient large, the condition on $M/N$ to ensure $P_{\mathcal{A}_{2}}$
tends to zero as $N\rightarrow\infty$ is
\begin{equation}
\frac{M}{N}>\frac{\mathcal{H}\left(\mathsf{\Theta},\mathsf{x}\right)+\varepsilon}{\log Q-\xi},\label{eq:SuffcientNS ratio}
\end{equation}
for some $\xi\in\mathbb{R}^{+}$. Finally, we have Proposition~\ref{Prop:Sufficient Condition NS}
to conclude the sufficient condition for reliable recovery in the
NS case. 
\begin{prop}[Sufficient condition, NS case]
\label{Prop:Sufficient Condition NS}In the NS case, fix an arbitrary
small positive real number $\delta$,  there exists $\varepsilon\in\mathbb{R}^{+}$,
$\xi\in\mathbb{R}^{+}$, $N_{\delta}\in\mathbb{N}^{+}$ and $M_{\varepsilon}\in\mathbb{N}^{+}$
such that for any $N>N_{\delta}$ and $M>M_{\varepsilon}$, one has
$P_{\textrm{e}}<\delta$ under MAP decoding if (\ref{eq:overlapping})
and (\ref{eq:SuffcientNS ratio}) hold. One can make both $\varepsilon$
and $\xi$ arbitrary close to $0$ as $N\rightarrow\infty$. 
\end{prop}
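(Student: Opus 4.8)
The plan is to assemble the estimates already prepared in Section~\ref{sec:SufficientConditionNS-1}. The backbone is the bound \eqref{eq:PeNS7}, $P_{\textrm{e}}\le P_{\mathcal{A}_1}+P_{\mathcal{A}_2}+2\varepsilon$, which Section~\ref{sec:SufficientConditionNS-1} derives from the \emph{a posteriori} pmf \eqref{eq:aposterioriNS} by a Lemma~1--type bound on the conditional error probability, by the fibre decomposition of Lemma~3 (separating $\mathcal{A}_{\left[\mathsf{x}\mid\mathsf{\Theta}\right]\varepsilon}^{N}\left(\boldsymbol{\theta}^{N}\right)$ according to its image under $\mathbf{A}$), and by the estimate \eqref{eq:PeNS4} that the conditional mass carried by the typical preimages of a given $\mathbf{y}^{M}$ is at most $1$. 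It then suffices to show that, under the two stated hypotheses, each of $P_{\mathcal{A}_1}$ and $P_{\mathcal{A}_2}$ can be made arbitrarily small for $N$ and $M$ large.

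For $P_{\mathcal{A}_1}$ I would simply invoke Lemma~4: the hypothesis \eqref{eq:overlapping} is exactly the sufficient condition stated there, hence $P_{\mathcal{A}_1}\le 2\varepsilon$. Conceptually, \eqref{eq:overlapping} is the achievability-side counterpart of the necessary condition $\mathcal{H}\left(\mathsf{\Theta}\mid\mathsf{x}\right)=0$ in Table~2: when the conditional typical sets attached to distinct typical source words are pairwise disjoint, a typical pair $\left(\boldsymbol{\theta}^{N},\mathbf{x}^{N}\right)$ cannot be confused with a competitor $\left(\boldsymbol{\varphi}^{N},\mathbf{z}_{1}^{N}\right)$ having $\mathbf{z}_{1}^{N}=\mathbf{x}^{N}$, which is precisely the $d=0$ contribution that $P_{\mathcal{A}_1}$ collects.

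For $P_{\mathcal{A}_2}$ I would follow, almost verbatim, the argument of Sections~\ref{sec:SufficientConditionNC-1}--\ref{sec:SufficientConditionNC-2}, with a threshold $\beta\in\left(0,1/2\right)$ playing the role of $\alpha$. Splitting the sum over $d=\left\Vert\mathbf{x}^{N}-\mathbf{z}_{1}^{N}\right\Vert_{0}$ at $\left\lfloor\beta N\right\rfloor$ as in \eqref{eq:PA21}, bounding $f\left(d,0;\gamma,Q,M\right)$ by $f\left(1,0;\gamma,Q,M\right)=\left(1-\gamma\right)^{M}$ on the low range and by $f\left(\left\lceil\beta N\right\rceil,0;\gamma,Q,M\right)$ on the high range through the monotonicity in Lemma~2, and then applying the binomial bound together with the typical-set cardinalities \eqref{eq:AEPCarTheta} and \eqref{eq:AEPJoint}, one is left with two exponentially decaying terms, exactly as in \eqref{eq:PeNCUpp5}. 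For the high-range term the exponent becomes positive, once the vanishing quantity $\left(1-\tfrac{\gamma}{1-Q^{-1}}\right)^{\left\lceil\beta N\right\rceil}$ is absorbed into a small $\xi$ (the manoeuvre of \eqref{eq:SuffNC3}--\eqref{eq:SuffNC5}), precisely when $M/N$ obeys \eqref{eq:SuffcientNS ratio}; for the low-range term the exponent pits $\tfrac{M}{N}\log\tfrac{1}{1-\gamma}$ against $H_{2}\left(\beta\right)+\beta\log\left(Q-1\right)+o\left(1\right)$, and since the latter tends to $0$ as $\beta\to0$, one fixes $\beta$ small enough that the requirement this term places on $M/N$ is no stronger than \eqref{eq:SuffcientNS ratio}. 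Note that the sparsity-factor condition \eqref{eq:SufficientGamma} of the NC case degenerates here to the standing assumption $\gamma>0$, since $H\left(p_{\mathsf{u}}\right)=0$ without communication noise and $\varepsilon$ may be taken below $\log\tfrac{1}{1-\gamma}$. Collecting $P_{\mathcal{A}_1}\le2\varepsilon$, the two exponential bounds on $P_{\mathcal{A}_2}$, and the $2\varepsilon$ of \eqref{eq:PeNS7}, and then fixing the constants in the order $\varepsilon$, $\beta$, $\xi$, $N_{\delta}$, $M_{\varepsilon}$, yields $P_{\textrm{e}}<\delta$; letting $N,M\to\infty$ drives $\varepsilon$ and $\xi$ to $0$.

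The step I expect to be the main obstacle is the low-range term, i.e.\ the $d\le\left\lfloor\beta N\right\rfloor$ part of \eqref{eq:PA21}: there $f$ is controlled only by the constant $\left(1-\gamma\right)^{M}$, so its exponent is not automatically positive, and one must trade off the choice of $\beta$ (small, to shrink $H_{2}\left(\beta\right)+\beta\log\left(Q-1\right)$) against the loss it induces in the convergence speed of $\xi$ --- the same balancing already met around \eqref{eq:SuffNC8}. The remainder --- ordering the small constants, and checking that the several typicality probabilities simultaneously exceed $1-\varepsilon$ for $N>N_{\delta}$ and $M>M_{\varepsilon}$ --- is routine bookkeeping.
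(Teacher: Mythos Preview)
Your proposal is correct and follows the same route as the paper: the argument for Proposition~\ref{Prop:Sufficient Condition NS} is exactly the sequence \eqref{eq:PeNS7} $\to$ Lemma~4 for $P_{\mathcal{A}_1}$ under hypothesis \eqref{eq:overlapping} $\to$ the $\beta$-split \eqref{eq:PA21} for $P_{\mathcal{A}_2}$, which the paper also dispatches only with the remark that it is ``similar to \eqref{eq:PeNCUpp5}'' and then states \eqref{eq:SuffcientNS ratio}.

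One small correction to your description of the low-range term. Once $\mathbf{x}^{N}$ is relaxed from $\mathcal{A}_{\left[\mathsf{x}\mid\mathsf{\Theta}\right]\varepsilon}^{N}\left(\boldsymbol{\theta}^{N}\right)$ to all of $\mathbb{F}_Q^N$ in the first line of \eqref{eq:PA21}, the outer sum over $\boldsymbol{\theta}^{N}\in\mathcal{A}_{\left[\mathsf{\Theta}\right]\varepsilon}^{N}$ becomes free and contributes a factor $\left|\mathcal{A}_{\left[\mathsf{\Theta}\right]\varepsilon}^{N}\right|$; the low-range exponent therefore pits $\tfrac{M}{N}\log\tfrac{1}{1-\gamma}$ against $\mathcal{H}\left(\mathsf{\Theta}\right)+H_{2}\left(\beta\right)+\beta\log\left(Q-1\right)+o\left(1\right)$, not just $H_{2}\left(\beta\right)+\beta\log\left(Q-1\right)$, and so does \emph{not} tend to $0$ as $\beta\to0$. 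The paper is equally terse here. For $\gamma=1-Q^{-1}$ this is harmless, since then $\log\tfrac{1}{1-\gamma}=\log Q$ and $\mathcal{H}\left(\mathsf{\Theta}\right)\le\mathcal{H}\left(\mathsf{\Theta},\mathsf{x}\right)$ makes the low-range requirement weaker than \eqref{eq:SuffcientNS ratio} for small $\beta$; for sparser $\gamma$ one needs $\mathcal{H}\left(\mathsf{\Theta}\right)\big/\log\tfrac{1}{1-\gamma}\le\mathcal{H}\left(\mathsf{\Theta},\mathsf{x}\right)\big/\left(\log Q-\xi\right)$, an implicit constraint on $\gamma$ that neither you nor the paper states.
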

Finally, the NCS case, accounting for both communication and sensing
noise, has to be considered.
\begin{prop}[Sufficient condition, NCS case]
\label{Prop:Sufficient Condition NCS}Considering both communication
noise and sensing noise, for $N$ and $M$ sufficient large and positive
$\varepsilon$, $\xi$ arbitrary small, the reliable recovery can
be ensured under MAP decoding if \end{prop}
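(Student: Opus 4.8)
The plan is to treat the \emph{NCS} case as a superposition of the two mechanisms already analyzed in isolation: the marginalization over the sensing realization $\mathbf{x}^{N}$, handled exactly as in the \emph{NS} achievability study of Section~\ref{sec:SufficientConditionNS-1}, and the communication noise $\mathbf{u}^{M}$, handled exactly as in the proof of Proposition~\ref{Prop:UpperBoundNC}. Consequently I expect the sufficient condition to be the union of the \emph{NS} and \emph{NC} conditions, with $\mathcal{H}\left(\mathsf{\Theta}\right)$ replaced by the joint entropy rate $\mathcal{H}\left(\mathsf{\Theta},\mathsf{x}\right)$: the non-overlapping condition \eqref{eq:overlapping}, the non-degeneracy of the communication noise $H\left(p_{\mathsf{u}}\right)<\log Q-\xi$, the density requirement $\gamma>1-2^{-H\left(p_{\mathsf{u}}\right)-\varepsilon}$ on the sensing matrix, and the compression-ratio bound $M/N>\left(\mathcal{H}\left(\mathsf{\Theta},\mathsf{x}\right)+\varepsilon\right)/\left(\log Q-H\left(p_{\mathsf{u}}\right)-\xi\right)$, with $\varepsilon,\xi\to0$ as $N,M\to\infty$.

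First I would rewrite the \emph{a posteriori} pmf \eqref{eq:MAPEstimatePMF} in the \emph{NCS} case: since $\mathbf{u}^{M}=\mathbf{y}^{M}-\mathbf{A}\mathbf{x}^{N}$ is determined by $\mathbf{A},\mathbf{y}^{M},\mathbf{x}^{N}$, one gets $p\left(\boldsymbol{\theta}^{N}\mid\mathbf{y}^{M},\mathbf{A}\right)\propto\sum_{\mathbf{x}^{N}}p\left(\boldsymbol{\theta}^{N}\right)p\left(\mathbf{x}^{N}\mid\boldsymbol{\theta}^{N}\right)p_{\boldsymbol{\mathsf{u}}^{M}}\left(\mathbf{y}^{M}-\mathbf{A}\mathbf{x}^{N}\right)$, so a decoding error for the true triple $\left(\boldsymbol{\theta}^{N},\mathbf{x}^{N},\mathbf{u}^{M}\right)$, with $\mathbf{y}^{M}=\mathbf{A}\mathbf{x}^{N}+\mathbf{u}^{M}$, requires some $\mathbf{\boldsymbol{\varphi}}^{N}\neq\boldsymbol{\theta}^{N}$ with $\sum_{\mathbf{z}_{1}^{N}}p\left(\mathbf{\boldsymbol{\varphi}}^{N},\mathbf{z}_{1}^{N}\right)p_{\boldsymbol{\mathsf{u}}^{M}}\left(\mathbf{y}^{M}-\mathbf{A}\mathbf{z}_{1}^{N}\right)\geq\sum_{\mathbf{z}_{2}^{N}}p\left(\boldsymbol{\theta}^{N},\mathbf{z}_{2}^{N}\right)p_{\boldsymbol{\mathsf{u}}^{M}}\left(\mathbf{y}^{M}-\mathbf{A}\mathbf{z}_{2}^{N}\right)$. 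Then I would restrict the outer expectation to $\boldsymbol{\theta}^{N}\in\mathcal{A}_{\left[\mathsf{\Theta}\right]\varepsilon}^{N}$, $\mathbf{x}^{N}\in\mathcal{A}_{\left[\mathsf{x}\mid\mathsf{\Theta}\right]\varepsilon}^{N}\left(\boldsymbol{\theta}^{N}\right)$ and $\mathbf{u}^{M}\in\mathcal{A}_{\left[\mathsf{u}\right]\varepsilon}^{M}$ (the complement costing $O(\varepsilon)$ by the AEP for the stationary ergodic source and the independence of $\boldsymbol{\mathsf{u}}^{M}$), bound the error indicator by a union bound combined with Lemma~1 at $s=1$, as done for the \emph{NS} case in \eqref{eq:PeNsCond Upper}, lower bound the denominator by the $\mathbf{z}_{2}^{N}=\mathbf{x}^{N}$ term $p\left(\boldsymbol{\theta}^{N},\mathbf{x}^{N}\right)p\left(\mathbf{u}^{M}\right)$ so that this factor cancels the outer weight, and finally write $p_{\boldsymbol{\mathsf{u}}^{M}}\left(\mathbf{y}^{M}-\mathbf{A}\mathbf{z}_{1}^{N}\right)=\sum_{\mathbf{v}^{M}}p\left(\mathbf{v}^{M}\right)1_{\mathbf{A}\mathbf{z}_{1}^{N}+\mathbf{v}^{M}=\mathbf{A}\mathbf{x}^{N}+\mathbf{u}^{M}}$ and average over $\boldsymbol{\mathsf{A}}$, which produces the kernel $f\left(d_{1},d_{2};\gamma,Q,M\right)$ of \eqref{eq:fd1d2} with $d_{1}=\left\Vert\mathbf{z}_{1}^{N}-\mathbf{x}^{N}\right\Vert_{0}$ and $d_{2}=\left\Vert\mathbf{u}^{M}-\mathbf{v}^{M}\right\Vert_{0}$.

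The resulting bound would then be split along $d_{1}$, mirroring the split $P_{\textrm{e}}\leq P_{\mathcal{A}_{1}}+P_{\mathcal{A}_{2}}+2\varepsilon$ of \eqref{eq:PeNS7}. For $d_{1}=0$ one necessarily has $\mathbf{v}^{M}=\mathbf{u}^{M}$, and since $\sum_{\mathbf{u}^{M}\in\mathcal{A}_{\left[\mathsf{u}\right]\varepsilon}^{M}}p\left(\mathbf{u}^{M}\right)\leq1$ this term reduces to $P_{\mathcal{A}_{1}}$ of \eqref{eq:PA1}, hence is at most $2\varepsilon$ whenever \eqref{eq:overlapping} holds, by Lemma~4. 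For $d_{1}\geq1$ I would use $f\left(d_{1},d_{2};\gamma,Q,M\right)\leq f\left(d_{1},0;\gamma,Q,M\right)$ (Lemma~2) and $\sum_{\mathbf{v}^{M}}p\left(\mathbf{v}^{M}\right)=1$; the free summation over $\mathbf{u}^{M}\in\mathcal{A}_{\left[\mathsf{u}\right]\varepsilon}^{M}$ then contributes the factor $\left|\mathcal{A}_{\left[\mathsf{u}\right]\varepsilon}^{M}\right|\leq2^{M\left(H\left(p_{\mathsf{u}}\right)+\varepsilon\right)}$, so this part equals $P_{\mathcal{A}_{2}}$ of \eqref{eq:PA2} multiplied by $2^{M\left(H\left(p_{\mathsf{u}}\right)+\varepsilon\right)}$. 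Splitting $P_{\mathcal{A}_{2}}$ at a threshold $\beta N$ with $0<\beta<0.5$ as in \eqref{eq:PA21}, the $d_{1}\leq\beta N$ piece is governed by $f\left(1,0;\gamma,Q,M\right)=\left(1-\gamma\right)^{M}$ and vanishes only if $H\left(p_{\mathsf{u}}\right)+\log\left(1-\gamma\right)<0$, \emph{i.e.}, $\gamma>1-2^{-H\left(p_{\mathsf{u}}\right)-\varepsilon}$, whereas the $d_{1}>\beta N$ piece is governed by $f\left(\left\lceil\beta N\right\rceil,0;\gamma,Q,M\right)\to Q^{-M}$ together with $\left|\mathcal{A}_{\left[\mathsf{\Theta},\mathsf{x}\right]2\varepsilon}^{N}\right|\leq2^{N\left(\mathcal{H}\left(\mathsf{\Theta},\mathsf{x}\right)+2\varepsilon\right)}$ and $\left|\mathcal{A}_{\left[\mathsf{u}\right]\varepsilon}^{M}\right|$, and vanishes only if $H\left(p_{\mathsf{u}}\right)<\log Q-\xi$ and $M/N>\left(\mathcal{H}\left(\mathsf{\Theta},\mathsf{x}\right)+\varepsilon\right)/\left(\log Q-H\left(p_{\mathsf{u}}\right)-\xi\right)$; as in the proof of Proposition~\ref{Prop:SufficientCond NC}, $\beta$ is chosen small enough that the first requirement on $M/N$ is dominated by the second, with $\xi=\log\left(1+\sigma\right)+\varepsilon$ for a suitable decay parameter $\sigma$ of $\left(1-\gamma/\left(1-Q^{-1}\right)\right)^{\left\lceil\beta N\right\rceil}$.

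The main obstacle is not a new inequality but the joint tuning of the parameters: one must fix $\varepsilon$, $\xi$, $\beta$ and $\sigma$ so that the three error contributions ($d_{1}=0$; $1\leq d_{1}\leq\beta N$; $d_{1}>\beta N$) together with the typicality slack sum to less than $\delta$, and in particular so that the $d_{1}\leq\beta N$ requirement on $M/N$ is subsumed by the $d_{1}>\beta N$ one, which needs $\gamma$ dense enough. The extra exponential factor $2^{MH\left(p_{\mathsf{u}}\right)}$ produced by the communication-noise typical set, absent from the \emph{NS} analysis, is precisely what brings in the density bound \eqref{eq:SufficientGamma} and the constraint $H\left(p_{\mathsf{u}}\right)<\log Q$; it is the only genuinely new ingredient relative to the \emph{NS} case. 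Once the parameters are fixed, the bound $P_{\textrm{e}}<\delta$ for all $N>N_{\delta}$ and $M>M_{\varepsilon}$ follows just as in Propositions~\ref{Prop:SufficientCond NC} and~\ref{Prop:Sufficient Condition NS}, yielding the \emph{NCS} sufficient condition anticipated above.
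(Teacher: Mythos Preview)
Your proposal is correct and matches the paper's intent: the paper itself gives no detailed proof for the \emph{NCS} case, stating only that ``the derivations are similar to those of Proposition~\ref{Prop:SufficientCond NC} and Proposition~\ref{Prop:Sufficient Condition NS},'' and your argument is precisely the anticipated merger of those two proofs, yielding the four stated conditions \eqref{eq:overlapping}, \eqref{eq:SufficientNoise}, \eqref{eq:SufficientGamma}, and \eqref{eq:Sufficient_ratio_NCS}.

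One small methodological difference worth noting: in the \emph{NS} proof the paper eliminates the ratio in \eqref{eq:PeNsCond Upper} via Lemma~3 (regrouping $\mathbf{x}^{N}$ by its image $\mathbf{y}^{M}$) together with the bound \eqref{eq:PeNS4}, whereas you bypass Lemma~3 entirely by lower-bounding the denominator with the single term $\mathbf{z}_{2}^{N}=\mathbf{x}^{N}$, giving $p\left(\boldsymbol{\theta}^{N},\mathbf{x}^{N}\right)p\left(\mathbf{u}^{M}\right)$ and cancelling it against the outer weight. Both routes land on the same upper bound (an unweighted sum over the joint typical set times the $f$-kernel), but your shortcut is arguably cleaner for the \emph{NCS} extension since it avoids re-deriving a Lemma~3 analogue for the enlarged image set $\left\{\mathbf{A}\mathbf{x}^{N}+\mathbf{u}^{M}\right\}$. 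Your bookkeeping for the $d_{1}=0$ branch (where $\mathbf{v}^{M}=\mathbf{u}^{M}$ is forced, so the surviving $p\left(\mathbf{u}^{M}\right)$ sums to at most $1$ and the term collapses to $P_{\mathcal{A}_{1}}$) and for the $d_{1}\geq1$ branch (where the free sum over $\mathbf{u}^{M}\in\mathcal{A}_{\left[\mathsf{u}\right]\varepsilon}^{M}$ produces the extra $2^{M\left(H\left(p_{\mathsf{u}}\right)+\varepsilon\right)}$ factor responsible for \eqref{eq:SufficientGamma} and the $H\left(p_{\mathsf{u}}\right)$ shift in \eqref{eq:Sufficient_ratio_NCS}) is correct.
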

\begin{itemize}
				\item \textit{the communication noise is not uniformly distributed,
								(\ref{eq:SufficientNoise})}
\item \textit{there is no overlapping between any two different weakly conditional
typical sets, i.e., $\mathcal{A}_{\left[\mathsf{x}\mid\mathsf{\Theta}\right]\varepsilon}^{N}\left(\boldsymbol{\theta}^{N}\right)\cap\mathcal{A}_{\left[\mathsf{x}\mid\mathsf{\Theta}\right]\varepsilon}^{N}\left(\mathbf{\boldsymbol{\varphi}}^{N}\right)=\emptyset$
for any two typical but different $\boldsymbol{\theta}^{N}$ and $\mathbf{\boldsymbol{\varphi}}^{N}$,} 
\item \noindent \textit{the sparsity factor satisfies the constraint in
(\ref{eq:SufficientGamma}),}
\item \noindent \textit{the compression ratio $M/N$ is lower bounded by
\begin{equation}
\frac{M}{N}>\frac{\mathcal{H}\left(\mathsf{\Theta},\mathsf{x}\right)+\varepsilon}{\log Q-H\left(p_{\mathsf{u}}\right)-\xi},\label{eq:Sufficient_ratio_NCS}
\end{equation}
}
\end{itemize}
The derivations are similar to those of Proposition~\ref{Prop:SufficientCond NC}
and Proposition~\ref{Prop:Sufficient Condition NS}.

\subsection{Discussion and Numerical Results}

When comparing the necessary condition in Proposition~\ref{Prop:CNforNCSCase}
and the sufficient condition in Proposition~\ref{Prop:Sufficient Condition NCS},
an interesting fact is that $\mathcal{H}\left(\mathsf{\Theta}\mid\mathsf{x}\right)=0$
is a sufficient condition to have (\ref{eq:overlapping}). This implies
that the value of $\boldsymbol{\theta}^{N}$ should be fixed almost
surely, as long as $\mathbf{x}^{N}$ is known. So, (\ref{eq:overlapping})
is helpful to interpret (\ref{eq:necessary1}), justifying the need
for the conditional entropy $\mathcal{H}\left(\mathsf{\Theta}\mid\mathsf{x}\right)$
to tend to zero as $N$ increases. This condition may be satisfied
since $\left|\mathcal{A}_{\left[\mathsf{\Theta}\right]\varepsilon}^{N}\right|\ll\left|\mathbb{F}_{Q}^{N}\right|$
as long as $\mathcal{H}\left(\mathsf{\Theta}\right)<\log Q$. The
entropy rate $\mathcal{H}\left(\mathsf{\Theta}\right)$ can be very
small, Appendix~\ref{sec:PossibleSituation} presents a possible
situation where $\mathcal{H}\left(\mathsf{\Theta}\right)=0$. Another
implicit constraint resulting from {(\ref{eq:overlapping})}
is 
\begin{equation}
\sum_{\boldsymbol{\theta}^{N}\in\mathcal{A}_{\left[\mathsf{\Theta}\right]\varepsilon}^{N}}\mathcal{A}_{\left[\mathsf{x}\mid\mathsf{\Theta}\right]\varepsilon}^{N}\left(\boldsymbol{\theta}^{N}\right)\leq\left|\mathbb{F}_{Q}^{N}\right|
\end{equation}
which means that 
\begin{equation}
\mathcal{H}\left(\mathsf{\Theta},\mathsf{x}\right)\leq\log Q.\label{eq:rate0-1}
\end{equation}
Consider a communication noise with $\Pr\left(\mathsf{u}\neq0\right)=0.1$
and the transition pmf
\begin{equation}
p(x_{n}\mid\theta_{n})=\begin{cases}
1-\Pr\left(\mathsf{x}\neq\mathsf{\Theta}\right) & \textrm{if }x_{n}=\theta_{n}\\
\frac{\Pr\left(\mathsf{x}\neq\mathsf{\Theta}\right)}{Q-1} & \textrm{if }x_{n}\in\mathbb{F}_{Q}^{N}\setminus\left\{ \theta_{n}\right\} 
\end{cases},
\end{equation}
where $\Pr\left(\mathsf{x}\neq\mathsf{\Theta}\right)$ denotes the
probability of the sensing error. In Figure~\ref{fig:Optimum-compression-ratio-NCS},
the lower bound of $M/N$ is represented as a function of $\mathcal{H}\left(\mathsf{\mathsf{\Theta}}\right)/\log Q$,
for different values of $Q$ and for different values of $\Pr\left(\mathsf{x}\neq\mathsf{\Theta}\right)$. 

\begin{figure}[th]
\centering
\includegraphics[width=1\columnwidth]{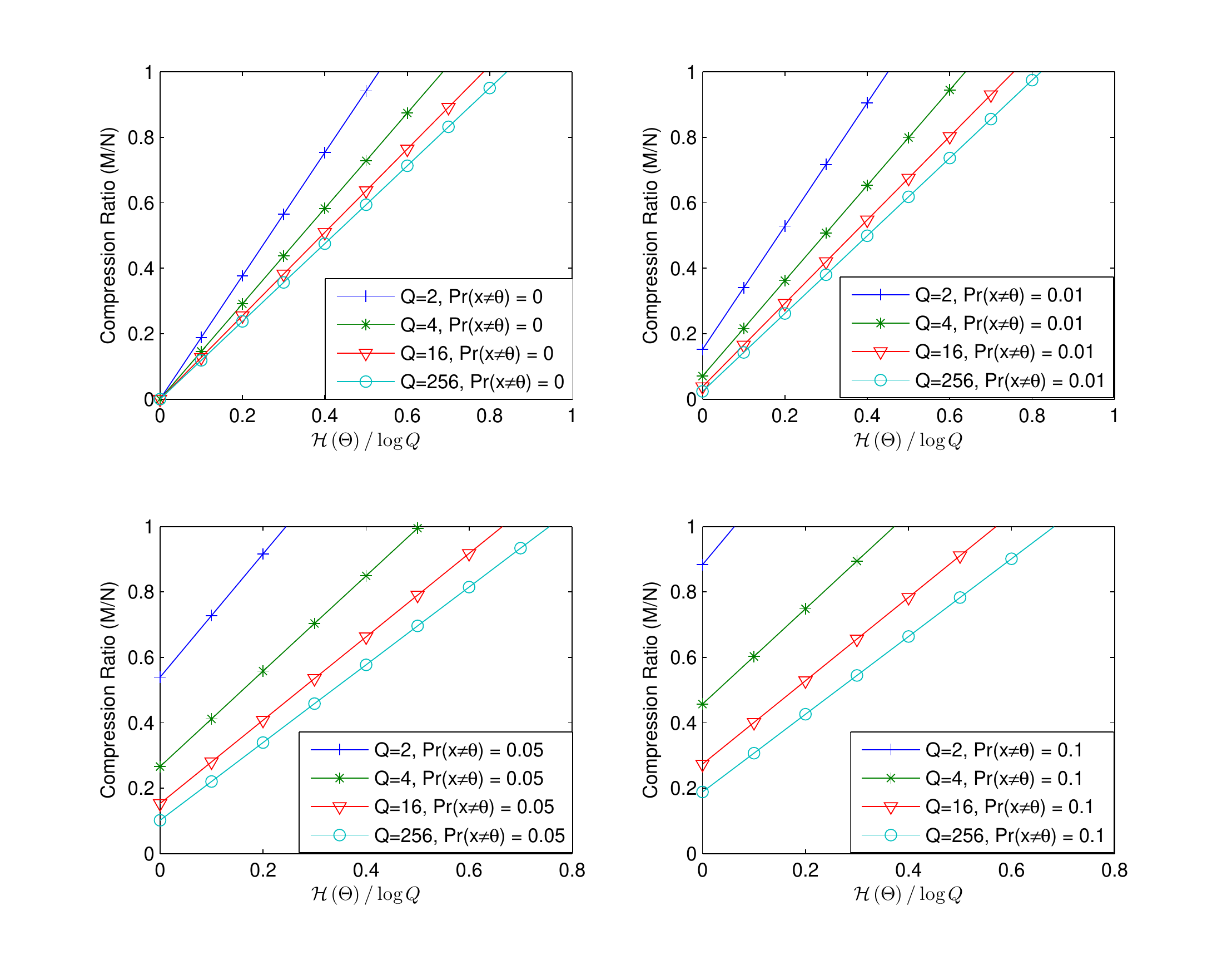}

\caption{Optimum achievable compression ratio in function of $\mathcal{H}\left(\mathsf{\mathsf{\Theta}}\right)/\log Q$,
according to (\ref{eq:Sufficient_ratio_NCS}), for the cases that
$\Pr\left(\mathsf{x}\neq\mathsf{\Theta}\right)$ being $0$ (NC case),
$0.01$, $0.05$, and $0.1$, respectively, when $\Pr\left(\mathsf{u}\neq0\right)=0.1$
\label{fig:Optimum-compression-ratio-NCS}}
\end{figure}

\section{Conclusions and future work}

\label{sec:Conclusions}

%\textcolor{red}
{In this paper we have considered robust Bayesian compressed
sensing over finite fields under MAP decoding. Both asymptotically
necessary and sufficient conditions of the compression ratio for reliable
recovery are obtained and their convergence is also shown, even in
the case of sparse sensing matrices. Several previous results have
been generalized by considering a stationary and ergodic source model.
Both communication noise and sensing noise have been taken into account.
We have shown that the choice of the sparsity factor of the sensing
matrix only depends on the communication noise. Since necessary and
sufficient conditions asymptotically converge, the MAP decoder achieves
the optimum lower bound of the compression ratio, which can be expressed
as a function of $\mathcal{H}\left(\mathsf{\Theta},\mathsf{x}\right)$,
$H\left(p_{\mathsf{u}}\right)$, and the alphabet size.}

%\textcolor{red}
{In this paper, the sensing matrix was assumed to be
perfectly known, without specific structure. In sensor network compressive
sensing applications, the structure of the sensing matrix usually
depends on the structure of the network. Evaluating the impact of
these constraints on the compression efficiency will be the subject
of future research. A first step in this direction was done in \cite{giannakis},
which considered clustered sensors.} 

\appendices

\section{Proof of Lemma 1}

\label{sec:Proof_lemma1}
\begin{proof}
Let $\boldsymbol{\mathsf{A}}_{i}$ be the $i$-th row of $\boldsymbol{\mathsf{A}}$.
As all entries in $\boldsymbol{\mathsf{A}}$ are independent 
\begin{equation}
\Pr\left\{ \boldsymbol{\mathsf{A}}\mathbf{\boldsymbol{\mu}}^{N}=\mathbf{s}^{M}\mid\mathbf{\boldsymbol{\mu}}^{N}\neq\mathbf{0},\mathbf{s}^{M}\right\} =\prod_{i=1}^{M}\Pr\left\{ \boldsymbol{\mathsf{A}}_{i}\mathbf{\boldsymbol{\mu}}^{N}=s_{i}\mid\mathbf{\boldsymbol{\mu}}^{N}\neq\mathbf{0},s_{i}\right\} .\label{eq:ind}
\end{equation}
According to \cite[Lemma 21]{rankmin}, we have 
\begin{equation}
\Pr\left\{ \boldsymbol{\mathsf{A}}_{i}\mathbf{\boldsymbol{\mu}}^{N}=0\mid\left\Vert \mathbf{\boldsymbol{\mu}}^{N}\right\Vert _{0}=d_{1}\right\} =Q^{-1}+\left(1-\frac{\gamma}{1-Q^{-1}}\right)^{d_{1}}\left(1-Q^{-1}\right),\label{eq:p30}
\end{equation}
and 
\begin{equation}
\Pr\left\{ \boldsymbol{\mathsf{A}}_{i}\mathbf{\boldsymbol{\mu}}^{N}=q\mid\left\Vert \mathbf{\boldsymbol{\mu}}^{N}\right\Vert _{0}=d_{1},q\in\mathbb{F}_{Q}\setminus\left\{ 0\right\} \right\} =Q^{-1}-\left(1-\frac{\gamma}{1-Q^{-1}}\right)^{d_{1}}Q^{-1}.\label{eq:p30-1}
\end{equation}
Since $d_{2}$ is the number of non-zero entries of $\mathbf{s}^{M}$,
combining (\ref{eq:ind}), (\ref{eq:p30}), and (\ref{eq:p30-1}),
one gets\textit{
\begin{eqnarray}
f\left(d_{1},d_{2};\gamma,Q,M\right) &=& \nonumber\\
&&\hspace{-4cm}\left(Q^{-1}+\left(1-\frac{\gamma}{1-Q^{-1}}\right)^{d_{1}}\left(1-Q^{-1}\right)\right)^{M-d_{2}}\left(Q^{-1}-\left(1-\frac{\gamma}{1-Q^{-1}}\right)^{d_{1}}Q^{-1}\right)^{d_{2}}.\label{eq:f122}
\end{eqnarray}
}The monotonicity of this function is not hard to obtain with its
expression and the condition (\ref{eq:gama}).
\end{proof}

\section{A Possible Situation for $\mathcal{H}\left(\mathsf{\Theta}\right)=0$}

\label{sec:PossibleSituation}

Consider $N$ sensors uniformly deployed over a unit-radius disk.
The physical quantities (in $\mathbb{R}$), which are collected by
the sensors, are denoted by $\boldsymbol{\mathsf{\Omega}}^{N}\in\mathbb{R}^{N}$.
We assume that $\boldsymbol{\mathsf{\Omega}}^{N}\sim\mathcal{N}\left(0,\boldsymbol{\Sigma}\right)$
with 
\begin{equation}
\boldsymbol{\Sigma}=\left[\begin{array}{cccc}
1 & \textrm{e}^{-\lambda d_{1,2}^{2}} & \cdots & \textrm{e}^{-\lambda d_{1,N}^{2}}\\
\textrm{e}^{-\lambda d_{2,1}^{2}} & 1 &  & \textrm{e}^{-\lambda d_{2,N}^{2}}\\
\vdots &  & \ddots & \vdots\\
\textrm{e}^{-\lambda d_{N,1}^{2}} & \cdots & \cdots & 1
\end{array}\right],
\end{equation}
where $\lambda$ is some constant, $d_{i,j}$ is the distance between
sensors $i$ and $j$. The distance between two sensors is random
since the location of each sensor is random. The real-valued entries
of $\boldsymbol{\mathsf{\Omega}}^{N}$ are quantized with a $Q-$level
scalar quantizer. We assume that $Q=2$, corresponding to the rule
\begin{equation}
\mathsf{\Theta}_{i}=\begin{cases}
0 & \textrm{if }\mathsf{\Omega}_{i}<0,\\
1 & \textrm{if }\mathsf{\Omega}_{i}\geq0.
\end{cases}\label{eq:rule}
\end{equation}
With the above assumptions, we can prove the following lemma.

\bigskip{}

\noindent \textbf{Lemma 5}. \textit{The conditional entropy $H\left(\mathsf{\Theta}_{n}\mid\boldsymbol{\mathsf{\Theta}}_{1}^{n-1}\right)$
converges to zero for $n\rightarrow\infty$ .}
\begin{proof}
Suppose that $j$ is the index of the sensor which has the minimum
distance to sensor $n$, among the $n-1$ neighbor sensors, $i.e.$,
\begin{equation}
j=\textrm{arg}\min_{1\leq i\leq n-1}d_{n,i}.
\end{equation}
We have 
\begin{equation}
H\left(\mathsf{\Theta}_{n}\mid\boldsymbol{\mathsf{\Theta}}_{1}^{n-1}\right)\leq H\left(\mathsf{\Theta}_{n}\mid\mathsf{\Theta}_{j}\right)\label{eq:thetaij}
\end{equation}
Denote the minimum distance as $\underline{d}\left(n\right)=d_{n,j}$,
the covariance matrix of $\mathsf{\Omega}_{n}$ and $\mathsf{\Omega}_{j}$
is 
\begin{equation}
\boldsymbol{\Sigma}_{n}=\left[\begin{array}{cc}
1 & \rho\\
\rho & 1
\end{array}\right],
\end{equation}
where $\rho=\textrm{e}^{-\lambda\underline{d}\left(n\right)^{2}}$.
For a pair of realizations $\omega_{n}$ and $\omega_{j}$, the joint
probability density function writes
\begin{equation}
g\left(\omega_{n},\omega_{j}\right)=\frac{1}{2\pi\sqrt{1-\rho^{2}}}\exp\left(-\frac{\omega_{n}^{2}+\omega_{j}^{2}-2\rho\omega_{n}\omega_{j}}{2\left(1-\rho^{2}\right)}\right).
\end{equation}
We easily obtain the probability of both $\mathsf{\Omega}_{n}$ and
$\mathsf{\Omega}_{j}$ being negative,
\begin{eqnarray}
\Pr\left\{ \mathsf{\Omega}_{n}<0\textrm{ and }\mathsf{\Omega}_{j}<0\right\}  & = & \int_{-\infty}^{0}\int_{-\infty}^{0}g\left(\omega_{n},\omega_{j}\right)\textrm{d}\omega_{n}\textrm{d}\omega_{j}\nonumber \\
 & = & \frac{1}{4}+\frac{1}{2\pi}\arctan\left(\frac{\rho}{\sqrt{1-\rho^{2}}}\right)\label{eq:00}
\end{eqnarray}
Taking into account (\ref{eq:rule}), (\ref{eq:00}) is exactly the
probability of the pair $\left(\mathsf{\Theta}_{n},\mathsf{\Theta}_{j}\right)$
being $\left(0,0\right)$. Define 
\begin{equation}
\varepsilon\left(\rho\right):=\frac{1}{4}-\frac{1}{2\pi}\arctan\left(\frac{\rho}{\sqrt{1-\rho^{2}}}\right).
\end{equation}
After the similar derivations, one obtains 
\begin{equation}
\Pr\left(\mathsf{\Theta}_{n}=0,\mathsf{\Theta}_{j}=0\right)=\Pr\left(\mathsf{\Theta}_{n}=1,\mathsf{\Theta}_{j}=1\right)=\frac{1}{2}-\varepsilon\left(\rho\right)
\end{equation}
 and 
\begin{equation}
\Pr\left(\mathsf{\Theta}_{n}=0,\mathsf{\Theta}_{j}=1\right)=\Pr\left(\mathsf{\Theta}_{n}=1,\mathsf{\Theta}_{j}=0\right)=\varepsilon\left(\rho\right).
\end{equation}
Then the joint entropy is 
\begin{equation}
H\left(\mathsf{\Theta}_{n},\mathsf{\Theta}_{j}\right)=1+H_{2}\left(2\varepsilon\left(\rho\right)\right).
\end{equation}
Meanwhile $H\left(\mathsf{\Theta}_{j}\right)=1$, thanks to the 2-level
uniform quantizer. Obviously 
\begin{equation}
H\left(\mathsf{\Theta}_{n}\mid\mathsf{\Theta}_{j}\right)=H_{2}\left(2\varepsilon\left(\rho\right)\right)=H_{2}\left(2\varepsilon\left(\textrm{e}^{-\lambda\underline{d}^{2}}\right)\right).\label{eq:cond_ent}
\end{equation}
This conditional entropy is increasing in $\underline{d}$. When the
number of sensors increases, the disk will be denser, and the minimum
distance $\underline{d}$ goes smaller. Thus, $\underline{d}$ tends
to 0 as $n\rightarrow\infty$, which implies that $H\left(\mathsf{\Theta}_{n}\mid\mathsf{\Theta}_{j}\right)\rightarrow0$.
According to (\ref{eq:thetaij}), we conclude that $H\left(\mathsf{\Theta}_{n}\mid\boldsymbol{\mathsf{\Theta}}_{1}^{n-1}\right)$
also goes to zero as $n\rightarrow\infty$.
\end{proof}
Applying the chain rule, the entropy rate writes
\begin{equation}
\mathcal{H}\left(\mathsf{\Theta}\right)=\lim_{N\rightarrow\infty}\frac{H\left(\mathsf{\Theta}_{1}\right)+\sum_{n=2}^{N}H\left(\mathsf{\Theta}_{n}\mid\boldsymbol{\mathsf{\Theta}}_{1}^{n-1}\right)}{N}.
\end{equation}
By Cesaro mean \cite[Theorem 4.2.3]{EIT}, $\mathcal{H}\left(\mathsf{\Theta}\right)=0$
as $H\left(\mathsf{\Theta}_{n}\mid\boldsymbol{\mathsf{\Theta}}_{1}^{n-1}\right)\rightarrow0$. 

\bibliographystyle{IEEETran}
\bibliography{BiblioWenjie}

\end{document}